\newcommand{\field}[1]{\mathbb{#1}}
\newcommand{\R}{\field{R}}
\newcommand{\C}{\field{C}}
\newcommand{\DD}{\mathscr D}
\newcommand{\HH}{\mathscr H}
\newcommand{\FF}{\mathcal F}
\newcommand{\ID}{\mathds{1}} 
\newcommand{\Ima}{\operatorname{Im}}
\newcommand{\Rea}{\operatorname{Re}}
\newcommand{\limop}[2][\infty]{\lim\limits_{#2 \rightarrow #1}}
\newcommand{\liminfop}[2][\infty]{\liminf\limits_{#2 \rightarrow #1}}
\newcommand{\limsupop}[2][\infty]{\limsup\limits_{#2 \rightarrow #1}}
\newcommand{\f}{v}
\newcommand{\chibar}{\overline{\chi}}
\newcommand{\eps}{\varepsilon}
\newcommand{\abs}[1]{\mbox{$\left| #1 \right| $}}                 
\newcommand{\norm}[1]{\mbox{$\left\| #1 \right\|$}}           
\newcommand{\sprod}[2]{\mbox{$\left\langle #1,#2 \right\rangle$}}        
\newcommand{\fock}{\FF}
\newcommand{\hilbert}{\HH}
\newcommand{\uv}{\Lambda}
\newcommand{\ir}{K}
\newcommand{\irFix}{\sigma}
\newcommand{\bodisp}{\omega}
\newcommand{\assDisp}{(\bodisp)}
\newcommand{\formFac}{(\f 1)}
\newcommand{\regTwo}{(\f 3)}
\newcommand{\regThr}{(\f 2)}
\newtheorem{theorem}{Theorem}[section]
\newtheorem{lemma}[theorem]{Lemma}
\newtheorem{corollary}[theorem]{Corollary}
\theoremstyle{plain}
\title{On the domain of the Nelson Hamiltonian}
\author{M. Griesemer and A. W\"unsch\\  
\small Fachbereich Mathematik, Universit\"at Stuttgart, D-70569 Stuttgart, Germany}  
\date{}
\begin{document}
\maketitle

\begin{abstract}
The Nelson Hamiltonian is unitarily equivalent to a Hamiltonian defined through a closed, semibounded quadratic form, the unitary transformation being explicitly known and due to Gross. In this paper we study mapping properties of the Gross-transform in order to characterize regularity properties of vectors in the form domain of the Nelson Hamiltonian. Since the operator domain is a subset of the form domain, our results apply to vectors in the domain of the Hamiltonian was well. - This work is a continuation of our previous work on the Fr\"ohlich Hamiltonian. 
\end{abstract}



\section{Introduction}
\label{sec:nelsonIntroduction}

There is well known model, due to Nelson, describing a system of $N$ non-relativistic quantum particles (nucleons) interacting with a quantized field of scalar bosons (pions).  The Hamiltonian of this model is based on a formal expression, which, in the case $N=1$, is given by
\begin{equation}\label{formal-Nelson}  
     -\Delta + d\Gamma(\bodisp) +\int \f(k)\left(e^{ikx}a_k +  e^{-ikx}a^{*}_k\right)\, dk
\end{equation}
where $\Delta$ denotes the Laplacian on $L^2(\R^3)$, $d\Gamma(\bodisp) = \int \bodisp(k) a^{*}_k a_k\, dk$ measures the field energy, $\bodisp(k) = \sqrt{k^2+m^2}$ with $m\geq 0$ and $\f(k)=\bodisp(k)^{-1/2}$. The integral in \eqref{formal-Nelson} accounts for the particle field interaction. Due to the non-square integrable decay of the form-factor $\f$, the expression \eqref{formal-Nelson}, as it stands, does not define a densely defined self-adjoint operator and its quadratic form is unbounded from below. There is a well-known  procedure to cure these problems: upon introducing an ultraviolet cutoff $|k|\leq \uv$ in the interaction integral, the resulting Hamiltonian $H_\uv$ conjugated with a dressing transform $U_\uv$ reveals a divergent vacuum expectation energy $-E_\uv$. When this energy is subtracted, the regularised, dressed Hamiltonian, in the limit $\uv\to \infty$ defines a closed and semi-bounded  quadratic form and hence a self-adjoint Hamiltonian. By reversing the dressing, now using $U_{\infty}$, a self-adjoint Hamiltonian, the Nelson-Hamiltonian, is obtained \cite{Nelson1964}. By general arguments this Hamiltonian is the limit of $H_\uv+E_\uv$ as $\uv\to\infty$ in the norm-resolvent sense.  

The procedure described above does not provide an explicit expression for the Nelson Hamiltonian, let alone its domain. It does provide, however, the form domain $D(q)$, which is given by
\begin{equation}\label{form-domain}
     D(q) : = U_{\infty}^{*} \left(D(|p|)\cap D(d\Gamma(\bodisp)^{1/2})\right),
\end{equation}
where $|p|=\sqrt{-\Delta}$. In this paper we study the mapping properties of $U_{\infty}^{*}$. Our main results imply that
\begin{align}
\label{equ:introResultDim3Inclus}   D(q)&\subset\bigcap\limits_{0\leq s < 1} D(\abs{p}^s)\cap D(d\Gamma(\bodisp)^{1/2}), \\
\label{equ:introResultDim3Intersec}   &D(q)\cap D(\abs{p})=\{0\}.
\end{align}
Since $D(H)\subset D(q)$,  statements \eqref{equ:introResultDim3Inclus} and \eqref{equ:introResultDim3Intersec} hold equally for $D(H)$, and since $D(H_0^{1/2})\subset D(|p|)$ for $H_0:=-\Delta+d\Gamma(\omega)$, we conclude from \eqref{equ:introResultDim3Intersec} that 
\begin{align*}
   D(H)\cap D(H_0^{1/2})=\{0\}.
\end{align*}
This means in particular that $D(H)\cap D(H_0)=\{0\}$.

On a heuristic level the critical exponent $s=1$ of $|p|$ in \eqref{equ:introResultDim3Inclus} and \eqref{equ:introResultDim3Intersec}  can easily be understood on the basis of \eqref{form-domain}, where $U_{\infty}^{*} = \exp(a^{*}(B_x)-a(B_x))$ and 
\begin{equation*}
      B_x(k) = \frac{\f(k)}{\bodisp(k)+k^2} e^{ikx} \chi(|k|\geq \ir) \qquad (\ir>0).
\end{equation*}
For $\Psi\in  D(|p|)\cap D(d\Gamma(\bodisp)^{1/2})$ and $s\leq 1$ we expect that $U_{\infty}^{*} \Psi\in D(|p|^s)$ if and only if the norm of the operator $[|p|^s,B_x(k)]$ is square integrable as a function of $k$. This is true if and only if $|k|^sB_x(k)$ is square integrable, which, for the Nelson model in three dimensions, is satisfied if and only if $s<1$. The problem solved in this paper is to make this heuristic argument rigorous, the hard part being the proof of \eqref{equ:introResultDim3Intersec}.

In \cite{GriesemerWuensch} we had established results similar to \eqref{equ:introResultDim3Inclus} and \eqref{equ:introResultDim3Intersec} for the Fr\"ohlich Hamiltonian, which corresponds to \eqref{formal-Nelson} with $\bodisp(k)\equiv 1$ and $\f(k)=|k|^{-1}$. In this case  \eqref{formal-Nelson} defines a closed, semi-bounded quadratic form and hence the procedure of Nelson is not necessary for defining a Hamiltonian, but it can still be applied. It turns out that the dressed Hamiltonian is self-adjoint on $D(H_0)$ and the analysis of $ U_{\infty}^{*} D(H_0)$ is simplified by the fact that $\bodisp\equiv 1$. Similar remarks apply to the Nelson model in one and two space dimensions, see Section~\ref{sec:Nelson-low} and \cite{DissWuensch2017}. In the present paper we concentrate on the only open problem, which is the proof of \eqref{equ:introResultDim3Inclus} and \eqref{equ:introResultDim3Intersec} for a class of models, defined in terms of assumptions on $\bodisp$ and $\f$, that is taylor made for the Nelson  model in three dimensions with massive or massless bosons.

In Section~\ref{sec:nelsonAndGross} we describe the class of models to be considered in the ultraviolet regularized form and we introduce the corresponding class of dressing transforms. In Section~\ref{sec:construct} the construction of the Nelson Hamiltonian is given for the class of Section~\ref{sec:nelsonAndGross}, the abstract part of the argument being deferred to Appendix~\ref{app:AmariLikeTheorem}. Section~\ref{sec:domain}  is devoted to the mapping properties of $U_{\infty}^{*}$ and it contains our main results, Theorem~\ref{thm:domainSubset3D4D} and Theorem~\ref{thm:domainIntersec3D4D}. From these theorems we derive \eqref{equ:introResultDim3Inclus} and \eqref{equ:introResultDim3Intersec} for the Nelson model in Section~\ref{sec:Nelson-low}. Section~\ref{sec:Nelson-low} also describes the improved results that can be shown for the Nelson model in dimension $d\leq 2$. There are two appendices, besides Appendix~\ref{app:AmariLikeTheorem}, were tools for the proofs of  Section~\ref{sec:domain} are collected.


\section{Hamiltonian with cutoff and Gross transform}
\label{sec:nelsonAndGross}

In this section we fix our notations, we define the class of regularized Hamiltonians $H_\uv$, $\uv<\infty$, to be considered in this paper and we introduce our assumption on $\bodisp$ and $\f$.

Let $\hilbert:=L^2(\R^d,dx)\otimes\fock$ where $\fock$ denotes the symmetric Fock space over 
$L^2(\R^d,dk)$. Through the isomorphism defined by $\varphi\otimes\eta\mapsto\varphi(x)\eta$ we may 
identify $\hilbert$ with $L^2(\R^d,\fock)$, the space of square integrable functions $x\mapsto\Psi(x)\in \fock$ on $\R^d$. The Fourier transform of such a function will be denoted by $\Psi(p)$. As usual we use $\Psi^{(n)}$ to denote the $n$-boson component of the vector $\Psi$. 

With $\fock_{0}$ and $\hilbert_{0}$ we denote the subspaces 
\begin{align}
   \fock_{0}:=\bigcup\limits_{n\geq 0}\chi(N\leq n)\fock   \quad\text{and}\quad   \hilbert_{0}:=\bigcup\limits_{n\geq 0}\chi(N\leq n)\hilbert
\end{align}
of finite particle vectors in  $\fock$ and $\hilbert$, respectively. Here in all the following $N$ denotes the number operator, which is  defined by $(N\Psi)^{(n)} = n\Psi^{(n)}$.

The non-interacting system composed of particle and quantized field is described by the free Hamiltonian
\begin{align}
   H_0:=p^2\otimes\ID+\ID\otimes d\Gamma(\bodisp), 
\end{align}
on $\hilbert$, where  $p^2=-\Delta$ on $L^2(\R^d)$ and $d\Gamma(\bodisp)$ denotes the second quantization of the one-particle operator given by multiplication with the dispersion relation $\bodisp$. That is, $ (d\Gamma(\bodisp)\Psi)^{(0)}=0$ and for $n\geq 1$,
$$
    (d\Gamma(\bodisp)\Psi)^{(n)}(k_1,\ldots,k_n) = \sum_{i=1}^n\bodisp(k_i)\Psi^{(n)}(k_1,\ldots,k_n).
$$
On $\bodisp$ we will always assume that $\bodisp\in L_{loc}^{\infty}(\R^d)$ 
and that $\bodisp>0$ almost everywhere. Such $\bodisp$ will be called 
"admissible". Our main results, in addition, require that 
\begin{align*}
\begin{split}
   \assDisp \quad \bodisp(k)=\bodisp(-k)\qquad \text{and }\qquad \bodisp(k)=O(k^2)\ \ \text{as}\ \ \abs{k}\rightarrow\infty.
\end{split}
\end{align*} 
The main examples of dispersion relations satisfying $\assDisp$ are $\bodisp(k)\equiv 1$ and 
$\bodisp(k)=\sqrt{m^2+k^2}$, $m\geq 0$, which lead to the Fr{\"o}hlich Hamiltonian~\cite{GriesemerWuensch} 
and to the Nelson model~\cite{Nelson1964}, respectively. The Hamiltonian $H_0$ is positive and self-adjoint on 
$D(H_0)=D(p^2\otimes\ID)\cap D(\ID\otimes d\Gamma(\bodisp))$. Its form domain is 
given by the set $D(H_0^{1/2})=D(\abs{p}\otimes\ID)\cap D(\ID\otimes d\Gamma(\bodisp)^{1/2})$. 
The identity operator $\ID$ will be omitted from now on.

The interaction of the particle and the bosonic field is given in terms of annihilation and 
creation of bosons. The usual annihilation and creation operators in Fock space associated 
with some vector $f\in L^2(\R^d)$ will be denoted by $a(f)$ and $a^{*}(f)$. They are closed, 
adjoint to each other with $D(a(f)) = D(a^{*}(f)) \supset D(d\Gamma(\bodisp)^{1/2})$ if 
$f\bodisp^{-1/2}\in L^2(\R^d)$, Lemma~\ref{cor:SpecialEstAnnihilOp}, which means that $a(f)$ 
and $a^*(f)$ are well-defined on $D(H_0)$ and $D(H_0^{1/2})$ for any admissible 
$\bodisp$. These operators obey the canonical commutation relations 
$[a(f),a^{*}(g)] = \sprod{f}{g}$ (others vanish), which are operator equations on $D(d\Gamma(\bodisp))$ if $\bodisp^{1/2}f\in L^2(\R^d)$,
Lemma~\ref{lm:ladderMapsOnDomOfSquSecQuant}. The symmetric field operators
\begin{align*}
     \phi(f) := a(f) + a^{*}(f),\qquad \pi(f) := \phi(if)
\end{align*}
are essentially self-adjoint on $D(d\Gamma(\bodisp))$ by Nelson's commutator 
theorem and by Lemma~\ref{cor:SpecialEstAnnihilOp}, provided that $\bodisp^{1/2}f\in L^2(\R^d)$, 
and they satisfy the commutation relations 
\begin{align*}
     [\phi(f) ,\phi(g)] = 2i\Ima \sprod{f}{g},\qquad [\phi(f) ,\pi(g)] = 2i\Rea \sprod{f}{g},
\end{align*}
provided additionally that $f\bodisp^{-1/2}\in L^2(\R^d)$. The (self-adjoint) closures 
of the operators $\phi(f)$ and $\pi(f)$ will be denoted by the same symbols.

We will have occasion to work with generalized annihilation and creation operators $a(F)$ 
and $a^{*}(F)$ that are operators in $\hilbert$ rather than $\fock$. Here 
$F:L^2(\R^d,dx)\to L^2(\R^d,dx)\otimes L^2(\R^d,dk)$ is a linear operator. In the simplest 
case $F\varphi=\varphi\otimes f$ for some $f\in L^2(\R^d,dk)$ and then 
$a^{\#}(F)=\ID\otimes a^{\#}(f)$ is the usual annihilation or creation operator in $\fock$. 
Often, but not always, the operator $F$ will be defined in terms of some function 
$(x,k)\mapsto F_x(k)$, denoted by $F$ as well, through the equation 
$(F\varphi)(x,k)=\varphi(x)F_x(k)$. In this case $(a^{\#}(F)\Psi)(x)=a^{\#}(F_x)\Psi(x)$. 
Typically $F_x(k)=e^{-ikx}f(k)$ where $f \in L^2(\R^d)$ and then the operator norm of $F$ 
equals the norm of $f$ in $L^2(\R^d)$. See Appendix B of~\cite{GriesemerWuensch} for the 
definition of $a^{\#}(F)$ in the general case. For computations it is sometimes useful to 
expand $a(F)$ and $a^{*}(F)$ in terms of $a_k$ and $a_k^*$ by
\begin{align*}
   a(F)=\int \left(F(k)^{*}\otimes a_k\right)\, dk, 
   \quad a^*(F)=\int \left(F(k)\otimes a_k^{*}\right)\, dk,
\end{align*}
where $F(k)$ denotes a bounded operator on the particle space $L^2(\R^d,dx)$ and 
\begin{align*}
   (a_k\Psi)^{(n)}(x,k_1,...,k_n) &= (n+1)^{1/2}\ \Psi^{(n+1)}(x,k,k_1,...,k_n), \\
   (a_k^*\Psi)^{(n)}(x,k_1,...,k_n)&= n^{-1/2}\sum\limits_{j=1}^{n}\delta(k-k_j)\ \Psi^{(n-1)}(x,k_1,...,k_{j-1},k_{j+1},...,k_n).
\end{align*}
The canonical commutation relations then read $[a_k,a_{k'}^*]=\delta(k-k')$ and $[a_k,a_{k'}]=0$. In terms of $a_k$ and $a_k^{*}$ we have 
\begin{align*}
   d\Gamma(\bodisp) &= \int \bodisp(k)a_k^*a_k\, dk, \\
   \norm{d\Gamma(\bodisp)^{1/2}\Psi} &= \left(\int \bodisp(k)\norm{a_k\Psi}^2\, dk\right)^{1/2}.
\end{align*}

We are now ready to introduce the ultraviolet regularized Hamiltonian 
corresponding to \eqref{formal-Nelson}. For any $\uv<\infty$ we define 
$H_{\uv}:D(H_0)\subset\hilbert\rightarrow\hilbert$ by
\begin{align*}
      H_{\uv}:=H_0+\phi(G_{\uv}),
\end{align*}
where
\begin{align*}
   G_{\uv,x}(k):=\f(k)e^{-ikx}\chi_{\uv}(k).
\end{align*}
Here $\chi_{\uv}$ denotes the characteristic function of the set $\{k\in\R^d|\ \abs{k}\leq\uv\}$. 
On the form factor $\f:\R^d\mapsto\C$, we impose the conditions
\begin{align*}
   &\formFac \quad \f \in L^2_{loc}(\R^d) \quad \text{and} \quad  \f(k)=\f(-k), \\
   &\regThr \quad \int_{\R^d} \frac{\abs{\f(k)}^2}{(1+k^2)^{2}} dk< \infty, \\
   &\regTwo \quad \exists\ \irFix\geq 0 \ :\ \int_{|k|\leq\irFix}\frac{\abs{v(k)}^2}{\bodisp(k)}dk<\infty 
            \ \text{ and }\ m_{\irFix}:=\inf\limits_{|k|\geq\irFix}\bodisp(k)>0,
\end{align*}
which are satisfied for example by the Nelson model, with massless or massive bosons, in three space 
dimensions. In one and two space dimensions, the massive Nelson model satisfies 
the condition $\int \abs{\f(k)}^2 (1+k^2)^{-1}dk < \infty$, which is stronger then $\regThr$ and whose consequences are discussed in Section~\ref{sec:Nelson-low}. These models are therefore covered by the methods and results in  \cite{GriesemerWuensch}, see also \cite{DissWuensch2017}.
Since, by $\formFac$, $\regTwo$, and Lemma~\ref{cor:SpecialEstAnnihilOp}, the 
operator $\phi(G_{\uv})(d\Gamma(\bodisp)+1)^{-1/2}$ is bounded, it follows 
that $\phi(G_{\uv})$ is infinitesimally $H_0$-bounded and hence, by Kato-Rellich, 
$H_{\uv}$ is self-adjoint on $D(H_0)$ for all $\uv<\infty$. The symmetry condition in $\formFac$ simplifies some computations, but for the main results it is inessential.

Following Nelson we now introduce a two-parameter family of unitary 
transformations, called Gross-transformations, by
\begin{align*}
   U_{\ir,\uv}=e^{i\pi(B_{\ir,\uv})},\qquad \irFix\leq\ir<\uv\leq\infty,
\end{align*}
where
\begin{align*}
   B_{\ir,\uv,x}(k):=-\frac{\f(k)}{\bodisp(k)+k^2}e^{-ikx}\chi_{\uv}(k)\chibar_{\ir}(k)=-\frac{1}{\bodisp(k)+k^2}G_{\uv,x}(k)\chibar_{\ir}(k)
\end{align*}
and $\chibar_{\ir}:=1-\chi_{\ir}$.  We will use $kB_{\ir,\uv}$ and $\abs{k}^sB_{\ir,\uv}$ to 
denote the functions $kB_{\ir,\uv,x}(k)$ and $\abs{k}^sB_{\ir,\uv,x}(k)$, respectively. 
The lower cutoff $\ir$ will be chosen sufficiently large in Lemma~\ref{lm:KLMNCondition1} 
and Theorem~\ref{thm:ExistenceOfNRLimit}, below.
In all other results the size of $\ir$ is inessential and the only conditions 
$\irFix\leq\ir<\uv$ will not be repeated.
Note that the condition $B_{\ir,\infty} \in L^2(\R^d,dk)$ is equivalent to 
$\regThr$, which means that $\regThr$ cannot be weakened.
Note moreover that  
\begin{align*}
   U_{\ir,\uv}\rightarrow U_{\ir,\infty}\quad (\uv\to\infty)
\end{align*}
strongly in $\hilbert$, which follows from Lemma \ref{lm:strongWeyl}.

Assuming~$\assDisp$, $\formFac$, $\regThr$, and~$\regTwo$, we can use 
Lemma~\ref{lm:transformPhi}, Lemma~\ref{lm:transformDGamma}, 
Lemma~\ref{lm:transformPSquGen}, and the identity 
\begin{align*}
   p\cdot a^*(kB_{\ir,\uv}) + a(kB_{\ir,\uv})\cdot p  = a^*(kB_{\ir,\uv})\cdot p+ p\cdot a(kB_{\ir,\uv}) -\phi(k^2B_{\ir,\uv}),
\end{align*}
to find that 
\begin{align*}
   U_{\ir,\uv}p^2U_{\ir,\uv}^* &= p^2-2a^*(kB_{\ir,\uv})\cdot p-2p\cdot a(kB_{\ir,\uv})+2a^*(kB_{\ir,\uv})a(kB_{\ir,\uv}) \\
               &\phantom{=}+a(kB_{\ir,\uv})^2+a^*(kB_{\ir,\uv})^2+\phi(k^2B_{\ir,\uv})+\norm{kB_{\ir,\uv}}^2, \\
   U_{\ir,\uv}d\Gamma(\bodisp)U_{\ir,\uv}^* &= d\Gamma(\bodisp)+\phi(\bodisp B_{\ir,\uv})+\norm{\bodisp^{1/2}B_{\ir,\uv}}^2, \\
   U_{\ir,\uv}\phi(G_{\uv})U_{\ir,\uv}^* &= \phi(G_{\uv})+2\sprod{B_{\ir,\uv}}{G_{\uv}} 
\end{align*}
on $D(H_0)$ as long as $\uv<\infty$. In view of $(\bodisp(k)+k^2)B_{\ir,\uv}+G_{\uv}=G_{\ir}$ the field 
operators $\phi(\cdot)$ in the above equations add up to $\phi(G_{\ir})$. 
For the same reason the scalar terms add up to $E_{\ir}-E_{\uv}$ where
\begin{align*}
   E_{\uv}:=\int_{\abs{k}\leq\uv}\frac{\abs{\f(k)}^2}{\bodisp(k)+k^2}\, dk.
\end{align*}
Note that $E_{\uv}$ may diverge as $\uv\to\infty$, and it does diverge for 
the Nelson model. Note also that there is no divergence for $\abs{k}\rightarrow 0$ 
because of Assumption~$\regTwo$. We therefore define
\begin{align*}
   H_{\ir,\uv}':=\ &U_{\ir,\uv}H_{\uv}U_{\ir,\uv}^*+E_{\uv} \notag \\
             =\ &p^2+d\Gamma(\bodisp)+\phi(G_{\ir})-2a^*(kB_{\ir,\uv})\cdot p-2p\cdot a(kB_{\ir,\uv})\notag \\
               &+a(kB_{\ir,\uv})^2+a^*(kB_{\ir,\uv})^2+2a^*(kB_{\ir,\uv})a(kB_{\ir,\uv})+E_{\ir} 
\end{align*}
which is self-adjoint on $D(H_0)$ for $\uv<\infty$ under the assumptions made in 
this section. Let $V_{\ir,\uv}$ denote the interaction part of $H_{\ir,\uv}'$, 
so that $H_{\ir,\uv}'=H_0+V_{\ir,\uv}$.

%
%


\section{Construction of the norm-resolvent limit}
\label{sec:construct}

In this section we describe the construction of the Nelson Hamiltonian associated with~(1) 
in the generalized setup given by our Assumptions~$\assDisp$, $\formFac$, $\regThr$,  
and~$\regTwo$. Similar constructions given in the literature concern the special case of the Nelson Hamiltonian only \cite{Nelson1964,AmmariZied2000,HHS2005,GNV2006}. 
There is an abstract part in our argument that is summarized by Theorem~\ref{thm:AmariLikeTheorem} in Appendix~\ref{app:AmariLikeTheorem}. We apply this theorem to the quadratic form 
\begin{multline}
   \lefteqn{W_{\ir,\uv}(\Psi):=\sprod{\Psi}{V_{\ir,\uv}\Psi} =2\Rea\big\{\sprod{\Psi}{a(G_{\ir})\Psi}-2\sprod{a(kB_{\ir,\uv})\Psi}{p\Psi} }\\ 
 +\norm{a(kB_{\ir,\uv})\Psi}^2 + \sprod{a^{*}(kB_{\ir,\uv})\Psi}{a(kB_{\ir,\uv})\Psi}\big\} + E_{\ir}\norm{\Psi}^2, \label{equ:interactForm}
\end{multline}
defined on $D(H_0^{1/2})=D(\abs{p})\cap D(d\Gamma(\bodisp)^{1/2})$. The subsequent Lemmas~\ref{lm:KLMNCondition1} 
and~\ref{lm:KLMNCondition2} verify the hypotheses of Theorem~\ref{thm:AmariLikeTheorem}, and Theorem~\ref{thm:ExistenceOfNRLimit}, below, summarizes the result. Notice that much simpler and more direct characterizations of the Hamiltonian are possible under more restrictive assumptions on the decay of $\f$ \cite{GriesemerWuensch}. 

In the special case of the massive three-dimensional Nelson model the following two lemmas agree with Lemma 5 from~\cite{Nelson1964}.
  
\begin{lemma}
\label{lm:KLMNCondition1}
Assume $\assDisp$, $\formFac$, $\regThr$, $\regTwo$, and let 
$\norm{\f\chibar_{\irFix}(1+k^2)^{-1/2}\bodisp^{-1/4}}<\infty$. Then, for 
all $\eps>0$ there is a $\ir<\infty$ and a $b<\infty$, such that
\begin{align*}
   \abs{W_{\ir,\uv}(\Psi)}\leq \eps \norm{H_0^{1/2}\Psi}^2+b\norm{\Psi}^2
\end{align*}
for all $\Psi\in D(H_0^{1/2})$ and all $\uv<\infty$.
\end{lemma}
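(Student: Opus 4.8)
The plan is to bound the quadratic form $W_{\ir,\uv}(\Psi)$ term by term, exploiting the fact that all the functions appearing in $B_{\ir,\uv}$ carry the extra decay factor $(\bodisp(k)+k^2)^{-1}$ and a low-frequency cutoff $\chibar_\ir$, so that, by choosing $\ir$ large, one can make the relevant $L^2$-type norms of $B_{\ir,\uv}$, $\bodisp^{1/2}B_{\ir,\uv}$, and $kB_{\ir,\uv}$ as small as we like. The key inputs are the standard relative bounds for annihilation and creation operators: by Lemma~\ref{cor:SpecialEstAnnihilOp}, $\norm{a(f)\Psi}\le\norm{f\bodisp^{-1/2}}\,\norm{d\Gamma(\bodisp)^{1/2}\Psi}$ and $\norm{a^*(f)\Psi}\le\norm{f\bodisp^{-1/2}}\,\norm{d\Gamma(\bodisp)^{1/2}\Psi}+\norm{f}\norm{\Psi}$, and the operator-valued analogues for $a^\#(kB_{\ir,\uv})$. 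Thus I would first record the four estimates
\begin{align*}
   \norm{a(kB_{\ir,\uv})\Psi}&\le c_\ir\,\norm{d\Gamma(\bodisp)^{1/2}\Psi}, \\
   \norm{a^*(kB_{\ir,\uv})\Psi}&\le c_\ir\,\norm{d\Gamma(\bodisp)^{1/2}\Psi}+\norm{kB_{\ir,\uv}}\norm{\Psi}, \\
   \norm{a(G_\ir)\Psi}&\le \norm{G_\ir\bodisp^{-1/2}}\,\norm{d\Gamma(\bodisp)^{1/2}\Psi},
\end{align*}
where $c_\ir:=\norm{kB_{\ir,\uv}\bodisp^{-1/2}}$, and I would observe that $c_\ir\to0$ and $\norm{G_\ir\bodisp^{-1/2}}\to 0$ as $\ir\to\infty$: the first because $|k|\,|\f(k)|/((\bodisp(k)+k^2)\bodisp(k)^{1/2})$ is square integrable near infinity (using $\assDisp$ and $\regThr$ together with $\bodisp=O(k^2)$) and over the whole space by the extra hypothesis $\norm{\f\chibar_{\irFix}(1+k^2)^{-1/2}\bodisp^{-1/4}}<\infty$ rearranged appropriately, and the second directly from $\regTwo$ plus $\regThr$ after the low-frequency part is cut off.

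Next I would insert these bounds into \eqref{equ:interactForm}. The term $2\Rea\sprod{\Psi}{a(G_\ir)\Psi}$ is estimated by $2\norm{G_\ir\bodisp^{-1/2}}\norm{d\Gamma(\bodisp)^{1/2}\Psi}\norm{\Psi}$, which by Young's inequality is $\le \eps\norm{d\Gamma(\bodisp)^{1/2}\Psi}^2 + \eps^{-1}\norm{G_\ir\bodisp^{-1/2}}^2\norm{\Psi}^2$. The cross term $-4\Rea\sprod{a(kB_{\ir,\uv})\Psi}{p\Psi}$ is bounded by $4c_\ir\norm{d\Gamma(\bodisp)^{1/2}\Psi}\norm{p\Psi}\le 2c_\ir(\norm{d\Gamma(\bodisp)^{1/2}\Psi}^2+\norm{\abs{p}\Psi}^2)\le 2c_\ir\norm{H_0^{1/2}\Psi}^2$. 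The quadratic term $\norm{a(kB_{\ir,\uv})\Psi}^2\le c_\ir^2\norm{d\Gamma(\bodisp)^{1/2}\Psi}^2$. The term $\sprod{a^*(kB_{\ir,\uv})\Psi}{a(kB_{\ir,\uv})\Psi}$ is handled by Cauchy--Schwarz, $\le \norm{a^*(kB_{\ir,\uv})\Psi}\norm{a(kB_{\ir,\uv})\Psi}\le (c_\ir\norm{d\Gamma(\bodisp)^{1/2}\Psi}+\norm{kB_{\ir,\uv}}\norm{\Psi})c_\ir\norm{d\Gamma(\bodisp)^{1/2}\Psi}$, which again splits via Young into an $\eps$-small multiple of $\norm{H_0^{1/2}\Psi}^2$ plus a constant times $\norm{\Psi}^2$, provided $\ir$ is large. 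Collecting everything, one gets $\abs{W_{\ir,\uv}(\Psi)}\le (C c_\ir + 2\eps)\norm{H_0^{1/2}\Psi}^2 + b\norm{\Psi}^2$ with $b$ depending on $\ir,\eps$ but not on $\uv$; finally choose $\ir$ large enough that $Cc_\ir\le\eps/2$ and rename $\eps$.

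The main obstacle is purely bookkeeping rather than conceptual: one must verify carefully that the constant $b$ can be chosen \emph{uniformly in $\uv$}, which is what makes the subsequent norm-resolvent limit argument go through. This is automatic here because every norm of $B_{\ir,\uv}$ or $G_\ir$ that enters is dominated by the corresponding norm with $\uv=\infty$ (the cutoff $\chi_\uv$ only decreases the functions pointwise), and the operator bounds from Lemma~\ref{cor:SpecialEstAnnihilOp} are stated for general $f$. The one point requiring a little care is the operator-norm versus $L^2$-norm distinction for $a^\#(kB_{\ir,\uv})$ — since $kB_{\ir,\uv}$ is operator-valued in the $x$-variable — but because $kB_{\ir,\uv,x}(k)$ is simply multiplication by $k\,\f(k)(\bodisp(k)+k^2)^{-1}e^{-ikx}\chibar_\ir(k)\chi_\uv(k)$, its operator norm at fixed $k$ equals $|k||\f(k)|(\bodisp(k)+k^2)^{-1}\chibar_\ir(k)\chi_\uv(k)$, so the operator-valued estimates reduce to the scalar integrals already discussed, and one invokes the $\hilbert$-versions of the annihilation/creation bounds referenced in the text. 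The extra hypothesis $\norm{\f\chibar_{\irFix}(1+k^2)^{-1/2}\bodisp^{-1/4}}<\infty$ is exactly what is needed to control the term $\sprod{a^*(kB)\Psi}{a(kB)\Psi}$, where one factor produces $\norm{kB_{\ir,\uv}\bodisp^{-1/2}}$ and must be paired against a bound that, after interpolation, calls for the $\bodisp^{-1/4}$ weight.
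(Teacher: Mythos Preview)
Your overall strategy and the treatment of the first three terms of $W_{\ir,\uv}$ are correct and essentially identical to the paper's. The genuine gap is in your bound for the fourth term, $\sprod{a^{*}(kB_{\ir,\uv})\Psi}{a(kB_{\ir,\uv})\Psi}$. Your Cauchy--Schwarz estimate produces a factor $\norm{kB_{\ir,\uv}}$ coming from $\norm{a^{*}(kB_{\ir,\uv})\Psi}$, and this norm is \emph{not} bounded uniformly in $\uv$ under the stated hypotheses. In the benchmark case of the three-dimensional Nelson model one has $|kB_{\ir,\uv}(k)|^2\sim |k|^{-3}$ for large $|k|$, so $\norm{kB_{\ir,\uv}}^2\sim\log\uv\to\infty$. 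Consequently the constant $b$ you obtain depends on $\uv$, contrary to what the lemma asserts and to what you explicitly claim (``$b$ depending on $\ir,\eps$ but not on $\uv$'').

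The paper circumvents this by not splitting $a^{*}$ and $a$ via Cauchy--Schwarz. Instead it writes
\[
\sprod{a^{*}(kB_{\ir,\uv})\Psi}{a(kB_{\ir,\uv})\Psi}=\sprod{(1+N_\ir)^{1/2}\Psi}{(1+N_\ir)^{-1/2}a(kB_{\ir,\uv})^2\Psi},
\]
with $N_\ir=d\Gamma(\chibar_\ir)\le m_{\irFix}^{-1}d\Gamma(\bodisp)$, and invokes Lemma~\ref{lm:SpecialEstSquaredAnnihilOp}, a Nelson-type estimate bounding $(1+N_\ir)^{-1/2}a(f)a(g)(1+d\Gamma(\bodisp))^{-1/2}$ by $\norm{f\bodisp^{-1/4}}\norm{g\bodisp^{-1/4}}$. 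This is precisely where the extra hypothesis $\norm{\f\chibar_{\irFix}(1+k^2)^{-1/2}\bodisp^{-1/4}}<\infty$ enters, yielding the uniform-in-$\uv$ constant $\norm{kB_{\ir,\infty}\bodisp^{-1/4}}^2$, which moreover vanishes as $\ir\to\infty$. Your closing paragraph gestures at the $\bodisp^{-1/4}$ weight but does not supply the mechanism; the missing idea is the $(1+N_\ir)^{-1/2}$ regularization together with Lemma~\ref{lm:SpecialEstSquaredAnnihilOp}.

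A minor slip: $\norm{G_\ir\bodisp^{-1/2}}$ does \emph{not} tend to $0$ as $\ir\to\infty$ (recall $G_\ir$ is supported on $|k|\le\ir$, so this norm is nondecreasing in $\ir$). This is harmless, since for that term you may choose the Young parameter freely and absorb $\norm{G_\ir\bodisp^{-1/2}}$ into $b$, exactly as the paper does.
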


\begin{proof}
Note that $\norm{\f\chibar_{\irFix}(1+k^2)^{-1/2}\bodisp^{-1/4}}<\infty$ implies 
$\norm{kB_{\ir,\infty}\bodisp^{-1/4}}<\infty$, and $\norm{kB_{\ir,\infty}\bodisp^{-1/2}}<\infty$. 
We estimate the terms of~\eqref{equ:interactForm} one by one. For the first term, 
using Lemma~\ref{cor:SpecialEstAnnihilOp}, we find that for every $\eps>0$, 
$$
   \abs{\sprod{\Psi}{a(G_{\ir})\Psi}}\leq \norm{\Psi} \norm{\frac{G_{\ir}}{\sqrt{\bodisp}}} \norm{d\Gamma(\bodisp)^{1/2}\Psi}\leq \eps\norm{H_0^{1/2}\Psi}^2+\frac{1}{4\eps}\norm{\frac{G_{\ir}}{\sqrt{\bodisp}}}^2 \norm{\Psi}^2,
$$
where $\norm{G_{\ir}\bodisp^{-1/2}}<\infty$ by~$\formFac$ and~$\regTwo$.
Similarly, for the second and third terms, 
\begin{align*}
   \abs{\sprod{p\Psi}{a(kB_{\ir,\uv})\Psi}}\leq \norm{p\Psi} \norm{\frac{kB_{\ir,\uv}}{\sqrt{\bodisp}}} \norm{d\Gamma(\bodisp)^{1/2}\Psi}\leq \frac{1}{2}\norm{\frac{kB_{\ir,\infty}}{\sqrt{\bodisp}}}\ \norm{H_0^{1/2}\Psi}^2,
\end{align*}
and 
\begin{align*}
   \norm{a(kB_{\ir,\uv})\Psi}^2\leq \norm{\frac{kB_{\ir,\infty}}{\sqrt{\bodisp}}}^2\norm{H_0^{1/2}\Psi}^2.
\end{align*}
For the fourth term of~\eqref{equ:interactForm}, let $N_{\ir}:=d\Gamma(\chibar_{\ir})$ 
and note that for $\Psi\in D(H_0^{1/2})$ 
\begin{align}
\label{equ:estLargeMomNumberOp}
   N_{\ir}\leq\frac{1}{m_{\irFix}}d\Gamma(\bodisp).
\end{align}
Then, with the help of Lemma~\ref{lm:SpecialEstSquaredAnnihilOp} and Cauchy-Schwarz,
\begin{align*}
    \abs{\sprod{a^{*}(kB_{\ir,\uv})\Psi}{a(kB_{\ir,\uv})\Psi}} 
    &= \abs{\sprod{(1+N_{\ir})^{1/2}\Psi}{(1+N_{\ir})^{-1/2}a(kB_{\ir,\uv})^2\Psi}} \\
    &\leq\left(1+\frac{1}{m_{\irFix}}\right)^{1/2}\norm{kB_{\ir,\uv}\bodisp^{-1/4}}^2\norm{(1+d\Gamma(\bodisp))^{1/2}\Psi}^2 \\
    &\leq\left(1+\frac{1}{m_{\irFix}}\right)^{1/2}\norm{kB_{\ir,\infty}\bodisp^{-1/4}}^2\left(\norm{H_0^{1/2}\Psi}^2+\norm{\Psi}^2\right).
\end{align*}
Since $\norm{kB_{\ir,\infty}\bodisp^{-1/2}}$ and $\norm{kB_{\ir,\infty}\bodisp^{-1/4}}$ 
become arbitrarily small for $\ir$ 
sufficiently large, the lemma is proved. 
\end{proof}

\begin{lemma}
\label{lm:KLMNCondition2}
Assume $\assDisp$, $\formFac$, $\regThr$, $\regTwo$, and let 
$\norm{\f\chibar_{\irFix}(1+k^2)^{-1/2}\bodisp^{-1/4}}<\infty$. Then, for all $\Psi\in D(H_0^{1/2})$, 
we have
\begin{align*}
   \abs{W_{\ir,\uv_1}(\Psi)-W_{\ir,\uv_2}(\Psi)}\leq C_{\ir,\uv_1,\uv_2} \norm{(H_0+1)^{1/2}\Psi}^2,
\end{align*}
where $C_{\ir,\uv_1,\uv_2}\rightarrow 0$ as $\uv_1,\uv_2\rightarrow\infty$.
\end{lemma}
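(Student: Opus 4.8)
The plan is to start from \eqref{equ:interactForm}: its first summand $\sprod{\Psi}{a(G_{\ir})\Psi}$ and the scalar term $E_{\ir}\norm{\Psi}^2$ do not depend on $\uv$ and hence drop out of the difference. Setting $D:=kB_{\ir,\uv_1}-kB_{\ir,\uv_2}$ and using $\sprod{a^{*}(f)\Psi}{a(f)\Psi}=\sprod{\Psi}{a(f)^2\Psi}$ for $f=kB_{\ir,\uv_j}$, one gets
\begin{align*}
   W_{\ir,\uv_1}(\Psi)-W_{\ir,\uv_2}(\Psi)
   &=2\Rea\Bigl\{-2\sprod{a(D)\Psi}{p\Psi}+\bigl(\norm{a(kB_{\ir,\uv_1})\Psi}^2-\norm{a(kB_{\ir,\uv_2})\Psi}^2\bigr) \\
   &\qquad +\bigl(\sprod{\Psi}{a(kB_{\ir,\uv_1})^2\Psi}-\sprod{\Psi}{a(kB_{\ir,\uv_2})^2\Psi}\bigr)\Bigr\}.
\end{align*}
From the formula for $B_{\ir,\uv,x}$ one reads off $D_x(k)=-\tfrac{k\f(k)}{\bodisp(k)+k^2}e^{-ikx}\bigl(\chi_{\uv_1}(k)-\chi_{\uv_2}(k)\bigr)\chibar_{\ir}(k)$, which vanishes for $\abs{k}\le\min(\uv_1,\uv_2)$. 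Hence $\norm{D\bodisp^{-1/2}}^2$ and $\norm{D\bodisp^{-1/4}}^2$ are tails of the integrals defining $\norm{kB_{\ir,\infty}\bodisp^{-1/2}}^2$ and $\norm{kB_{\ir,\infty}\bodisp^{-1/4}}^2$; these two integrals are finite by the hypothesis $\norm{\f\chibar_{\irFix}(1+k^2)^{-1/2}\bodisp^{-1/4}}<\infty$, as noted in the proof of Lemma~\ref{lm:KLMNCondition1}, so $\norm{D\bodisp^{-1/2}}\to0$ and $\norm{D\bodisp^{-1/4}}\to0$ as $\uv_1,\uv_2\to\infty$.

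It then remains to estimate the three brackets by the same computations used in the proof of Lemma~\ref{lm:KLMNCondition1}, the crucial point being that every bracket, after telescoping, carries at least one factor $a^{\#}(D)$. For the first term, Lemma~\ref{cor:SpecialEstAnnihilOp} and $2ab\le a^2+b^2$ give $\abs{\sprod{a(D)\Psi}{p\Psi}}\le\tfrac12\norm{D\bodisp^{-1/2}}\norm{H_0^{1/2}\Psi}^2$. For the second, inserting $a(kB_{\ir,\uv_1})\Psi=a(kB_{\ir,\uv_2})\Psi+a(D)\Psi$ yields $\norm{a(kB_{\ir,\uv_1})\Psi}^2-\norm{a(kB_{\ir,\uv_2})\Psi}^2=2\Rea\sprod{a(kB_{\ir,\uv_2})\Psi}{a(D)\Psi}+\norm{a(D)\Psi}^2$, and Lemma~\ref{cor:SpecialEstAnnihilOp} together with $\norm{kB_{\ir,\uv_j}\bodisp^{-1/2}}\le\norm{kB_{\ir,\infty}\bodisp^{-1/2}}$ bounds this by $\bigl(2\norm{kB_{\ir,\infty}\bodisp^{-1/2}}+\norm{D\bodisp^{-1/2}}\bigr)\norm{D\bodisp^{-1/2}}\norm{H_0^{1/2}\Psi}^2$. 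For the third, commutativity of annihilation operators gives $a(kB_{\ir,\uv_1})^2-a(kB_{\ir,\uv_2})^2=a(D)\,a(kB_{\ir,\uv_1}+kB_{\ir,\uv_2})$; since $D$ and the $kB_{\ir,\uv_j}$ are supported in $\{\abs{k}>\ir\}$, one writes the associated matrix element as $\sprod{(1+N_{\ir})^{1/2}\Psi}{(1+N_{\ir})^{-1/2}a(D)a(kB_{\ir,\uv_1}+kB_{\ir,\uv_2})\Psi}$ and applies Lemma~\ref{lm:SpecialEstSquaredAnnihilOp}, the bound $N_{\ir}\le m_{\irFix}^{-1}d\Gamma(\bodisp)$ from \eqref{equ:estLargeMomNumberOp}, and $d\Gamma(\bodisp)\le H_0$, obtaining a bound $c(m_{\irFix})\norm{D\bodisp^{-1/4}}\norm{kB_{\ir,\infty}\bodisp^{-1/4}}\norm{(H_0+1)^{1/2}\Psi}^2$. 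All these manipulations are first performed on the core $D(H_0)$, where the operator products are unambiguous, and then extended to $D(H_0^{1/2})$ by continuity of the $H_0$-form-bounded forms $W_{\ir,\uv}$.

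Adding the three estimates and using $\norm{H_0^{1/2}\Psi}^2\le\norm{(H_0+1)^{1/2}\Psi}^2$ proves the claim, with $C_{\ir,\uv_1,\uv_2}$ a linear combination---with coefficients depending only on $m_{\irFix}$---of $\norm{D\bodisp^{-1/2}}$, $\norm{D\bodisp^{-1/2}}^2$, $\norm{D\bodisp^{-1/2}}\norm{kB_{\ir,\infty}\bodisp^{-1/2}}$, and $\norm{D\bodisp^{-1/4}}\norm{kB_{\ir,\infty}\bodisp^{-1/4}}$; since the two $kB_{\ir,\infty}$-norms are fixed and finite while $\norm{D\bodisp^{-1/2}}\to0$ and $\norm{D\bodisp^{-1/4}}\to0$ by the first paragraph, $C_{\ir,\uv_1,\uv_2}\to0$ as $\uv_1,\uv_2\to\infty$. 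I do not expect a genuine obstacle here: the estimates are precisely those of Lemma~\ref{lm:KLMNCondition1} applied to the difference of forms, and the only thing requiring care is to keep $D$ always inside a product with a second (weighted-$L^2$) annihilation operator---never to estimate it through its bare $L^2$-norm, which need not be small under $\regThr$ alone---so that the smallness of the annulus-supported $D$ is transmitted to $C_{\ir,\uv_1,\uv_2}$; the mildly delicate bookkeeping, exactly as in Lemma~\ref{lm:KLMNCondition1}, lies in the $N_{\ir}$-dependent third term.
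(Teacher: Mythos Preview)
Your proposal is correct and follows essentially the same route as the paper's proof: you form the difference of \eqref{equ:interactForm}, drop the $\uv$-independent pieces, and estimate the three remaining brackets by the same tools used in Lemma~\ref{lm:KLMNCondition1}, exploiting that each bracket carries a factor $a^{\#}(D)$ with $D=kB_{\ir,\uv_1}-kB_{\ir,\uv_2}$ supported in the annulus (the paper writes this as $kB_{\uv_2,\uv_1}$). The only cosmetic difference is in the second bracket, where you expand $\norm{a(kB_{\ir,\uv_2})\Psi+a(D)\Psi}^2$ while the paper factors $\norm{a}^2-\norm{b}^2=(\norm{a}-\norm{b})(\norm{a}+\norm{b})$ and uses the reverse triangle inequality; both give the same bound.
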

\begin{proof}
Note that $\norm{\f\chibar_{\irFix}(1+k^2)^{-1/2}\bodisp^{-1/4}}<\infty$ implies 
$\norm{kB_{\ir,\infty}\bodisp^{-1/4}}<\infty$, and 
$\norm{kB_{\ir,\infty}\bodisp^{-1/2}}<\infty$. By definition of $W_{\ir,\uv}$ and
with $N_{\ir}:=d\Gamma(\chibar_{\ir})$,
\begin{align*}
   &W_{\ir,\uv_1}(\Psi)-W_{\ir,\uv_2}(\Psi) \notag \\
\begin{split}
   &=2\Rea \Big\{-2\sprod{p\Psi}{a(kB_{\uv_2,\uv_1})\Psi}+\norm{a(kB_{\ir,\uv_1})\Psi}^2-\norm{a(kB_{\ir,\uv_2})\Psi}^2 \\
   &\phantom{=====}+\sprod{(1+N_{\ir})^{1/2}\Psi}{(1+N_{\ir})^{-1/2}\left(a(kB_{\ir,\uv_1})^2-a(kB_{\ir,\uv_2})^2\right)\Psi}\Big\}.
\end{split}
\end{align*}
We estimate term by term with the help of the arguments from the proof of 
Lemma~\ref{lm:KLMNCondition1}. Using Lemma~\ref{cor:SpecialEstAnnihilOp},
\begin{align*}
   \abs{\sprod{p\Psi}{a(kB_{\uv_2,\uv_1})\Psi}}\leq \frac{1}{2}\norm{\frac{kB_{\uv_2,\uv_1}}{\sqrt{\bodisp}}}\ \norm{H_0^{1/2}\Psi}^2,
\end{align*}
and likewise,
\begin{align*}
   &\abs{\norm{a(kB_{\ir,\uv_1})\Psi}^2-\norm{a(kB_{\ir,\uv_2})\Psi}^2} \notag \\
   &=\abs{\norm{a(kB_{\ir,\uv_1})\Psi}-\norm{a(kB_{\ir,\uv_2})\Psi}}\left(\norm{a(kB_{\ir,\uv_1})\Psi}+\norm{a(kB_{\ir,\uv_2})\Psi}\right) \notag \\
   &\leq\norm{a(kB_{\uv_2,\uv_1})\Psi}\left(\norm{a(kB_{\ir,\uv_1})\Psi}+\norm{a(kB_{\ir,\uv_2})\Psi}\right) \notag \\
   &\leq 2 \norm{\frac{kB_{\uv_2,\uv_1}}{\sqrt{\bodisp}}} \norm{\frac{kB_{\ir,\infty}}{\sqrt{\bodisp}}}\ \norm{H_0^{1/2}\Psi}^2,
\end{align*}
where $\norm{kB_{\uv_2,\uv_1}\bodisp^{-1/2}}\rightarrow 0$ as 
$\uv_1,\uv_2\rightarrow\infty$. Finally, using 
Equation~\eqref{equ:estLargeMomNumberOp} and 
Lemma~\ref{lm:SpecialEstSquaredAnnihilOp},
\begin{align*}
   &\abs{\sprod{(1+N_{\ir})^{1/2}\Psi}{(1+N_{\ir})^{-1/2}\left(a(kB_{\ir,\uv_1})^2-a(kB_{\ir,\uv_2})^2\right)\Psi}} \notag \\
   &\leq\norm{(1+N_{\ir})^{1/2}\Psi}\norm{(1+N_{\ir})^{-1/2}a(kB_{\uv_2,\uv_1})a(kB_{\ir,\uv_1}+kB_{\ir,\uv_2})\Psi} \notag \\
   &\leq\left(1+\frac{1}{m_{\irFix}}\right)^{1/2}\norm{\frac{kB_{\uv_2,\uv_1}}{\sqrt[4]{\bodisp}}}
    \norm{\frac{k(B_{\ir,\uv_1}+B_{\ir,\uv_2})}{\sqrt[4]{\bodisp}}}\norm{(1+d\Gamma(\bodisp))^{1/2}\Psi} \notag \\
   &\leq 2\left(1+\frac{1}{m_{\irFix}}\right)^{1/2}
    \norm{\frac{kB_{\uv_2,\uv_1}}{\sqrt[4]{\bodisp}}}   \norm{\frac{kB_{\ir,\infty}}{\sqrt[4]{\bodisp}}} \norm{(H_0+1)^{1/2}\Psi}^2,
\end{align*}
where $\norm{kB_{\uv_2,\uv_1}\bodisp^{-1/4}}\rightarrow 0$ as $\uv_1,\uv_2\rightarrow\infty$.
\end{proof}

By the two lemmas above,  $W_{\ir,\uv}$ satisfies the hypotheses of 
Theorem~\ref{thm:AmariLikeTheorem} for $\ir$ large enough. By Lemma~\ref{lm:strongWeyl}, the assumption of part (iii) of Theorem~\ref{thm:AmariLikeTheorem} is satisfied as well. We therefore conclude:

\begin{theorem}
\label{thm:ExistenceOfNRLimit}
Assume $\assDisp$, $\formFac$, $\regThr$, $\regTwo$, and let 
$\norm{\f\chibar_{\irFix}(1+k^2)^{-1/2}\bodisp^{-1/4}}<\infty$. Then, for $\ir$ large enough, 
there exists a unique, self-adjoint, semi-bounded operator $H_{\ir,\infty}'$ 
with $D(H_{\ir,\infty}')\subset D(H_0^{1/2})$ associated to the quadratic form
\begin{align*}
   \sprod{H_0^{1/2}\Psi}{H_0^{1/2}\Psi}+\limop{\uv}W_{\ir,\uv}(\Psi),
\end{align*}
which is defined for all $\Psi\in D(H_0^{1/2})$. Moreover, we have
\begin{align*}
   H_{\ir,\uv}' &\rightarrow H_{\ir,\infty}' \qquad (\uv\rightarrow\infty), \\
   H_{\uv}+E_{\uv} &\rightarrow H:=U_{\ir,\infty}^*H_{\ir,\infty}'U_{\ir,\infty} \qquad (\uv\rightarrow\infty)
\end{align*}
in the norm-resolvent sense and $D(H)\subset U_{\ir,\infty}^*D(H_0^{1/2})$.
\end{theorem}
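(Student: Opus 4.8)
The plan is to deduce Theorem~\ref{thm:ExistenceOfNRLimit} from the abstract Theorem~\ref{thm:AmariLikeTheorem} of Appendix~\ref{app:AmariLikeTheorem}, applied to the family of quadratic forms
\[
   q_{\ir,\uv}(\Psi):=\norm{H_0^{1/2}\Psi}^2+W_{\ir,\uv}(\Psi),\qquad \Psi\in D(H_0^{1/2}),\ \ \uv<\infty ,
\]
and then to transport its conclusions to $H_\uv+E_\uv$ through the conjugation identity $H_\uv+E_\uv=U_{\ir,\uv}^*H_{\ir,\uv}'U_{\ir,\uv}$, which holds on $D(H_0)$ for $\uv<\infty$ by the computations of Section~\ref{sec:nelsonAndGross}. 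The analytic hypotheses of that abstract theorem are exactly Lemmas~\ref{lm:KLMNCondition1} and~\ref{lm:KLMNCondition2}: a uniform (in $\uv$) infinitesimal $H_0$-form bound on $W_{\ir,\uv}$, available once $\ir$ is fixed so large that Lemma~\ref{lm:KLMNCondition1} holds with relative bound $\eps<1$; and the quantitative form-Cauchy estimate $\abs{W_{\ir,\uv_1}(\Psi)-W_{\ir,\uv_2}(\Psi)}\le C_{\uv_1,\uv_2}\norm{(H_0+1)^{1/2}\Psi}^2$ with $C_{\uv_1,\uv_2}\to0$. The one further hypothesis, on the conjugating unitaries, is provided by Lemma~\ref{lm:strongWeyl}.

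Granting these inputs I would first record what the KLMN theorem gives directly. For each $\uv<\infty$, Lemma~\ref{lm:KLMNCondition1} makes $q_{\ir,\uv}$ a closed, symmetric form, bounded below by some $-b$ independent of $\uv$, with form domain $D(H_0^{1/2})$; its uniquely associated self-adjoint operator is $H_{\ir,\uv}'$, and $D(H_{\ir,\uv}')\subset D(H_0^{1/2})$. By Lemma~\ref{lm:KLMNCondition2} the net $(W_{\ir,\uv}(\Psi))_\uv$ is Cauchy for each $\Psi\in D(H_0^{1/2})$, so $W_{\ir,\infty}(\Psi):=\limop{\uv}W_{\ir,\uv}(\Psi)$ exists; passing to the limit in the bound of Lemma~\ref{lm:KLMNCondition1} shows that $W_{\ir,\infty}$ satisfies the same infinitesimal form bound, so $q_{\ir,\infty}(\Psi):=\norm{H_0^{1/2}\Psi}^2+W_{\ir,\infty}(\Psi)$ is again closed, semibounded, with form domain $D(H_0^{1/2})$, hence has a unique associated self-adjoint $H_{\ir,\infty}'$ with $D(H_{\ir,\infty}')\subset D(H_0^{1/2})$; moreover $\abs{W_{\ir,\infty}(\Psi)-W_{\ir,\uv}(\Psi)}\le C_\uv\norm{(H_0+1)^{1/2}\Psi}^2$ with $C_\uv\to0$. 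The norm-resolvent convergence $H_{\ir,\uv}'\to H_{\ir,\infty}'$ then follows from the second resolvent identity read off the forms: for real $z<-b$ the sesquilinear form of $(H_{\ir,\uv}'-z)^{-1}-(H_{\ir,\infty}'-z)^{-1}$ is $W_{\ir,\infty}-W_{\ir,\uv}$ evaluated on the vectors $(H_{\ir,\uv}'-\bar z)^{-1}\varphi$ and $(H_{\ir,\infty}'-z)^{-1}\psi$, and since the uniform form bound keeps $\norm{(H_0+1)^{1/2}(H_{\ir,\uv}'-z)^{-1}}$ and $\norm{(H_0+1)^{1/2}(H_{\ir,\infty}'-z)^{-1}}$ finite, the former uniformly in $\uv$, one gets $\norm{(H_{\ir,\uv}'-z)^{-1}-(H_{\ir,\infty}'-z)^{-1}}\le\const\cdot C_\uv\to0$.

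Next I would set $H:=U_{\ir,\infty}^*H_{\ir,\infty}'U_{\ir,\infty}$, which is self-adjoint and semibounded with $D(H)=U_{\ir,\infty}^*D(H_{\ir,\infty}')\subset U_{\ir,\infty}^*D(H_0^{1/2})$, and compare $(H_\uv+E_\uv-z)^{-1}=U_{\ir,\uv}^*(H_{\ir,\uv}'-z)^{-1}U_{\ir,\uv}$ with $(H-z)^{-1}=U_{\ir,\infty}^*(H_{\ir,\infty}'-z)^{-1}U_{\ir,\infty}$ by inserting $U_{\ir,\uv}^*(H_{\ir,\infty}'-z)^{-1}U_{\ir,\uv}=S_\uv^*(H-z)^{-1}S_\uv$, where $S_\uv:=U_{\ir,\infty}^*U_{\ir,\uv}$. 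The difference $(H_\uv+E_\uv-z)^{-1}-S_\uv^*(H-z)^{-1}S_\uv$ is unitarily conjugate to $(H_{\ir,\uv}'-z)^{-1}-(H_{\ir,\infty}'-z)^{-1}$, hence $O(C_\uv)$ in operator norm by the previous step. The remaining difference $S_\uv^*(H-z)^{-1}S_\uv-(H-z)^{-1}$ is where the hard estimate enters, and this is the main obstacle: strong convergence $U_{\ir,\uv}\to U_{\ir,\infty}$, which Lemma~\ref{lm:strongWeyl} also yields, does not suffice, because $(H-z)^{-1}$ is not compact. What saves the argument is that $(H-z)^{-1}$ maps $\hilbert$ into $U_{\ir,\infty}^*D(H_0^{1/2})$ --- equivalently, $(H_0+1)^{1/2}(H_{\ir,\infty}'-z)^{-1}$ is bounded, by $D(H_{\ir,\infty}')\subset D(H_0^{1/2})$ and the closed graph theorem --- so that, writing $S_\uv^*(H-z)^{-1}S_\uv-(H-z)^{-1}=S_\uv^*(H-z)^{-1}(S_\uv-\ID)+(S_\uv^*-\ID)(H-z)^{-1}$ and using $(S_\uv^*-\ID)U_{\ir,\infty}^*=U_{\ir,\uv}^*-U_{\ir,\infty}^*$, both terms (after taking an adjoint in the first) are bounded by a fixed constant times $\norm{(U_{\ir,\uv}^*-U_{\ir,\infty}^*)(H_0+1)^{-1/2}}$.

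It remains to show $\norm{(U_{\ir,\uv}^*-U_{\ir,\infty}^*)(H_0+1)^{-1/2}}\to0$. Here I would use that the field operators $\pi(B_{\ir,\infty})$ and $\pi(B_{\uv,\infty})$ commute --- their commutator is $2i$ times the imaginary part of a real-valued multiplication operator on the particle space, hence zero --- which gives the exact identity
\[
   U_{\ir,\uv}^*-U_{\ir,\infty}^*=e^{-i\pi(B_{\ir,\uv})}-e^{-i\pi(B_{\ir,\infty})}=i\int_0^1 e^{-i\pi(B_{\ir,\infty}-sB_{\uv,\infty})}\,\pi(B_{\uv,\infty})\,ds ,
\]
where $B_{\uv,\infty}=B_{\ir,\infty}-B_{\ir,\uv}$ is supported in $\{\abs{k}>\uv\}$. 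Since the Weyl operators under the integral are unitary, this yields $\norm{(U_{\ir,\uv}^*-U_{\ir,\infty}^*)(H_0+1)^{-1/2}}\le\norm{\pi(B_{\uv,\infty})(H_0+1)^{-1/2}}$, and the right-hand side tends to $0$: by $H_0\ge d\Gamma(\bodisp)$ it is bounded by $\norm{\pi(B_{\uv,\infty})(d\Gamma(\bodisp)+1)^{-1/2}}$, which by Lemma~\ref{cor:SpecialEstAnnihilOp} is controlled by $\norm{B_{\uv,\infty}\bodisp^{-1/2}}$ and $\norm{B_{\uv,\infty}}$, and both of these go to $0$ as $\uv\to\infty$ by $\regThr$ and $\regTwo$. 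This estimate is essentially the content of part~(iii) of Theorem~\ref{thm:AmariLikeTheorem}; with it, $H_\uv+E_\uv\to H$ in the norm-resolvent sense, completing the proof. The remaining ingredients --- the KLMN applications and the form version of the resolvent identity --- are routine.
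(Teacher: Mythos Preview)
Your proof is correct and follows essentially the same approach as the paper: the paper simply cites Theorem~\ref{thm:AmariLikeTheorem}, with hypotheses (a) and (b) verified by Lemmas~\ref{lm:KLMNCondition1} and~\ref{lm:KLMNCondition2} and the hypothesis of part~(iii) verified by Lemma~\ref{lm:strongWeyl}, whereas you have unpacked the content of Theorem~\ref{thm:AmariLikeTheorem} inline (KLMN, form resolvent identity, and the three-term splitting of the conjugated resolvents). Your verification that $\norm{(U_{\ir,\uv}^*-U_{\ir,\infty}^*)(H_0+1)^{-1/2}}\to 0$ via the integral formula is a minor variant of Lemma~\ref{lm:strongWeyl} and is correct, since $\sprod{B_{\ir,\infty,x}}{B_{\uv,\infty,x}}=\int_{|k|>\uv}|\f(k)|^2(\bodisp(k)+k^2)^{-2}\,dk$ is real (indeed $x$-independent), so the commutator vanishes as you claim.
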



\section{Regularity of domain vectors}   
\label{sec:domain}

From Theorem~\ref{thm:ExistenceOfNRLimit}, we know that $U_{\ir,\infty}^*D(H_0^{1/2})$ 
is the form domain of the operator $H$ and hence $D(H)\subset U_{\ir,\infty}^*D(H_0^{1/2})$.
This section is devoted to the study of the regularity of elements 
$\Psi\in U_{\ir,\infty}^*D(H_0^{1/2})$ based on the decay of $\f$, or more precisely, 
based on the decay of $f=(\bodisp+k^2)^{-1}\f$. 
To stress this point and to simplify the notation we formulate most results of this 
section as mapping properties of some unitary operator
$$
         U_{f}^{*} = e^{-i\pi(F)},\qquad F_x(k) = f(k) e^{-ik x}
$$   
under suitable assumptions on the decay of $f\in L^2_{\rm loc}(\R^d)$ at $|k|=\infty$. Let 
$$
         U_{f,\uv}^{*} = e^{-i\pi(F_\uv)},\qquad F_{\uv,x} = F_x\chi_\uv.
$$   
Unless otherwise stated, $\bodisp$ is assumed to be "admissible", which
means that $\bodisp\in L_{loc}^{\infty}(\R^d)$ and $\bodisp(k)>0$ almost everywhere. 
The main results of this section are Theorems~\ref{thm:domainSubset3D4D} 
and~\ref{thm:domainIntersec3D4D}. 

\begin{theorem}
\label{thm:domainSubset3D4D}
Assume that $f\in L^2(\R^d)$ is a function with 
$|k|^sf,|k|^sf\bodisp^{-1/2}\in L^2(\R^d)$ for some 
$s\in [0,1]$. Then 
\begin{align*}
   U_{f}^*D(H_0^{1/2})\subset D(\abs{p}^{s}).
\end{align*}
\end{theorem}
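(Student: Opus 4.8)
The plan is to estimate $\norm{\abs{p}^s U_f^* \Psi}$ for $\Psi\in D(H_0^{1/2})$ by controlling the commutator of $\abs{p}^s$ with the Gross-type unitary $U_f^*=e^{-i\pi(F)}$. First I would reduce the problem to a uniform estimate on the cutoff operators $U_{f,\uv}^*$: since $F_\uv\to F$ in the relevant sense, $U_{f,\uv}^*\to U_f^*$ strongly (Lemma~\ref{lm:strongWeyl}), and a uniform bound $\norm{\abs{p}^s U_{f,\uv}^*\Psi}\leq C\norm{(H_0+1)^{1/2}\Psi}$ together with the closedness of $\abs{p}^s$ yields both $U_f^*\Psi\in D(\abs{p}^s)$ and the bound in the limit. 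The advantage of working with $F_\uv$ is that $\abs{k}^s F_\uv$ is genuinely in $L^2$ and all field operators involved are well-defined on $D(H_0)$, so the formal manipulations are justified there, and then extended to $D(H_0^{1/2})$ by density.

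Next I would compute, on $D(H_0)$, the transformed operator $U_{f,\uv}\abs{p}^s U_{f,\uv}^*$, or rather work with $\abs{p}^{2s}$ for integer-like manipulations and interpolate, or directly use the representation $\abs{p}^s U_{f,\uv}^* = U_{f,\uv}^* \bigl(U_{f,\uv}\abs{p}^s U_{f,\uv}^*\bigr)$ and expand the conjugation. The key algebraic input, analogous to the computation of $U_{\ir,\uv}p^2 U_{\ir,\uv}^*$ in Section~\ref{sec:nelsonAndGross}, is that conjugating $\abs{p}^s$ by $e^{i\pi(F_\uv)}$ produces $\abs{p}^s$ plus terms that are linear and quadratic in annihilation/creation operators smeared against $\abs{k}^s F_\uv$ (and possibly $F_\uv$ itself), together with a scalar. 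The scalar and the field terms $a^\#(\abs{k}^s F_\uv)$, $a^\#(\abs{k}^s F_\uv)a^\#(F_\uv)$ are then estimated using Lemma~\ref{cor:SpecialEstAnnihilOp} and Lemma~\ref{lm:SpecialEstSquaredAnnihilOp} by $\norm{H_0^{1/2}\Psi}^2+\norm{\Psi}^2$, exactly as in the proofs of Lemmas~\ref{lm:KLMNCondition1} and~\ref{lm:KLMNCondition2}, using precisely the hypotheses $\abs{k}^s f\in L^2$ and $\abs{k}^s f\bodisp^{-1/2}\in L^2$. The uniformity in $\uv$ comes from bounding the $L^2$-norms of the cutoff form factors by those of the uncutoff ones.

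The main obstacle I expect is handling the cross term between $\abs{p}^s$ and the field, i.e.\ the analogue of the $a^*(kB)\cdot p$ terms: for $s=1$ this is the clean product $a^\#(\abs{k}F)\cdot p$ bounded by $\norm{\abs{k}f\bodisp^{-1/2}}\norm{\abs{p}\Psi}\norm{d\Gamma(\bodisp)^{1/2}\Psi}$, but for non-integer $s$ the commutator $[\abs{p}^s, e^{-ik x}]$ is not a polynomial in $p$ and must be controlled — one needs that multiplication by $e^{-ikx}$ maps $D(\abs{p}^s)$ into itself with $\norm{[\abs{p}^s, e^{-ikx}]\varphi}\lesssim \abs{k}^s\norm{\varphi} + (\text{lower order})$, which is the rigorous content of the heuristic in the introduction. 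I would expect the paper to have an appendix lemma (one of the two appendices mentioned besides Appendix~\ref{app:AmariLikeTheorem}) providing exactly this pointwise-in-$k$ commutator estimate for fractional powers of $\abs{p}$ against the unitary multipliers $e^{-ikx}$, and I would invoke it to close the argument.

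One further point to watch: the endpoint $s=1$ must be \emph{included} (the theorem allows $s\in[0,1]$), even though the introduction emphasizes that for the Nelson model $\abs{k}f$ fails to be square integrable; here $f$ is a general $L^2$ function and the hypothesis $\abs{k}f\in L^2$ is simply assumed, so $s=1$ is covered and it is only the \emph{application} to the Nelson form factor in Section~\ref{sec:Nelson-low} that is restricted to $s<1$. So the proof should treat $s=1$ by the direct product-operator estimate and $s\in[0,1)$ either by the same commutator lemma or by complex interpolation between $s=0$ (trivial, $U_f^*$ unitary) and $s=1$, whichever the appendix tools make cleanest.
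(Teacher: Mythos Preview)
Your proposal has a genuine gap in the second paragraph. You claim that conjugating $\abs{p}^s$ by $e^{i\pi(F_\uv)}$ produces $\abs{p}^s$ plus terms that are ``linear and quadratic in annihilation/creation operators smeared against $\abs{k}^s F_\uv$ \ldots together with a scalar'', in analogy with the $p^2$-computation of Section~\ref{sec:nelsonAndGross}. This is incorrect for fractional $s$: the BCH expansion $e^{i\pi(F)}\abs{p}^s e^{-i\pi(F)} = \sum_n \tfrac{1}{n!}\operatorname{ad}_{i\pi(F)}^n(\abs{p}^s)$ terminates for $s\in\{1,2\}$ because $[\pi(F),p]=\phi(kF)$ and the next commutator is scalar, but for $s\notin\{1,2\}$ the first commutator involves the $p$-dependent symbol $\abs{p}^s-\abs{p+k}^s$ and the iterated commutators do not simplify or terminate. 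So there is no finite expression to which you can apply Lemmas~\ref{cor:SpecialEstAnnihilOp} and~\ref{lm:SpecialEstSquaredAnnihilOp} term by term. You partly recognise this in your third paragraph but do not supply a replacement.

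The paper's fix is the missing ingredient you gesture at: it uses the Duhamel integral
\[
\abs{p}^s e^{-i\pi(F)} - e^{-i\pi(F)}\abs{p}^s \;=\; \int_0^1 e^{-i\pi(F)(1-t)}\,[i\pi(F),\abs{p}^s]\,e^{-i\pi(F)t}\,dt,
\]
which needs only the \emph{first} commutator, and that commutator is controlled by precisely the appendix lemma you anticipated (Lemma~\ref{lm:SpecEstGenAnnihilOp}), based on the pointwise bound $\bigl|\abs{p}^s-\abs{p+k}^s\bigr|\leq\abs{k}^s$ you correctly identified. The paper's approximation scheme is also different from yours: instead of a UV cutoff $F_\uv$ with uniform-in-$\uv$ bounds and closedness of $\abs{p}^s$, it works directly with $F$ on the dense set $D(H_0^{1/2})\cap\hilbert_0$ of finite-particle vectors, on which the exponential series for $e^{-i\pi(F)}\Psi$ and for $\abs{p}^s e^{-i\pi(F)}\Psi$ converge absolutely, and then concludes by duality: one shows that $\Phi\mapsto\sprod{\abs{p}^s\Phi}{e^{-i\pi(F)}\Psi}$ is bounded on $D(\abs{p})$, whence $e^{-i\pi(F)}\Psi\in D(\abs{p}^s)$ by self-adjointness. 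Note that your UV-cutoff route, even combined with Duhamel, would need to bound $\norm{(1+d\Gamma(\bodisp))^{1/2}e^{-i\pi(F_\uv)t}\Psi}$ uniformly in $\uv$, and Corollary~\ref{cor:transformSquRootDGammaSandwich} introduces a factor $(1+\norm{\bodisp^{1/2}f\chi_\uv})$ which is \emph{not} assumed finite in the limit; the finite-particle/series device sidesteps this. The complex-interpolation alternative you mention is not pursued by the paper.
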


\begin{proof}
As a preparation, we record the following three facts:
\begin{enumerate}
\item From Lemma~\ref{lm:SpecEstGenAnnihilOp}, we know that
\begin{align*}
   \norm{[a^{\#}(F),\abs{p}^s](d\Gamma(\bodisp)+1)^{-1/2}}
   &=\norm{a^{\#}([F,\abs{p}^s])(d\Gamma(\bodisp)+1)^{-1/2}} \\
   &\leq\max\left\{\norm{\abs{k}^sf},\norm{\abs{k}^sf\bodisp^{-1/2}}\right\}.
\end{align*}
\item $D(H_0^{1/2})\cap\hilbert_0$ is a core of $H_0^{1/2}$: If $\Psi\in D(H_0^{1/2})$ then 
$\Psi_n:=\chi(N\leq n)\Psi\in D(H_0^{1/2})\cap\hilbert_0$ and it is 
straightforward to show that $\norm{H_0^{1/2}(\Psi-\Psi_n)}\rightarrow 0$ 
as $n\rightarrow\infty$.
\item If $\Psi\in D(H_0^{1/2})\cap\hilbert_0$ then 
$e^{-i\pi(F)t}\Psi\in D(\abs{p}^s)$ and 
$t\mapsto\abs{p}^se^{-i\pi(F)t}\Psi$ is real-analytic: 

It is well-known that for $\Psi\in\hilbert_0$
\begin{align*}
   e^{-i\pi(F)t}\Psi=\sum\limits_{n\geq 0}\frac{(-it)^n}{n!}\pi(F)^n\Psi,
\end{align*}
where this series is absolutely convergent for all $t\in\R$. The proof 
of this fact combined with Fact 1 shows that
\begin{align}
\label{equ:domainSubsetExpansion}
   \sum\limits_{n\geq 0}\frac{(-it)^n}{n!}\abs{p}^s\pi(F)^n\Psi=\sum\limits_{n\geq 0}\frac{(-it)^n}{n!}\pi(F)^n\abs{p}^s\Psi+\sum\limits_{n\geq 0}\frac{(-it)^n}{n!}[\abs{p}^s,\pi(F)^n]\Psi  
\end{align}
is absolutely convergent for all $t\in\R$. By the closedness of the operator $\abs{p}^s$, 
it follows that $e^{-i\pi(F)t}\Psi\in D(\abs{p}^s)$ and that 
$\abs{p}^se^{-i\pi(F)t}\Psi$ is given 
by~\eqref{equ:domainSubsetExpansion}.
\end{enumerate}
We are now ready to prove the theorem. For all $\Phi\in D(\abs{p})$ and 
all $\Psi\in D(H_0^{1/2})\cap\hilbert_0$ it follows from Facts 1 and 3 that
\begin{align*}
   &\sprod{\Phi}{\abs{p}^se^{-i\pi(F)}\Psi-e^{-i\pi(F)}\abs{p}^s\Psi}=\sprod{\Phi}{e^{-i\pi(F)(1-t)}\abs{p}^se^{-i\pi(F)t}\Psi} \Big|_{t=0}^{t=1} \notag \\
   &=\int_{0}^{1}\sprod{\Phi}{e^{-i\pi(F)(1-t)}[i\pi(F),\abs{p}^s]e^{-i\pi(F)t}\Psi} dt.
\end{align*}
We write this in the form
\begin{align}
\label{equ:proofAntilinFunct}
   \sprod{\abs{p}^s\Phi}{e^{-i\pi(F)}\Psi}=\sprod{\Phi}{e^{-i\pi(F)}\abs{p}^s\Psi}+\int_{0}^{1}\sprod{\Phi}{e^{-i\pi(F)(1-t)}[i\pi(F),\abs{p}^s]e^{-i\pi(F)t}\Psi} dt.
\end{align}
For given $\Phi\in D(\abs{p})$ this equation extends to all 
$\Psi\in D(H_0^{1/2})$. This follows from Fact 2 and the boundedness of 
$\abs{p}^s(H_0+1)^{-1/2}$ and 
$[i\pi(F),\abs{p}^s]e^{-i\pi(F)t}(H_0+1)^{-1/2}$ which 
follows from Fact 1 and Lemma~\ref{lm:strongWeyl}. 
Equation~\eqref{equ:proofAntilinFunct} now shows that for all 
$\Psi\in D(H_0^{1/2})$
\begin{align*}
   \Phi\mapsto\sprod{\abs{p}^s\Phi}{e^{-i\pi(F)}\Psi}
\end{align*}
is a bounded anti-linear functional defined on $D(\abs{p})$, which is 
a core of $\abs{p}^s$. Since the operator $\abs{p}^s$ is self-adjoint, 
it follows that $e^{-i\pi(F)}\Psi\in D(\abs{p}^s)$.
\end{proof}

\begin{lemma}
\label{commute-p-Weyl}
If $f\in L^2_{loc}(\R^d)$, then for all  $s\in [0,1]$ and all finite $\Lambda>0$ 
\begin{align*}
\left\|(1+d\Gamma(\bodisp))^{-1/2} \left[|p|^s,  U_{f,\uv}^{*} \right](1+d\Gamma(\bodisp))^{-1/2}\right\|
 \leq 2 \||k|^{s}f\chi_\uv\bodisp^{-1/2}\|\cdot (1+\|\bodisp^{1/2}f\chi_\uv\|).
\end{align*}
\end{lemma}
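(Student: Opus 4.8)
The plan is to prove the commutator estimate by the same strategy already used in the proof of Theorem~\ref{thm:domainSubset3D4D}, namely by expanding $U_{f,\uv}^{*} = e^{-i\pi(F_\uv)}$ as an absolutely convergent power series on finite-particle vectors, differentiating in the parameter $t$ along $e^{-i\pi(F_\uv)t}$, and integrating the resulting commutator $[|p|^s,\pi(F_\uv)]$. First I would fix $\Psi,\Phi\in D(H_0^{1/2})\cap\hilbert_0$ and write, exactly as in \eqref{equ:proofAntilinFunct},
\begin{align*}
   \sprod{|p|^s\Phi}{U_{f,\uv}^{*}\Psi} - \sprod{\Phi}{U_{f,\uv}^{*}|p|^s\Psi}
   = \int_{0}^{1}\sprod{\Phi}{e^{-i\pi(F_\uv)(1-t)}\,[i\pi(F_\uv),|p|^s]\,e^{-i\pi(F_\uv)t}\Psi}\,dt,
\end{align*}
which is legitimate on $\hilbert_0$ because $\pi(F_\uv)$ preserves $\hilbert_0$ up to adding two particles and $|k|^sf\chi_\uv, \bodisp^{1/2}f\chi_\uv \in L^2$ (as $f\in L^2_{loc}$ and $\chi_\uv$ is compactly supported, and $\bodisp\in L^\infty_{loc}$), so all the operators involved are defined there. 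The point of the integral representation is that it exhibits $[|p|^s,U_{f,\uv}^{*}]$ as an average of the unitarily-conjugated single commutator, so no control of higher powers of $\pi(F_\uv)$ is needed in the final bound.

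Next I would insert the resolvents. Writing $R := (1+d\Gamma(\bodisp))^{-1/2}$, the quantity to bound is $\|R[|p|^s,U_{f,\uv}^{*}]R\|$, so I test against $\Phi,\Psi$ and pull $R$ through: $\sprod{R\Phi}{[|p|^s,U_{f,\uv}^{*}]R\Psi}$ equals the integral above with $\Phi,\Psi$ replaced by $R\Phi,R\Psi$. In the integrand I use that $e^{-i\pi(F_\uv)t}$ is unitary, so it suffices to estimate, for each $t$,
\begin{align*}
   \bigl|\sprod{\Phi}{R\,e^{-i\pi(F_\uv)(1-t)}\,[i\pi(F_\uv),|p|^s]\,e^{-i\pi(F_\uv)t}R\,\Psi}\bigr|
\end{align*}
by $C\,\|\Phi\|\,\|\Psi\|$ with $C$ independent of $t$. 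The key algebraic fact is $[i\pi(F_\uv),|p|^s] = i\pi([F_\uv,|p|^s]) = -i\,\phi(|k|^s f\chi_\uv\,(\,\cdot\,))$ up to the $i$ from $\pi$ — more precisely the commutator of $|p|^s$ with the multiplication-type operator $F_\uv$ produces the field operator with symbol $(|k|^s f\chi_\uv)e^{-ikx}$, as recorded in Fact~1 of the previous proof and in Lemma~\ref{lm:SpecEstGenAnnihilOp}. So the middle factor is a field operator $\phi(G)$ with $G_x(k) = |k|^sf(k)\chi_\uv(k)e^{-ikx}$, and by Lemma~\ref{cor:SpecialEstAnnihilOp} we have $\|\phi(G)\,R^2\cdot R^{-1}\| \ldots$; concretely $\|(d\Gamma(\bodisp)+1)^{-1/2}\phi(G)(d\Gamma(\bodisp)+1)^{-1/2}\| \leq 2\||k|^sf\chi_\uv\bodisp^{-1/2}\|$, which already produces the first factor on the right-hand side.

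The remaining issue — and this is the main obstacle — is that $R$ does not commute with $e^{-i\pi(F_\uv)t}$, so after moving $R$ next to the field operator one must still commute the outer resolvents through the Weyl operators. Here I would use Lemma~\ref{lm:transformDGamma} (or its infinitesimal form): conjugating $d\Gamma(\bodisp)$ by $e^{i\pi(F_\uv)}$ adds $\phi(\bodisp F_\uv) + \|\bodisp^{1/2}F_\uv\|^2$, so $e^{i\pi(F_\uv)t}(1+d\Gamma(\bodisp))e^{-i\pi(F_\uv)t}$ is comparable to $1+d\Gamma(\bodisp)$ with a multiplicative constant controlled by $\|\bodisp^{1/2}f\chi_\uv\|$; this is exactly where the factor $(1+\|\bodisp^{1/2}f\chi_\uv\|)$ enters. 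Estimating this relative bound carefully — using that $\phi(\bodisp F_\uv)$ is form-bounded with respect to $d\Gamma(\bodisp)$ with small relative bound plus a constant $\propto\|\bodisp^{1/2}f\chi_\uv\|^2$, and that the $\phi$-term is odd so one can absorb it — yields $\|R\,e^{i\pi(F_\uv)t}(1+d\Gamma(\bodisp))^{1/2}\| \lesssim (1+\|\bodisp^{1/2}f\chi_\uv\|)^{1/2}$ uniformly in $t$, and similarly on the other side. Combining the three pieces — the $\|\cdot\|$-bound on the field operator sandwiched by $(d\Gamma(\bodisp)+1)^{-1/2}$, and the two resolvent-commutation bounds, each contributing half of the $(1+\|\bodisp^{1/2}f\chi_\uv\|)$ — and integrating the $t$-independent bound over $t\in[0,1]$ gives the claimed inequality on the core $D(H_0^{1/2})\cap\hilbert_0$; density of this core for $(1+d\Gamma(\bodisp))^{1/2}$ (Fact~2 above) then extends it to the full operator norm.
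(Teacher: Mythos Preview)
Your strategy is essentially the paper's: the integral representation
\[
[|p|^s,U_{f,\uv}^{*}]=\int_0^1 e^{-i\pi(F_\uv)(1-t)}\,[i\pi(F_\uv),|p|^s]\,e^{-i\pi(F_\uv)t}\,dt
\]
followed by a relative bound on the middle commutator and a resolvent--Weyl commutation via Corollary~\ref{cor:transformSquRootDGammaSandwich}. The issue is the claimed exponent $\tfrac12$ in ``$\|R\,e^{i\pi(F_\uv)t}\,R^{-1}\|\lesssim(1+\|\bodisp^{1/2}f\chi_\uv\|)^{1/2}$''. This is false: for the vacuum $\Omega$ one has $\|R^{-1}e^{-i\pi(F_\uv)t}\Omega\|^2=1+t^2\|\bodisp^{1/2}f\chi_\uv\|^2$ while $\|R^{-1}\Omega\|=1$, so the operator norm is at least $(1+\|\bodisp^{1/2}f\chi_\uv\|^2)^{1/2}$, which is linear in $\|\bodisp^{1/2}f\chi_\uv\|$ for large values. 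The form-bound argument you sketch from Lemma~\ref{lm:transformDGamma} also gives at best a constant $C\sim(1+\|\bodisp^{1/2}f\chi_\uv\|)^2$, hence $C^{1/2}\sim 1+\|\bodisp^{1/2}f\chi_\uv\|$. Consequently, sandwiching $R\,\phi(G)\,R$ between two such factors yields
\[
2\,\||k|^{s}f\chi_\uv\bodisp^{-1/2}\|\,(1+\|\bodisp^{1/2}f\chi_\uv\|)^{2},
\]
not the stated bound.

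The paper recovers the linear factor by splitting $i\pi(F_\uv)=a(F_\uv)-a^{*}(F_\uv)$ \emph{before} inserting resolvents. For the annihilation term one writes
\[
R\,e^{-i\pi(F_\uv)(1-t)}\cdot a([|p|^s,F_\uv])\,R\cdot R^{-1}e^{-i\pi(F_\uv)t}R,
\]
and uses that the first factor has norm $\le1$ (no commutation needed), the second is $\le\||k|^{s}f\chi_\uv\bodisp^{-1/2}\|$ by Lemma~\ref{lm:SpecEstGenAnnihilOp}, and only the third requires Corollary~\ref{cor:transformSquRootDGammaSandwich}. The creation term is handled by duality. Thus each term carries a single factor $(1+\|\bodisp^{1/2}f\chi_\uv\|)$, and the sum gives the coefficient $2$. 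Your argument is correct up to this refinement; as written it proves the lemma with the squared factor, which, incidentally, would still suffice for the applications in Lemma~\ref{lm:domainIntersec3D4DLimitExists}.
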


\noindent
\emph{Remark:} With only one of the factors $(1+d\Gamma(\bodisp))^{-1/2}$ on the right hand side of the inequality,  a similar bound involving the norm $\|\abs{k}^sf\chi_\uv\|$ is true. Such a bound is not good enough for our purpose.

\begin{proof}
Let $F=F_\uv$ for notational simplicity. On $D(|p|^s)$ in the sense of quadratic forms we have
\begin{align*}
   &|p|^s e^{-i\pi(F)} - e^{-i\pi(F)} |p|^s = \int_0^1 e^{-i\pi(F)(1-t)}[i\pi(F),|p|^s] e^{-i\pi(F)t}\, dt\\
   &= \int_0^1 e^{-i\pi(F)(1-t)}a([|p|^s,F]) e^{-i\pi(F)t}\, dt - \int_0^1 e^{-i\pi(F)(1-t)}a^{*}([F,|p|^s]) e^{-i\pi(F)t}\, dt.
\end{align*}
Using now Corollary~\ref{cor:transformSquRootDGammaSandwich} and 
Lemma~\ref{lm:SpecEstGenAnnihilOp} the desired estimate easily follows.
\end{proof}

Lemma~\ref{lm:domainIntersec3D4DLimitExists} 
and Corollary~\ref{lm:domainIntersec3D4DHelpLemma}, below, 
are the main tools for the proof of Theorem~\ref{thm:domainIntersec3D4D}.

\begin{lemma}
\label{lm:domainIntersec3D4DLimitExists}
Assume $f\in L^2_{loc}(\R^d)$ and in addition that 
$\bodisp^{1/2}f$, $\abs{k}^sf\bodisp^{-1/2}\in L^2(\R^d)$ for some $s\in [0,1]$.
Then, the following limit exists: 
\begin{align*}
   \limop{\uv}\left(1+d\Gamma(\bodisp)\right)^{-1/2}\abs{p}^sU_{f,\uv}^* (1+H_0)^{-1/2}.
\end{align*}
\end{lemma}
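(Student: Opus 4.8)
The plan is to establish the existence of the limit by a Cauchy-sequence argument in operator norm, exploiting the fact that all the relevant operators differ only through the cutoff functions $\chi_\uv$, which converge to $1$ in the appropriate weighted $L^2$ spaces. First I would write, for $\uv_1 < \uv_2 < \infty$,
\begin{align*}
   &(1+d\Gamma(\bodisp))^{-1/2}\abs{p}^s\left(U_{f,\uv_2}^* - U_{f,\uv_1}^*\right)(1+H_0)^{-1/2} \\
   &\quad = (1+d\Gamma(\bodisp))^{-1/2}\abs{p}^sU_{f,\uv_1}^*\left(U_{f,\uv_1}U_{f,\uv_2}^* - \ID\right)(1+H_0)^{-1/2},
\end{align*}
and note that $U_{f,\uv_1}U_{f,\uv_2}^* = e^{-i\pi(F_{\uv_2}-F_{\uv_1})}$ by the Weyl relations (valid here since the imaginary part of the inner product $\sprod{F_{\uv_1,x}}{F_{\uv_2,x}}$ vanishes by the symmetry in $\formFac$, or can be absorbed into an irrelevant phase). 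The operator $F_{\uv_2}-F_{\uv_1}$ is supported in $\uv_1 \le |k| \le \uv_2$, so the relevant weighted norms $\norm{\abs{k}^s(f\chi_{\uv_2}-f\chi_{\uv_1})\bodisp^{-1/2}}$ and $\norm{\bodisp^{1/2}(f\chi_{\uv_2}-f\chi_{\uv_1})}$ tend to $0$ as $\uv_1,\uv_2\to\infty$, by dominated convergence using the hypotheses $\abs{k}^sf\bodisp^{-1/2}\in L^2$ and $\bodisp^{1/2}f\in L^2$.

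Next I would commute $\abs{p}^s$ past $U_{f,\uv_1}^*$. Writing $\abs{p}^sU_{f,\uv_1}^* = U_{f,\uv_1}^*\abs{p}^s + [\abs{p}^s,U_{f,\uv_1}^*]$ and inserting factors $(1+d\Gamma(\bodisp))^{\pm1/2}$, the commutator term is controlled by Lemma~\ref{commute-p-Weyl}, which gives
\begin{align*}
   \left\|(1+d\Gamma(\bodisp))^{-1/2}[\abs{p}^s,U_{f,\uv_1}^*](1+d\Gamma(\bodisp))^{-1/2}\right\| \le 2\,\norm{\abs{k}^sf\chi_{\uv_1}\bodisp^{-1/2}}\left(1+\norm{\bodisp^{1/2}f\chi_{\uv_1}}\right),
\end{align*}
which is \emph{bounded uniformly in $\uv_1$} under the hypotheses. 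The leftover factor $U_{f,\uv_1}^*\abs{p}^s$ multiplied by $(U_{f,\uv_1}U_{f,\uv_2}^*-\ID)$ on the right is handled by first moving $\abs{p}^s$ through the small Weyl operator $U_{f,\uv_1}U_{f,\uv_2}^*-\ID$ using the same commutator identity with cutoff $\uv_2-\uv_1$ — this contributes a term that is small by the estimate of the previous paragraph — and the remaining piece is of the form (bounded)$\cdot(U_{f,\uv_1}U_{f,\uv_2}^*-\ID)(1+H_0)^{-1/2}$ together with $\abs{p}^s(1+H_0)^{-1/2}$, which is bounded. One then uses that $U_{f,\uv_1}U_{f,\uv_2}^*\to\ID$ strongly (Lemma~\ref{lm:strongWeyl}) together with the compactness-free trick of factoring out one more resolvent power, or simply estimates $\norm{(U_{f,\uv_1}U_{f,\uv_2}^*-\ID)(1+H_0)^{-1/2}}$ directly via $\norm{(e^{-i\pi(G)}-\ID)\Psi}\le \norm{\pi(G)\Psi}$ and Lemma~\ref{cor:SpecialEstAnnihilOp}, with $G = F_{\uv_2}-F_{\uv_1}$, giving a bound by $\norm{(F_{\uv_2}-F_{\uv_1})\bodisp^{-1/2}}\to 0$.

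Assembling these pieces shows that the net $\uv\mapsto (1+d\Gamma(\bodisp))^{-1/2}\abs{p}^sU_{f,\uv}^*(1+H_0)^{-1/2}$ is Cauchy in operator norm as $\uv\to\infty$, hence convergent, since $\BB(\hilbert)$ is complete. The main obstacle I anticipate is bookkeeping the commutator when $\abs{p}^s$ must be pushed through \emph{two} Weyl factors (the one with cutoff $\uv_1$ and the difference one with cutoffs $\uv_1,\uv_2$): one needs each resulting term to come with a matching pair of number-operator resolvents so that Lemma~\ref{commute-p-Weyl} and Corollary~\ref{cor:transformSquRootDGammaSandwich} apply — in particular using that conjugation by $U_{f,\uv}$ changes $(1+d\Gamma(\bodisp))^{1/2}$ only by a bounded factor (Corollary~\ref{cor:transformSquRootDGammaSandwich}) to reshuffle the resolvents freely. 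Once that is set up correctly, every term is either uniformly bounded times something tending to zero, or manifestly tending to zero, and the proof concludes.
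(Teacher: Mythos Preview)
Your approach is essentially the paper's: factor $U_{f,\uv_2}^*-U_{f,\uv_1}^*=(U_{f,\uv_1,\uv_2}^*-1)U_{f,\uv_1}^*$, commute $|p|^s$ through the Weyl operators, and control everything with Lemma~\ref{commute-p-Weyl} and Corollary~\ref{cor:transformSquRootDGammaSandwich}. The paper organises it a bit more cleanly by first splitting
\[
   R_\bodisp\,|p|^s U_{f,\uv}^*(1+H_0)^{-1/2}
   = R_\bodisp\,U_{f,\uv}^*|p|^s(1+H_0)^{-1/2} + R_\bodisp\,[|p|^s,U_{f,\uv}^*](1+H_0)^{-1/2},
\]
observing that the first summand is Cauchy in norm directly, and then proving that $R_\bodisp[|p|^s,U_{f,\uv}^*]R_\bodisp$ is Cauchy. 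Your ``remaining piece'' $R_\bodisp U_{f,\uv_1}^*(U_{f,\uv_1,\uv_2}^*-1)|p|^s(1+H_0)^{-1/2}$ is exactly the Cauchy increment of that first summand (since $U_{f,\uv_1}^*(U_{f,\uv_1,\uv_2}^*-1)=U_{f,\uv_2}^*-U_{f,\uv_1}^*$), so no ``compactness-free trick'' is needed: Lemma~\ref{lm:strongWeyl} already gives the \emph{norm} estimate $\|(U_{f,\uv_1,\uv_2}^*-1)R_\bodisp\|\to 0$, and by taking adjoints $\|R_\bodisp(U_{f,\uv_2}^*-U_{f,\uv_1}^*)\|\to 0$, which is all you need here. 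The other two terms you produce are the same as the paper's two terms in \eqref{equ:limExistsCalc}, just with the order of the factors $U_{f,\uv_1}^*$ and $(U_{f,\uv_1,\uv_2}^*-1)$ reversed.
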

\begin{proof}
Let $R_{\bodisp} := (1+d\Gamma(\bodisp))^{-1/2}$ and for $0\leq\uv_1,\uv_2 <\infty$ let
\begin{align*}
   U_{f,\uv_1,\uv_2}^*:=e^{-i\pi(F_{\uv_2}-F_{\uv_1})}=e^{-i\pi(F\chi_{\uv_2}(1-\chi_{\uv_1}))}.
\end{align*} 
From Lemma~\ref{lm:strongWeyl}, from the assumptions on $f$, 
and from the boundedness of $|p|^s(1+H_0)^{-1/2}$ it is clear that the limit 
$\limop{\uv} R_{\bodisp} U_{f,\uv}^* \abs{p}^s(H_0+1)^{-1/2}$
exists. It remains to show that $R_{\bodisp} [\abs{p}^s, U_{f,\uv}^* ] (H_0+1)^{-1/2}$ has a 
limit as $\uv \to \infty$. In fact, we will show that 
\begin{equation}
\label{limit-to-prove}
      \lim_{\uv\to\infty} R_{\bodisp} \left[\abs{p}^s, U_{f,\uv}^* \right] R_{\bodisp}
\end{equation}
exists. To this end we use that $[\pi(F_{\uv_1}),\pi(F_{\uv_2})]=0$ and hence, 
\begin{equation}
\label{BCH}
       U_{f,\uv_2}^* -  U_{f,\uv_1}^* = ( U_{f,\uv_2}^* U_{f,\uv_1} -1) U_{f,\uv_1}^* =  ( U_{f,\uv_1,\uv_2}^* -1) U_{f,\uv_1}^*.
\end{equation}
From \eqref{BCH}, it follows that
\begin{align}
   &R_{\bodisp} \left[\abs{p}^s, U_{f,\uv_2}^{*} \right] R_{\bodisp} - R_{\bodisp} \left[\abs{p}^s, U_{f,\uv_1}^* \right] R_{\bodisp} \notag \\
   &= R_{\bodisp} \left[\abs{p}^s, U_{f,\uv_1,\uv_2}^* \right]  U_{f,\uv_1}^*  R_{\bodisp} + R_{\bodisp} ( U_{f,\uv_1,\uv_2}^* -1) \left[\abs{p}^s, U_{f,\uv_1}^{*} \right] R_{\bodisp} \notag \\
   &= R_{\bodisp} \left[\abs{p}^s, U_{f,\uv_1,\uv_2}^* \right]  R_{\bodisp} \cdot R_{\bodisp}^{-1} U_{f,\uv_1}^*  R_{\bodisp} \notag \\
   \label{equ:limExistsCalc} 
   &\phantom{=====}+ R_{\bodisp} ( U_{f,\uv_1,\uv_2}^* -1)R_{\bodisp}^{-1} \cdot R_{\bodisp} \left[\abs{p}^s, U_{f,\uv_1}^{*} \right] R_{\bodisp} .
\end{align}
where the identities $R_{\bodisp} \cdot R_{\bodisp}^{-1} =1 = R_{\bodisp}^{-1} \cdot R_{\bodisp}$ have been inserted in the last equation.
By Lemma \ref{commute-p-Weyl} and by Corollary~\ref{cor:transformSquRootDGammaSandwich},
\begin{eqnarray}
      R_{\bodisp} \left[\abs{p}^s, U_{f,\uv_1,\uv_2}^* \right]  R_{\bodisp}    &\to & 0 \\
      R_{\bodisp} ( U_{f,\uv_1,\uv_2}^* -1)R_{\bodisp}^{-1} &\to & 0
\end{eqnarray} 
as $\uv_1,\uv_2\to \infty$, while 
\begin{eqnarray}
   \sup_{\uv} \| R_{\bodisp}^{-1} U_{f,\uv}^*  R_{\bodisp}  \| &<&\infty\\ 
   \label{equ:limExistsUniBound} \sup_{\uv} \| R_{\bodisp} \left[\abs{p}^s, U_{f,\uv}^{*} \right] R_{\bodisp} \| &<&\infty. 
\end{eqnarray}
From~\eqref{equ:limExistsCalc} to~\eqref{equ:limExistsUniBound}, 
it follows that the limit \eqref{limit-to-prove} exists.
\end{proof}
\begin{corollary}
\label{lm:domainIntersec3D4DHelpLemma} 
Suppose the hypotheses of Lemma~\ref{lm:domainIntersec3D4DLimitExists} are
satisfied for some $s\in [0,1]$, and moreover, that $U_f^*\Psi\in D(|p|^s)$ 
for some $\Psi\in D(H_0^{1/2})$. Then, 
\begin{align*}
   \left(1+d\Gamma(\bodisp)\right)^{-1/2}\abs{p}^sU_{f,\uv}^*\Psi \rightarrow \left(1+d\Gamma(\bodisp)\right)^{-1/2}\abs{p}^sU_{f}^*\Psi \qquad (\uv\rightarrow\infty).
\end{align*}
\end{corollary}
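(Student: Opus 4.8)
The plan is to combine the two facts already established in Lemma~\ref{lm:domainIntersec3D4DLimitExists} and Lemma~\ref{commute-p-Weyl}: that the sandwiched operator $R_\bodisp \abs{p}^s U_{f,\uv}^* (1+H_0)^{-1/2}$ converges as $\uv\to\infty$, and that $U_{f,\uv}^*\Psi \to U_f^*\Psi$ strongly (Lemma~\ref{lm:strongWeyl}). Writing $R_\bodisp := (1+d\Gamma(\bodisp))^{-1/2}$ and $\Phi := (1+H_0)^{1/2}\Psi \in \hilbert$, the sequence $R_\bodisp \abs{p}^s U_{f,\uv}^*\Psi = \bigl(R_\bodisp \abs{p}^s U_{f,\uv}^* (1+H_0)^{-1/2}\bigr)\Phi$ therefore converges in $\hilbert$ to some limit $\xi$. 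The content of the corollary is to identify $\xi$ with $R_\bodisp \abs{p}^s U_f^*\Psi$, which is meaningful precisely because of the extra hypothesis $U_f^*\Psi \in D(\abs{p}^s)$.

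First I would fix the identification by testing against a dense set. For any $\Phi \in D(\abs{p}^s) \cap D(R_\bodisp^{-1})$ — for instance $\Phi \in D(H_0^{1/2})\cap\hilbert_0$, which is a core for both $\abs{p}^s$ and $R_\bodisp^{-1}$ — one has
\begin{align*}
   \sprod{R_\bodisp \abs{p}^s U_{f,\uv}^* \Psi}{\Phi}
   = \sprod{U_{f,\uv}^* \Psi}{\abs{p}^s R_\bodisp \Phi}.
\end{align*}
On the right-hand side, $U_{f,\uv}^*\Psi \to U_f^*\Psi$ strongly by Lemma~\ref{lm:strongWeyl} while the test vector $\abs{p}^s R_\bodisp \Phi$ is fixed, so the right-hand side converges to $\sprod{U_f^*\Psi}{\abs{p}^s R_\bodisp \Phi}$. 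Since $U_f^*\Psi \in D(\abs{p}^s)$ by hypothesis and $\abs{p}^s$ is self-adjoint, this equals $\sprod{R_\bodisp \abs{p}^s U_f^*\Psi}{\Phi}$. On the left-hand side, the limit is $\sprod{\xi}{\Phi}$. As $\Phi$ ranges over a dense set, $\xi = R_\bodisp \abs{p}^s U_f^*\Psi$, which is what we wanted.

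The only point requiring a little care — and the main (though still mild) obstacle — is justifying the first displayed equality, i.e.\ that $\abs{p}^s$ and $R_\bodisp$ pass across the inner product legitimately: both are self-adjoint, both commute (one acts on the particle variable, the other is a function of $d\Gamma(\bodisp)$), and one must check that $U_{f,\uv}^*\Psi \in D(\abs{p}^s R_\bodisp) = D(\abs{p}^s)\cap D(R_\bodisp^{-1})^{\perp\!\!\perp}$... more simply, that $R_\bodisp\abs{p}^s U_{f,\uv}^*\Psi$ is a genuine vector, which is exactly the boundedness of $R_\bodisp\abs{p}^s U_{f,\uv}^*(1+H_0)^{-1/2}$ from Lemma~\ref{lm:domainIntersec3D4DLimitExists} (or directly from Lemma~\ref{commute-p-Weyl} together with $\abs{p}^s(1+H_0)^{-1/2}$ bounded and $U_{f,\uv}^*$ unitary). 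With $\uv<\infty$ fixed, $U_{f,\uv}^*$ maps $D(H_0^{1/2})$ into itself (the Gross transform with a finite cutoff is a bounded perturbation in the relevant sense, cf.\ Lemma~\ref{commute-p-Weyl}), so $U_{f,\uv}^*\Psi \in D(\abs{p}^s)$; combined with $\Psi$, hence $U_{f,\uv}^*\Psi$, lying in $D(d\Gamma(\bodisp)^{1/2})$, the adjoint relation above is legitimate. Once this bookkeeping is in place the argument is just the strong-convergence-against-fixed-test-vector trick above, and no further estimates are needed.
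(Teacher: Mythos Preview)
Your argument is correct and follows the same overall structure as the paper's proof: use Lemma~\ref{lm:domainIntersec3D4DLimitExists} to get that $R_\bodisp\abs{p}^s U_{f,\uv}^*\Psi$ converges to some limit $\xi$, and then identify $\xi$ with $R_\bodisp\abs{p}^s U_f^*\Psi$ using the extra hypothesis $U_f^*\Psi\in D(\abs{p}^s)$.

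The only difference is in how the identification is carried out. The paper uses the closedness of $\abs{p}^s$ directly: since $R_\bodisp U_{f,\uv}^*\Psi\to R_\bodisp U_f^*\Psi$ and $\abs{p}^s R_\bodisp U_{f,\uv}^*\Psi$ converges (this is the same limit, using $R_\bodisp\abs{p}^s\subset\abs{p}^sR_\bodisp$), closedness gives $R_\bodisp U_f^*\Psi\in D(\abs{p}^s)$ and identifies the limit as $\abs{p}^sR_\bodisp U_f^*\Psi$; the hypothesis $U_f^*\Psi\in D(\abs{p}^s)$ then lets one rewrite this as $R_\bodisp\abs{p}^sU_f^*\Psi$. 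You instead test against a dense set of vectors and pass to the weak limit. Both are standard and equally short; the closedness argument is perhaps a touch more direct since it avoids choosing a test space, while your testing argument makes it transparent exactly where the hypothesis $U_f^*\Psi\in D(\abs{p}^s)$ is used (namely, to move $\abs{p}^s$ back across the inner product at the end). A minor point: the fact that $U_{f,\uv}^*$ preserves $D(H_0^{1/2})$ for finite $\uv$ is Lemma~\ref{lm:transformPGen} rather than Lemma~\ref{commute-p-Weyl}, though with a compactly supported $f\chi_\uv$ either route works.
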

\begin{proof}
Let $R_{\bodisp}:=\left(1+d\Gamma(\bodisp)\right)^{-1/2}$ and note that $R_{\bodisp}\abs{p}^s\subset\abs{p}^s R_{\bodisp}$.
We know that $U_{f,\uv}^*\Psi\rightarrow U_{f}^*\Psi$ as $\uv\rightarrow\infty$
and that $U_{f,\uv}^*\Psi\in D(H_0^{1/2})\subset D(\abs{p}^s)$ by Lemma~\ref{lm:transformPGen}.
It follows that 
\begin{align*}
   R_{\bodisp}U_{f,\uv}^*\Psi\rightarrow R_{\bodisp}U_{f}^*\Psi\qquad (\uv\to\infty),
\end{align*}
where $\limop{\uv}\abs{p}^sR_{\bodisp}U_{f,\uv}^*\Psi$ exists,
by Lemma~\ref{lm:domainIntersec3D4DLimitExists}. Since $\abs{p}^s$ is a
closed operator, we conclude that $R_{\bodisp}U_{f}^*\Psi \in D(|p|^s)$ and that 
\begin{align*}
   \limop{\uv}\abs{p}^sR_{\bodisp}U_{f,\uv}^*\Psi=\abs{p}^sR_{\bodisp}U_{f}^*\Psi. 
\end{align*}
The corollary now follows from $R_{\bodisp}\abs{p}^s\subset\abs{p}^s R_{\bodisp}$ and from
the assumptions that $U_{f}^*\Psi\in D(|p|^s)$.
\end{proof}
%
\begin{theorem}
\label{thm:domainIntersec3D4D}
Assume $f$, $\bodisp^{1/2}f$, $f\bodisp^{-1/2}\in L^2(\R^d)$ and in addition that, for some $s\in [0,1]$, 
$\abs{k}^{s/2}f$, $\abs{k}^sf\bodisp^{-1/2}\in L^2(\R^d)$,
while $\abs{k}^sf\not\in L^2(\R^d)$. Then
\begin{align*}
   U_{f}^*D(H_0^{1/2})\cap D(\abs{p}^s)=\{ 0\}.
\end{align*}
\end{theorem}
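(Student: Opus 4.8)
The plan is to argue by contradiction: suppose $\Psi\in D(H_0^{1/2})$ with $U_f^*\Psi\in D(|p|^s)$ but $U_f^*\Psi\neq 0$. The idea is to extract from this the square-integrability of $|k|^sf$, contradicting the hypothesis. The starting point is Corollary~\ref{lm:domainIntersec3D4DHelpLemma}, which (under the stated hypotheses, noting that $|k|^{s/2}f\in L^2$ together with $f\in L^2$ gives $|k|^sf\bodisp^{-1/2}\in L^2$ when combined with the other assumptions, and in particular ensures Lemma~\ref{lm:domainIntersec3D4DLimitExists} applies) tells us that
\[
   R_\bodisp|p|^sU_{f,\uv}^*\Psi \longrightarrow R_\bodisp|p|^sU_f^*\Psi \qquad (\uv\to\infty),
\]
with $R_\bodisp=(1+d\Gamma(\bodisp))^{-1/2}$; in particular $\sup_\uv\|R_\bodisp|p|^sU_{f,\uv}^*\Psi\|<\infty$.

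First I would compute $|p|^sU_{f,\uv}^*\Psi$ explicitly for $\Psi$ in the core $D(H_0^{1/2})\cap\hilbert_0$, using the commutator expansion from the proof of Theorem~\ref{thm:domainSubset3D4D}:
\[
   |p|^sU_{f,\uv}^*\Psi = U_{f,\uv}^*|p|^s\Psi + \big[|p|^s,U_{f,\uv}^*\big]\Psi,
\]
and then write $[|p|^s,U_{f,\uv}^*] = \int_0^1 e^{-i\pi(F_\uv)(1-t)}\big(a([|p|^s,F_\uv]) - a^*([F_\uv,|p|^s])\big)e^{-i\pi(F_\uv)t}\,dt$. The key observation is that the creation-operator part $a^*([F_\uv,|p|^s])$ is the dangerous one: when sandwiched by $R_\bodisp$ on only one side its norm is controlled by $\||k|^sf\chi_\uv\|$, which \emph{diverges} as $\uv\to\infty$. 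So the contradiction must come from isolating precisely this creation term and showing that, unless $\Psi$ (and hence $U_f^*\Psi$) vanishes, its contribution to $R_\bodisp|p|^sU_{f,\uv}^*\Psi$ blows up while every other term stays bounded by the assumed finite norms ($\|f\|$, $\|\bodisp^{1/2}f\|$, $\||k|^sf\bodisp^{-1/2}\|$, $\||k|^{s/2}f\|$) via Lemma~\ref{commute-p-Weyl} and Corollary~\ref{cor:transformSquRootDGammaSandwich}.

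The mechanism I expect to use for the blow-up is the following. The leading divergent part of $R_\bodisp|p|^sU_{f,\uv}^*\Psi$ should be $R_\bodisp\, a^*(|k|^sF_\uv)\,U_{f,\uv}^*\Psi$, coming from the $t=0$ (or rather the outer) slot of the integral combined with the $n=1$ term of the Weyl expansion; the cross terms and the $a(\cdot)$ terms are harmless. One then decomposes this as $R_\bodisp a^*(|k|^sF_\uv)R_\bodisp\cdot R_\bodisp^{-1}U_{f,\uv}^*\Psi$ is not quite right since $a^*$ raises particle number, so more carefully one writes $a^*(|k|^sF_\uv)U_{f,\uv}^*\Psi = a^*(|k|^sF_\uv)U_f^*\Psi + a^*(|k|^sF_\uv)(U_{f,\uv}^*-U_f^*)\Psi$, and computes the $n$-particle norm. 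The point is that for fixed nonzero $U_f^*\Psi$ the quantity $\|R_\bodisp a^*(|k|^sF_\uv)U_f^*\Psi\|^2$ grows like $\int_{|k|\le\uv}|k|^{2s}|f(k)|^2\,dk \cdot \|(\text{low-particle part of }U_f^*\Psi)\|^2$ up to $\bodisp$-weights absorbed by $R_\bodisp$ — and since $|k|^sf\notin L^2$ this diverges, contradicting boundedness. The main obstacle, and where the real work lies, is making this separation of the divergent creation term from the (uniformly bounded) remainder rigorous: one must check that the $R_\bodisp$ on the left genuinely tames the $\bodisp$-growth (using that $a^*$ adds one mode, so $R_\bodisp a^*(g) = a^*(g)(1+d\Gamma(\bodisp)+\bodisp(k))^{-1/2}$-type manipulations reduce the weight to $\bodisp^{-1/2}$ only if $g=\bodisp^{1/2}\cdot(\ldots)$, which is \emph{not} the case here for $g=|k|^sf$), so in fact one expects the divergence to survive precisely because $|k|^sf\bodisp^{-1/2}\in L^2$ is \emph{not} assumed — wait, it \emph{is} assumed; so the divergence must instead be quadratic in the sense that the doubly-sandwiched quantities are fine but the singly-sandwiched one, which genuinely appears because $U_{f,\uv}^*\Psi$ itself is being hit, is not. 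Resolving this subtlety — pinning down exactly which sandwiching survives and proving the surviving term is asymptotically $\gtrsim \||k|^sf\chi_\uv\|\cdot\|\Psi_{\mathrm{low}}\|$ with $\Psi_{\mathrm{low}}\neq 0$ forced by $U_f^*\Psi\neq 0$ — is the crux of the argument and the step I would spend the most care on.
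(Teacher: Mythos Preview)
Your strategy has a genuine gap, and you actually put your finger on it when you wrote ``wait, it \emph{is} assumed''. With the \emph{fixed} regulariser $R_\bodisp=(1+d\Gamma(\bodisp))^{-1/2}$ there is simply no divergence to extract: your own starting observation --- that Corollary~\ref{lm:domainIntersec3D4DHelpLemma} gives \emph{convergence} of $R_\bodisp|p|^sU_{f,\uv}^*\Psi$ --- already tells you that the full vector is uniformly bounded in $\uv$, so once you show the direct and annihilation pieces are bounded (they are), the creation piece is bounded too, as their difference. Concretely, pull-through $R_\bodisp a_k^{*}=a_k^{*}(1+d\Gamma(\bodisp)+\bodisp(k))^{-1/2}$ hands the new boson a weight $\bodisp(k)^{-1/2}$, and the hypothesis $|k|^s f\bodisp^{-1/2}\in L^2$ then tames the creation term. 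You cannot get a contradiction from a quantity that you have just proved converges.

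The missing idea, which the paper supplies, is to replace $R_\bodisp$ by the family $R_\eps:=(1+\eps\,d\Gamma(\bodisp))^{-1/2}$ and study the \emph{double} limit
\[
   \|\,|p|^s U_f^{*}\Psi\,\| \;=\; \lim_{\eps\to 0}\,\lim_{\uv\to\infty}\,\|R_\eps\,|p|^sU_{f,\uv}^{*}\Psi\|,
\]
justified by Corollary~\ref{lm:domainIntersec3D4DHelpLemma} and monotone convergence. At fixed $\eps>0$ all pieces make sense; one normal-orders the creation contribution and, after several further splittings (commuting $U_{f,\uv}^{*}(1-t)$ past $a_k^{*}$, squaring, pull-through), is left with a term bounded below by
\[
   \int_{|k|\le\uv_0} |f(k)|^2\,\big\|(1+\eps(d\Gamma(\bodisp)+\bodisp(k)))^{-1/2}(|p-k|^s-|p|^s)U_{f}^{*}\Psi\big\|^2\,dk
\]
up to uniformly bounded errors (this is where $|k|^{s/2}f\in L^2$ and Lemma~\ref{lm:SpecImpEstForDGammaGenLadderOp} are used). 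Only \emph{after} $\eps\to 0$ does the $\bodisp(k)$-suppression vanish; using $(|p-k|^s-|p|^s)^2\ge |k|^{2s}-4|p|^{s}|k|^{s}$ the lower bound then becomes $\||k|^s f\chi_{\uv_0}\|^2\|\Psi\|^2$ minus a term controlled by $\||k|^{s/2}f\|^2\,\|\,|p|^{s/2}U_f^{*}\Psi\|^2$ (finite by Theorem~\ref{thm:domainSubset3D4D} at exponent $s/2$). Letting $\uv_0\to\infty$ forces $\Psi=0$. The $\eps$-device is not cosmetic; it is precisely what lets you remove the protective $R_\bodisp$ and expose the non-$L^2$ norm $\||k|^s f\|$.
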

\begin{proof}
Let $\Psi\in D(H_0^{1/2})$ and suppose that $U_{f}^*\Psi\in D(\abs{p}^s)$ for some $s\in [0,1]$ 
for which the hypotheses of the theorem are satisfied. Then, of course, 
$\norm{\abs{p}^sU_{f}^*\Psi}<\infty$. We will show that this is not true unless $\Psi=0$. 
Our proof is based on the identity  
\begin{align*}
   \norm{\abs{p}^sU_{f}^*\Psi}=\limop[0]{\eps}\limop{\uv}\norm{(1+\eps d\Gamma(\bodisp))^{-1/2}\abs{p}^sU_{f,\uv}^*\Psi},
\end{align*}
which follows from Corollary~\ref{lm:domainIntersec3D4DHelpLemma} and monotone convergence. Concerning the necessity of the regularization in terms of the operator $(1+\eps d\Gamma(\bodisp))^{-1/2}$ see the remark following Lemma~\ref{commute-p-Weyl} and recall that $\abs{k}^sf\not\in L^2(\R^d)$ by assumption.
We are now going to write the vector $(1+\eps d\Gamma(\bodisp))^{-1/2}\abs{p}^sU_{f,\uv}^*\Psi$ 
as sum of terms, where all but one have a norm that is uniformly bounded in $\eps$ and $\uv$. 
We call them "good terms". The norm of the remaining term diverges unless $\Psi =0$. This 
will complete the proof.

Using
\begin{align}   \label{equ:domainIntersec3D4DCommutator}
   \abs{p}^sU_{f,\uv}^*=U_{f,\uv}^*\abs{p}^s+\int_{0}^{1} dt\ U_{f,\uv}^*(1-t)\ [i\pi(F_{\uv}),\abs{p}^s]\ U_{f,\uv}^*(t),
\end{align}
where $U_{f,\uv}^*(t):=e^{-i\pi(F_{\uv})t}$ (note that $U_{f,\uv}^*(1)=U_{f,\uv}^*$ and $i\pi(F_{\uv})=a(F_{\uv})-a^*(F_{\uv})$), 
we obtain three summands:
\begin{align}
   &(1+\eps d\Gamma(\bodisp))^{-1/2}\abs{p}^sU_{f,\uv}^*\Psi \notag \\
   &=(1+\eps d\Gamma(\bodisp))^{-1/2}U_{f,\uv}^*\abs{p}^s\Psi+(1+\eps d\Gamma(\bodisp))^{-1/2}\int_{0}^{1} dt\ U_{f,\uv}^*(1-t)a([F_{\uv},\abs{p}^s])U_{f,\uv}^*(t)\Psi \notag \\
   \label{equ:domainIntersec3D4DToDiverge1} &\phantom{====}-(1+\eps d\Gamma(\bodisp))^{-1/2}\int_{0}^{1} dt\ U_{f,\uv}^*(1-t)a^*([F_{\uv},\abs{p}^s])U_{f,\uv}^*(t)\Psi.
\end{align}
The first two terms are good terms. Indeed, the norm of the first summand can 
directly be estimated by $\norm{\abs{p}^s\Psi}$, which is finite. Using  
Lemmas~\ref{lm:SpecEstGenAnnihilOp} and~\ref{lm:transformSquRootDGamma} the norm 
of the second term can be estimated as follows
\begin{align*}
   &\Big\|(1+\eps d\Gamma(\bodisp))^{-1/2}\int_{0}^{1}dt\ U_{f,\uv}^*(1-t)a([F_{\uv},\abs{p}^s])U_{f,\uv}^*(t)\Psi\Big\| \notag \\
   &\leq\int_{0}^{1} dt\ \norm{a([F_{\uv},\abs{p}^s])U_{f,\uv}^*(t)\Psi } \leq\int_{0}^{1} dt\ \norm{\frac{\abs{k}^sF_{\uv,x}}{\sqrt{\bodisp}}}\ \norm{d\Gamma(\bodisp)^{1/2}U_{f,\uv}^*(t)\Psi} \notag \\
   &\leq\norm{\frac{\abs{k}^sf}{\sqrt{\bodisp}}}\left(1+\norm{\bodisp^{1/2}f}\right)\ \norm{(1+d\Gamma(\bodisp))^{1/2}\Psi},
\end{align*}
where the supremum with respect to $t\in [0,1]$ and $\uv>0$ was taken 
in the last step. 

It remains to show the divergence of the norm of the third term 
of~\eqref{equ:domainIntersec3D4DToDiverge1}. Using that $e^{ikx}\abs{p}^se^{-ikx}=\abs{p-k}^s$, this term reads
\begin{align}
\label{equ:domainIntersec3D4DToDiverge1b} 
   (1+\eps d\Gamma(\bodisp))^{-\frac{1}{2}}\int_{0}^{1} dt\ \int dk\ F_{\uv,x}(k)U_{f,\uv}^*(1-t)a_k^*A_{p,k}U_{f,\uv}^*(t)\Psi,
\end{align}
where we defined
\begin{align*}
   A_{p,k}:=\abs{p-k}^s-\abs{p}^s
\end{align*}
for short. Recall, that 
\begin{align}
\label{equ:domainIntersec3D4DPEst}
   \abs{A_{p,k}}=\abs{\abs{p-k}^s-\abs{p}^s}\leq\abs{k}^s
\end{align}
for $p,k\in\R^d$ and $s\in [0,1]$, and that
\begin{align}
\label{equ:domainIntersec3D4DCommRel}
   U_{f,\uv}^*(1-t)a_k^*U_{f,\uv}(1-t)=a_k^*-(1-t)\overline{F_{\uv,x}(k)}.
\end{align}
Identity~\eqref{equ:domainIntersec3D4DCommRel} is used now to commute $U_{f,\uv}^*(1-t)$ and $a_k^*$. 
Then, \eqref{equ:domainIntersec3D4DToDiverge1b} becomes
\begin{align}
   &(1+\eps d\Gamma(\bodisp))^{-\frac{1}{2}}\int_{0}^{1} dt \int dk\ F_{\uv,x}(k)a_k^*U_{f,\uv}^*(1-t)A_{p,k}U_{f,\uv}^*(t)\Psi \notag \\
   \label{equ:domainIntersec3D4DToDiverge2} &\phantom{} -(1+\eps d\Gamma(\bodisp))^{-\frac{1}{2}}\int_{0}^{1} dt \int dk (1-t)\abs{F_{\uv}(k)}^2U_{f,\uv}^*(1-t)A_{p,k}U_{f,\uv}^*(t)\Psi,
\end{align}
where we use $\abs{F_{\uv}(k)}:=\abs{F_{\uv,x}(k)}$ to point out, that this value does not depend on $x$ anymore. 
We will do the same for norms containing an $F$. 
The second term of~\eqref{equ:domainIntersec3D4DToDiverge2} is another good term, due 
to~\eqref{equ:domainIntersec3D4DPEst}, its norm is bounded by
\begin{align*}
   &\int_{0}^{1} dt\ (1-t) \int dk\ \abs{F_{\uv}(k)}^2 \norm{A_{p,k}U_{f,\uv}^*(t)\Psi} \notag \\
   &\leq \int_{0}^{1} dt\ (1-t) \int dk\ \abs{k}^{s}\abs{F_{\uv}(k)}^2 \norm{U_{f,\uv}^*(t)\Psi}\leq \norm{\abs{k}^{s/2}f}^2 \norm{\Psi},
\end{align*}
which is finite by the assumptions of the theorem. 

We continue to analyze the first term of~\eqref{equ:domainIntersec3D4DToDiverge2}
and show its divergence. For short, define 
$\eta_{\uv,p,k}(t):=U_{f,\uv}^*(1-t)A_{p,k}U_{f,\uv}^*(t)\Psi$. We consider 
the squared norm of this first term of~\eqref{equ:domainIntersec3D4DToDiverge2}, write
it as an inner product, commute the ladder operators, and use the pull-through formulas for 
$d\Gamma(\bodisp)$. This calculation reads
\begin{align}
   &\Big\|(1+\eps d\Gamma(\bodisp))^{-1/2}\int_{0}^{1} dt\ \int dk\ F_{\uv,x}(k)a_k^*\eta_{\uv,p,k}(t)\Big\|^2 \notag \\
   &=\int_{0}^{1} dt \int_{0}^{1} dt' \int dk \int dk'\ \overline{F_{\uv,x}(k)} F_{\uv,x}(k') \sprod{a_k^*\eta_{\uv,p,k}(t)}{(1+\eps d\Gamma(\bodisp))^{-1}a_{k'}^*\eta_{\uv,p,k'}(t')} \notag \\
   &=\int_{0}^{1} dt \int_{0}^{1} dt' \int dk \int dk'\ \overline{F_{\uv,x}(k)} F_{\uv,x}(k') \notag \\
   &\phantom{=====}\sprod{a_{k'}\eta_{\uv,p,k}(t)}{(1+\eps (d\Gamma(\bodisp)+\bodisp(k)+\bodisp(k')))^{-1}a_{k}\eta_{\uv,p,k'}(t')} \notag \\
   \label{equ:domainIntersec3D4DToDiverge3} &\phantom{=}+\int_{0}^{1} dt \int_{0}^{1} dt' \int dk\ \abs{F_{\uv}(k)}^2\sprod{\eta_{\uv,p,k}(t)}{(1+\eps (d\Gamma(\bodisp)+\bodisp(k)))^{-1}\eta_{\uv,p,k}(t')} \\
   &=: \alpha + \beta, \notag
\end{align}
where $\alpha$ and $\beta$ denotes the two resulting summands. 

The term $\alpha$ is a good term as we now show: by Cauchy-Schwarz
\begin{align}
\label{equ:domainIntersec3D4DToConverge} 
   \abs{\alpha} \leq\int_{0}^{1} dt \int_{0}^{1} dt' \int dk \int dk'\ \abs{F_{\uv}(k)}\ \abs{F_{\uv}(k')}\ \norm{a_{k'}\eta_{\uv,p,k}(t)}\ \norm{a_{k}\eta_{\uv,p,k'}(t')}.
\end{align}
By definition of $\eta_{\uv,p,k}(t)$, by~\eqref{equ:domainIntersec3D4DPEst} 
and by the identity
\begin{align*}
   U_{f,\uv}(t)a_kU_{f,\uv}^*(t)=a_k+t\cdot F_{\uv,x}(k),
\end{align*}
used twice, we obtain
\begin{align}
   \norm{a_{k'}\eta_{\uv,p,k}(t)}&\leq\abs{k}^s\norm{a_{k'}U_{f,\uv}^*(t)\Psi}+\abs{k}^s\abs{F_{\uv}(k')}\ \norm{\Psi} \notag \\
   \label{equ:domainIntersec3D4DToConvergeSideCalc} &\leq\abs{k}^s\norm{a_{k'}\Psi}+\abs{k}^s\abs{F_{\uv}(k')}\ \norm{\Psi}.
\end{align}
We now multiply and divide $\norm{a_{k'}\Psi}$ by $\sqrt{\bodisp(k')}$,
analogously for the momentum $k$,
and insert~\eqref{equ:domainIntersec3D4DToConvergeSideCalc} 
into~\eqref{equ:domainIntersec3D4DToConverge}. Using Cauchy-Schwarz
we obtain
\begin{align*}
   \abs{\alpha} &\leq\norm{\frac{\abs{k}^sF_{\uv}}{\sqrt{\bodisp}}}^2\norm{d\Gamma(\bodisp)^{1/2}\Psi}^2+\norm{\abs{k}^{s/2}F_{\uv}}^4\norm{\Psi}^2 \notag \\
                &\phantom{=====}+2\norm{\frac{\abs{k}^sF_{\uv}}{\sqrt{\bodisp}}}\ \norm{\abs{k}^{s/2}F_{\uv}}^2\ \norm{d\Gamma(\bodisp)^{1/2}\Psi}\ \norm{\Psi} .
\end{align*}
This remains finite in the limit $\uv\rightarrow\infty$ by the 
assumptions of the theorem. Thus, $\alpha$ from 
Equation~\eqref{equ:domainIntersec3D4DToDiverge3} is a good term and the 
divergence has to be in the second term of~\eqref{equ:domainIntersec3D4DToDiverge3},
we called it $\beta$. 

It remains to show that $\beta$ diverges. To this end, let
\begin{align*}
   R_{\eps}(k):=\left(1+\eps\left(d\Gamma(\bodisp)+\bodisp(k)\right)\right)^{-1/2}
\end{align*}
for short. Then,
\begin{align*}
   \beta &=\int dk \abs{F_{\uv}(k)}^2\Big\|R_{\eps}(k)\int_{0}^{1} dt\ U_{f,\uv}^*(1-t)A_{p,k}U_{f,\uv}^*(t)\Psi\Big\|^2 \notag \\
         &=\int dk \abs{F_{\uv}(k)}^2\bigg\|R_{\eps}(k)A_{p,k}U_{f,\uv}^*\Psi+R_{\eps}(k)\int_{0}^{1} dt\ \bigg[U_{f,\uv}^*(1-t),A_{p,k}\bigg]U_{f,\uv}^*(t)\Psi\bigg\|^2. 
\end{align*}
Note, that $U_{f,\uv}^*(1-t)U_{f,\uv}^*(t)=U_{f,\uv}^*$ and, therefore, the first summand in the norm does 
not depend on $t$ anymore. If we define
\begin{align}
\label{equ:domainIntersec3D4DToDiverge4Beta1} \beta_1 &:=\int dk\ \abs{F_{\uv}(k)}^2\norm{R_{\eps}(k)A_{p,k}U_{f,\uv}^*\Psi}^2, \\
\label{equ:domainIntersec3D4DToDiverge4Beta2} \beta_2 &:=\int dk\ \abs{F_{\uv}(k)}^2\bigg\|R_{\eps}(k)\int_{0}^{1} dt\ \bigg[U_{f,\uv}^*(1-t),A_{p,k}\bigg]U_{f,\uv}^*(t)\Psi\bigg\|^2,
\end{align}
then $\beta\geq\frac{1}{2}\beta_1-2\beta_2$. To complete the proof, we
show that $\beta_2$ is a good term, while $\beta_1$ diverges.

For $\beta_2$, using Lemma~\ref{lm:SpecImpEstForDGammaGenLadderOp}, 
Corollary~\ref{cor:transformSquRootDGammaSandwich}, and a representation 
of the commutator analogously to~\eqref{equ:domainIntersec3D4DCommutator}, we find
\begin{align}
   \beta_2 &=\int d\ell \abs{F_{\uv}(\ell)}^2\bigg\|R_{\eps}(\ell)\int_{0}^{1} dt \int_{0}^{1} dr 
             (1-t)U_{f,\uv}^*(1-t)\bigg[A_{p,\ell},i\pi(F_{\uv})\bigg]U_{f,\uv}^*(1-r+tr)\Psi\bigg\|^2 \notag \\
   &\leq\int d\ell \abs{F_{\uv}(\ell)}^2\left(\int_{0}^{1} dt \int_{0}^{1} dr (1-t)
             \norm{\pi(\left[A_{p,\ell},F_{\uv}\right])U_{f,\uv}^*(1-r+tr)\Psi}\right)^2 \notag \\
   &\leq\int d\ell \abs{F_{\uv}(\ell)}^2\Big(\int_{0}^{1} dt \int_{0}^{1} dr (1-t)\cdot
             4\abs{\ell}^{s/2}\max\left\{\norm{\abs{k}^{s/2}f\chi_{\uv}},\norm{\abs{k}^{s/2}f\bodisp^{-1/2}\chi_{\uv}}\right\} \notag \\
   &\phantom{\int d\ell \abs{F_{\uv}(\ell)}^2.......}\left(1+\norm{\bodisp^{1/2}f\chi_{\uv}}(1-r+tr)\right)
             \norm{(1+d\Gamma(\bodisp))^{1/2}\Psi}\Big)^2 \notag \\
   &\leq 8\int d\ell \abs{\ell}^s \abs{F_{\uv}(\ell)}^2
         \max\left\{\norm{\abs{k}^{s/2}f\chi_{\uv}},\norm{\abs{k}^{s/2}f\bodisp^{-1/2}\chi_{\uv}}\right\} \notag \\
   &\phantom{\int d\ell\ \abs{F_{\uv}(\ell)}^2.......}\left(1+\norm{\bodisp^{1/2}f\chi_{\uv}}\right)
             \norm{(1+d\Gamma(\bodisp))^{1/2}\Psi}^2 \notag \\
   &\leq 8\norm{\abs{k}^{\frac{s}{2}}f}^2
         \max\left\{\norm{\abs{k}^{s/2}f},\norm{\abs{k}^{s/2}f\bodisp^{-1/2}}\right\}
         \left(1+\norm{\bodisp^{1/2}f}\right)^2\norm{(1+d\Gamma(\bodisp))^{\frac{1}{2}}\Psi}^2. \notag
\end{align}
This is finite by the hypotheses on $f$ and by
Assumption~$\assDisp$. In the last calculation, we used the
integration variable $\ell$ instead of $k$ for avoiding 
misunderstandings in view of the application of
Lemma~\ref{lm:SpecImpEstForDGammaGenLadderOp}.

It remains to analyze $\beta_1$ from Equation~\eqref{equ:domainIntersec3D4DToDiverge4Beta1}.
Using 
\begin{align*}
   (\abs{p-k}^s-\abs{p}^s)^2\geq (\abs{k}^{2s}-4\abs{p}^{s}\abs{k}^{s}) \qquad (s\in [0,1]),
\end{align*}
we obtain
\begin{align}
   \beta_1 &=\int dk \abs{F_{\uv}(k)}^2\sprod{R_{\eps}(k)U_{f,\uv}^*\Psi}{A_{p,k}^2R_{\eps}(k)U_{f,\uv}^*\Psi} \notag \\
   &\geq\int dk \abs{k}^{2s}\abs{F_{\uv}(k)}^2\norm{R_{\eps}(k)U_{f,\uv}^*\Psi}^2 
   \label{equ:domainIntersec3D4DToDiverge5} -4\int dk \abs{k}^{s}\abs{F_{\uv}(k)}^2\norm{R_{\eps}(k)\abs{p}^{\frac{s}{2}}U_{f,\uv}^*\Psi}^2.
\end{align}
The second term of~\eqref{equ:domainIntersec3D4DToDiverge5} is a good term, because,
by Fatou, Corollary~\ref{lm:domainIntersec3D4DHelpLemma}, and monotone convergence
\begin{align*}
   &\limsupop[0]{\eps}\ \limsupop{\uv}\int dk\ \abs{k}^{s}\abs{F_{\uv}(k)}^2\norm{R_{\eps}(k)\abs{p}^{s/2}U_{f,\uv}^*\Psi}^2 \notag \\
   &\leq\limsupop[0]{\eps}\int dk\ \abs{k}^{s}\abs{f(k)}^2\norm{R_{\eps}(k)\abs{p}^{s/2}U_{f}^*\Psi}^2 \notag \\
   &\leq\norm{\abs{k}^{s/2}f}^2\norm{\abs{p}^{s/2}U_{f}^*\Psi}^2,
\end{align*}
which is finite by our assumptions.

Finally, we look at the first term of~\eqref{equ:domainIntersec3D4DToDiverge5}.
For any $\uv_0\leq\uv$, by Fatou, Corollary~\ref{lm:domainIntersec3D4DHelpLemma}, 
and monotone convergence
\begin{align*}
   &\liminfop[0]{\eps}\ \liminfop{\uv}\int dk \abs{k}^{2s}\abs{F_{\uv}(k)}^2\norm{R_{\eps}(k)U_{f,\uv}^*\Psi}^2  \notag \\
   &\geq\liminfop[0]{\eps}\ \liminfop{\uv}\int_{\abs{k}\leq\uv_0} dk\ \abs{k}^{2s}\abs{f(k)}^2\norm{R_{\eps}(k)U_{f,\uv}^*\Psi}^2 \notag \\
   &\geq\liminfop[0]{\eps}\int_{\abs{k}\leq\uv_0} dk\ \abs{k}^{2s}\abs{f(k)}^2\norm{R_{\eps}(k)U_{f}^*\Psi}^2 \notag \\
   &=\norm{\abs{k}^sf\chi_{\uv_0}}^2\norm{\Psi}^2.
\end{align*}
Since $\norm{\abs{k}^sf\chi_{\uv_0}}$ diverges as $\uv_0\rightarrow\infty$,
we conclude that $\norm{\Psi}=0$ and the proof is complete.
\end{proof}
%

\section{The Nelson model in dimensions $d\leq 3$}
\label{sec:Nelson-low}

We now turn to the Nelson model in space dimensions $d\leq 3$, where 
\begin{align*}
   \f(k)=\frac{1}{\sqrt{\bodisp(k)}}\quad\text{with}\quad\bodisp(k)=\sqrt{m^2+k^2}.
\end{align*}
If $d=3$, then the Hypothesis $\assDisp$ and $\formFac - \regTwo$ are satisfied for any $m\geq 0$.
By Theorem~\ref{thm:ExistenceOfNRLimit}, the norm-resolvent limit $H$ of $H_{\uv}+E_{\uv}$ exists and its form domain is 
$U_{\ir,\infty}^*D(H_0^{1/2})$ for $\ir$ sufficiently large. By
Lemma~\ref{lm:transformSquRootDGamma} we know that 
$U_{\ir,\infty}^*D(d\Gamma(\bodisp)^{1/2})=D(d\Gamma(\bodisp)^{1/2})$ 
and hence $U_{\ir,\infty}^*D(H_0^{1/2})\subset D(d\Gamma(\bodisp)^{1/2})$. On the other hand,
Theorem~\ref{thm:domainSubset3D4D} shows that $U_{\ir,\infty}^*D(H_0^{1/2})\subset D(\abs{p}^s)$ for any $s<1$. This proves \eqref{equ:introResultDim3Inclus} in the introduction. Equation \eqref{equ:introResultDim3Intersec} follows from Theorem~\ref{thm:domainIntersec3D4D} with $s=1$.

If $d\in \{1,2\}$, then we need $m>0$ for $ \regTwo$ to hold. In fact, if $m=0$ then even the regularized Hamiltonian $H_\Lambda$ is unbounded below.
As far as $\regThr$ is concerned, we now have the stronger property
$$
      \int_{\R^d}\frac{|\f(k)|^2}{1+k^2}\, dk < \infty.
$$
It follows that $|k|B_x(k)$ is square-integrable for $d\leq 2$, which, by making a mild additional decay assumption allows us to prove that 
$D(H) = U_{\infty}^{*} D(H_0)$ along the line of arguments given in~\cite{GriesemerWuensch} 
The critical exponent $(5-d)/2$ in following theorem can now 
be understood on the basis of the heuristics argument given in the introduction for the regularity 
preserved by the Gross transform $U_{\infty}^{*} $. A detailed proof is given in \cite{DissWuensch2017}.
\begin{theorem}
For the massive Nelson Hamiltonian $H$ in dimensions $d\leq 2$ we have 
$$
   D(H) \subset \bigcap\limits_{0\leq s < (5-d)/2} D(\abs{p}^s)\cap D(d\Gamma(\bodisp))
$$
and 
$$
   D(H) \cap D(\abs{p}^{(5-d)/2})=\{0\}.
$$
\end{theorem}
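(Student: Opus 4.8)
The plan is to reduce the statement to mapping properties of the Gross transform and then to extend Theorems~\ref{thm:domainSubset3D4D} and~\ref{thm:domainIntersec3D4D} from exponents $s\in[0,1]$ to $s\in[0,2]$. For $d\leq 2$ the dressed Hamiltonian $H_{\ir,\infty}'$ is self-adjoint on $D(H_0)$ --- this is the content of the argument of~\cite{GriesemerWuensch} alluded to above, carried out in~\cite{DissWuensch2017} --- so that $D(H)=U_{\ir,\infty}^*D(H_0)=U_{\ir,\infty}^*\bigl(D(p^2)\cap D(d\Gamma(\bodisp))\bigr)$. Writing $f=B_{\ir,\infty}$, one has $f(k)=O(\abs{k}^{-5/2})$ as $\abs{k}\to\infty$, since $\f=\bodisp^{-1/2}$ and $\bodisp(k)+k^2=O(k^2)$; this decay pins down the critical exponent, because $\abs{k}^sf\in L^2(\R^d)$ if and only if $s<(5-d)/2$. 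The inclusion $D(H)\subset D(d\Gamma(\bodisp))$ is the easy half: as $\bodisp f$ and $\bodisp^{1/2}f$ lie in $L^2(\R^d)$ for $d\leq 2$, the transform identity for $d\Gamma(\bodisp)$ extends from $D(H_0)$ to all of $D(d\Gamma(\bodisp))$, so $U_{\ir,\infty}^*D(d\Gamma(\bodisp))=D(d\Gamma(\bodisp))$.

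For the inclusion into $\bigcap_{0\leq s<(5-d)/2}D(\abs{p}^s)$ I would argue as follows. For $s\leq 1$ this is immediate from Theorem~\ref{thm:domainSubset3D4D}, since $\abs{k}^sf$ and $\abs{k}^sf\bodisp^{-1/2}$ are then square integrable. For $s\in(1,(5-d)/2)$ I would write $\abs{p}^s=\abs{p}^{s-1}\abs{p}$ with $s-1\in(0,1)$ and apply the subordination formula $\abs{p}^{2(s-1)}=c_s\int_0^\infty\lambda^{s-2}\abs{p}^2(\abs{p}^2+\lambda)^{-1}\,d\lambda$ to the vector $\abs{p}\,U_f^*\Phi$ for $\Phi\in D(H_0)$. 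Conjugating the bounded operator $\abs{p}^2(\abs{p}^2+\lambda)^{-1/2}$ through $U_f^*$ and estimating the commutator by the method of Lemma~\ref{commute-p-Weyl}, one gets a bound in terms of $\norm{(1+d\Gamma(\bodisp))^{1/2}\Phi}$ and $\norm{\abs{p}(1+d\Gamma(\bodisp))^{1/2}\Phi}$ --- the latter finite because $D(H_0)\subset D(\abs{p}\,d\Gamma(\bodisp)^{1/2})$, as $\abs{p}^2$ and $d\Gamma(\bodisp)$ commute --- carrying a weight that is $O(1)$ as $\lambda\to 0$ and $O(\lambda^{-1/2})$ as $\lambda\to\infty$, so that the $\lambda$-integral converges precisely because $1<s<2$. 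Together with $U_f^*\Phi\in D(\abs{p})$ and $\Phi\in D(p^2)\subset D(\abs{p}^s)$ this gives $\abs{p}\,U_f^*\Phi\in D(\abs{p}^{s-1})$, i.e.\ $U_f^*\Phi\in D(\abs{p}^s)$.

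Finally, for $D(H)\cap D(\abs{p}^{(5-d)/2})=\{0\}$, put $s_c=(5-d)/2\in[3/2,2]$; since $D(H)=U_{\ir,\infty}^*D(H_0)$, it suffices to show $U_f^*\Psi\notin D(\abs{p}^{s_c})$ for every $\Psi\in D(H_0)\setminus\{0\}$, and I would re-run the proof of Theorem~\ref{thm:domainIntersec3D4D} at the exponent $s=s_c\in[1,2]$. Two ingredients of that proof must be upgraded. First, in the estimates of the good terms $\alpha$, $\beta_2$ and of the second term of~\eqref{equ:domainIntersec3D4DToDiverge5} the bound $\bigl|\,\abs{p-k}^s-\abs{p}^s\,\bigr|\leq\abs{k}^s$ is replaced by $\bigl|\,\abs{p-k}^s-\abs{p}^s\,\bigr|\leq C_s(\abs{k}\,\abs{p}^{s-1}+\abs{k}^s)$; the extra $\abs{p}^{s-1}$ is harmless because $U_{f,\uv}^*(t)\Psi$ stays in $D(\abs{p}^{s-1}d\Gamma(\bodisp)^{1/2})$ uniformly in $t$ and $\uv$ --- using $\Psi\in D(H_0)\subset D(\abs{p}\,d\Gamma(\bodisp)^{1/2})$, interpolation and commutator estimates, which is exactly where the identification $D(H)=U_{\ir,\infty}^*D(H_0)$ (and not merely $D(H)\subset U_{\ir,\infty}^*D(H_0^{1/2})$) is needed --- and because $\abs{k}^{s-1}f\bodisp^{-1/2}$ and $\abs{k}^{(s+1)/2}f$ are square integrable for $s\leq s_c$. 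Second, the lower bound $(\abs{p-k}^s-\abs{p}^s)^2\geq\abs{k}^{2s}-4\abs{p}^s\abs{k}^s$ is replaced by $(\abs{p-k}^s-\abs{p}^s)^2\geq\abs{k}^{2s}-C_s\abs{p}^{s-1}\abs{k}^{s+1}$ (which for $s=2$ is $(\abs{k}^2-2p\cdot k)^2\geq\abs{k}^4-4\abs{p}\,\abs{k}^3$), whose correction still yields a good term because $\abs{k}^{(s+1)/2}f\in L^2(\R^d)$ and $U_f^*\Psi\in D(\abs{p}^{(s-1)/2})$ by the inclusion already proved. With these changes the argument shows that $\norm{\abs{p}^{s_c}U_f^*\Psi}<\infty$ forces $\norm{\abs{k}^{s_c}f\chi_{\uv_0}}\,\norm{\Psi}$ to remain bounded as $\uv_0\to\infty$; but $\abs{k}^{s_c}f(k)$ behaves like $\abs{k}^{-d/2}$, which is exactly on the boundary of $L^2(\R^d)$, so the first factor diverges and $\Psi=0$. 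The main obstacle is this last step --- establishing the modified lower bound on $(\abs{p-k}^s-\abs{p}^s)^2$ with a correction that is still integrable against $\abs{f}^2$ at $s=s_c$, and keeping track of the $\abs{p}^{s-1}$-weighted terms in the operator-domain setting. The full details are carried out in~\cite{DissWuensch2017}.
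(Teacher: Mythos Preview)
The paper does not actually prove this theorem: after the heuristic discussion in Section~\ref{sec:Nelson-low} it simply states that ``A detailed proof is given in \cite{DissWuensch2017}''. So there is no proof in the paper to compare against. What the paper does indicate is the overall strategy --- for $d\leq 2$ the stronger decay $\int|\f|^2(1+k^2)^{-1}dk<\infty$ holds, hence $|k|B_{\ir,\infty}\in L^2$, the dressed Hamiltonian is self-adjoint on $D(H_0)$ by the methods of \cite{GriesemerWuensch}, and one then analyses $U_{\ir,\infty}^*D(H_0)$ --- and your outline follows exactly this route. Your identification of the decay $f(k)=O(|k|^{-5/2})$, the resulting critical exponent $(5-d)/2$, the use of $D(H)=U_{\ir,\infty}^*D(H_0)$ rather than merely $U_{\ir,\infty}^*D(H_0^{1/2})$, and the need to push the commutator estimates from $s\in[0,1]$ to $s\in[1,2]$ are all correct and are precisely what the paper's heuristic points to.

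One remark on the inclusion argument: your subordination-formula route in the second paragraph is a legitimate way to reach $s\in(1,2)$, and your claimed $\lambda$-behaviour of the commutator is right (using $|g_\lambda'(r)|\leq\min\{3,2r/\sqrt\lambda\}$ for $g_\lambda(r)=r^2(r^2+\lambda)^{-1/2}$), but it is somewhat roundabout. The more direct path --- and presumably the one taken in \cite{DissWuensch2017}, mirroring \cite{GriesemerWuensch} --- is to rerun the proof of Theorem~\ref{thm:domainSubset3D4D} itself with the $s\in[1,2]$ commutator bound $\big|\,|p+k|^s-|p|^s\big|\leq C_s(|k|\,|p|^{s-1}+|k|^s)$ plugged into an analogue of Lemma~\ref{lm:SpecEstGenAnnihilOp}; the extra factor $|p|^{s-1}$ is then absorbed by $\Psi\in D(H_0)\subset D(|p|\,d\Gamma(\bodisp)^{1/2})$, exactly as you use it in the $\beta_2$ and $\alpha$ estimates. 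For the intersection part your modifications are the right ones, and your final observation that $|k|^{s_c}f\sim|k|^{-d/2}$ is exactly on the boundary of $L^2(\R^d)$ correctly pinpoints why the divergent term really diverges.
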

%

\appendix

\section{Quadratic forms and resolvent convergence}   
\label{app:AmariLikeTheorem}

The following theorem contains the abstract part behind Theorem~\ref{thm:ExistenceOfNRLimit}. It agrees with Theorem~A.1 of~\cite{GriesemerWuensch} with the exception of additional statement that is needed here. There are similar theorems by Nelson~\cite{Nelson1964} and Ammari~\cite{AmmariZied2000}.

\begin{theorem}
\label{thm:AmariLikeTheorem}
Let $H_0\geq 0$ be a self-adjoint operator in the Hilbert space $\hilbert$ and 
let $\norm{\Psi}_0:=\norm{(H_0+1)^{1/2}\Psi}$ for $\Psi\in D(H_0^{1/2})$. For each 
$\uv<\infty$ let $W_{\uv}$ be a quadratic form defined on $D(H_0^{1/2})$ such that
\begin{itemize}
\item[(a)] for all $\Psi\in D(H_0^{1/2})$ and all $\uv<\infty$,
\begin{align*}
   \abs{W_{\uv}(\Psi)}\leq a\norm{\Psi}_0^2+b_{\uv}\norm{\Psi}^2,
\end{align*}
where $a<1$,
\item[(b)] for all $\Psi\in D(H_0^{1/2})$,
\begin{align*}
   \abs{W_{\uv}(\Psi)-W_{\uv'}(\Psi)}\leq C_{\uv,\uv'}\norm{\Psi}_0^2,
\end{align*}
where $C_{\uv,\uv'}\rightarrow 0$ as $\uv,\uv'\rightarrow\infty$.
\end{itemize}
Let $W_{\infty}(\Phi,\Psi):=\limop{\uv}W_{\uv}(\Phi,\Psi)$. Then, the following statements hold true:
\begin{itemize}
\item[(i)] Statement (a) extends to $\uv=\infty$ with some finite $b_{\infty}$, and for each $\uv\leq\infty$, there 
exists a self-adjoint, semibounded operator $H_{\uv}$ with $D(H_{\uv})\subset D(H_0^{1/2})$ and 
\begin{align*}
   \sprod{\Phi}{H_{\uv}\Psi}=\sprod{H_0^{1/2}\Phi}{H_0^{1/2}\Psi}+W_{\uv}(\Phi,\Psi)
\end{align*}
for all $\Phi\in D(H_0^{1/2})$ and $\Psi\in D(H_{\uv})$. 
\item[(ii)] For all $z\in\C\backslash\R$, 
\begin{align*}
   (H_{\uv}-z)^{-1}\rightarrow (H_{\infty}-z)^{-1} \qquad (\uv\rightarrow\infty)
\end{align*}
in the operator norm. 
\item[(iii)] If $U_{\uv}$, $0<\uv\leq \infty$, is a one-parameter family of unitary operator with $(H_0+1)^{-1/2}(U_{\uv} - U_{\infty})\to 0$ as $\uv\to\infty$, then for all $z\in\C\backslash\R$,
\begin{align*}
   (U_{\uv}^*H_{\uv}U_{\uv}-z)^{-1} \rightarrow (U_{\infty}^*H_{\infty}U_{\infty}-z)^{-1} \qquad (\uv\rightarrow\infty)
\end{align*}
in the operator norm.
\end{itemize}
\end{theorem}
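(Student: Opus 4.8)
The plan is to obtain parts (i) and (ii) in the same way as Theorem~A.1 of~\cite{GriesemerWuensch}, namely by the KLMN theorem and by writing the resolvent identity in the sense of quadratic forms, and then to deduce the new statement (iii) from (ii) together with the hypothesis $(H_0+1)^{-1/2}(U_\uv-U_\infty)\to0$. For (i) I would first note that, by (b) and polarization, the numbers $W_\uv(\Phi,\Psi)$ form a Cauchy net as $\uv\to\infty$ for every fixed $\Phi,\Psi\in D(H_0^{1/2})$, so $W_\infty(\Phi,\Psi)=\limop{\uv}W_\uv(\Phi,\Psi)$ exists; choosing $\uv_0$ so large that $\delta:=\sup_{\uv,\uv'\ge\uv_0}C_{\uv,\uv'}<1-a$ and estimating $\abs{W_\infty(\Psi)}\le\abs{W_{\uv_0}(\Psi)}+\abs{W_\infty(\Psi)-W_{\uv_0}(\Psi)}$ by means of the limiting form of (b), one gets $\abs{W_\infty(\Psi)}\le(a+\delta)\norm{\Psi}_0^2+b_{\uv_0}\norm{\Psi}^2$, and by the same argument $\abs{W_\uv(\Psi)}\le(a+\delta)\norm{\Psi}_0^2+b_{\uv_0}\norm{\Psi}^2$ for all $\uv\ge\uv_0$. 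Since $a+\delta<1$, for each $\uv\le\infty$ the form $t_\uv(\Phi,\Psi):=\sprod{H_0^{1/2}\Phi}{H_0^{1/2}\Psi}+W_\uv(\Phi,\Psi)$ on $D(H_0^{1/2})$ is closed and semibounded, so the KLMN theorem produces a self-adjoint operator $H_\uv$ with $D(H_\uv)\subset D(H_0^{1/2})$ and the asserted representation; moreover $\{H_\uv\}_{\uv\ge\uv_0}$ is uniformly semibounded and has a uniform relative form bound.

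For (ii), fix $z\in\C\setminus\R$ and restrict to $\uv,\uv'\ge\uv_0$. Given $\Phi,\Psi\in\HH$, put $u:=(H_\uv-\bar z)^{-1}\Phi$ and $v:=(H_{\uv'}-z)^{-1}\Psi$; both lie in $D(H_0^{1/2})$, and substituting the form representations of $H_\uv$ and $H_{\uv'}$ gives
\[
   \sprod{\Phi}{\big((H_\uv-z)^{-1}-(H_{\uv'}-z)^{-1}\big)\Psi}=W_{\uv'}(u,v)-W_\uv(u,v),
\]
the $H_0$-terms cancelling. By (b) and polarization the right-hand side is bounded by $C'_{\uv,\uv'}\norm{u}_0\norm{v}_0$ with $C'_{\uv,\uv'}\to0$; the identity $\norm{w}_0^2=\sprod{w}{(H_\uv+1)w}-W_\uv(w)$ applied to $w=(H_\uv-z)^{-1}\Phi$, combined with the uniform form bound from (i), yields $\norm{u}_0\le C(z)\norm{\Phi}$ and $\norm{v}_0\le C(z)\norm{\Psi}$ with $C(z)$ independent of $\uv\ge\uv_0$. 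Hence $\norm{(H_\uv-z)^{-1}-(H_{\uv'}-z)^{-1}}\le C(z)^2C'_{\uv,\uv'}\to0$, so $(H_\uv-z)^{-1}$ is norm-Cauchy, and the same computation with $\uv'=\infty$ identifies its limit as $(H_\infty-z)^{-1}$.

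For (iii), since every $U_\uv$ is unitary we have $(U_\uv^*H_\uv U_\uv-z)^{-1}=U_\uv^*(H_\uv-z)^{-1}U_\uv$, and with $R:=(H_\infty-z)^{-1}$ I would write
\begin{multline*}
   U_\uv^*(H_\uv-z)^{-1}U_\uv-U_\infty^*RU_\infty
   =U_\uv^*\big[(H_\uv-z)^{-1}-R\big]U_\uv\\
   {}+U_\uv^*R(U_\uv-U_\infty)+(U_\uv-U_\infty)^*RU_\infty.
\end{multline*}
The first term has norm $\norm{(H_\uv-z)^{-1}-R}\to0$ by (ii). Since $R$ maps $\HH$ into $D(H_\infty)\subset D(H_0^{1/2})$, and the same is true of $R^*=(H_\infty-\bar z)^{-1}$, both $(H_0+1)^{1/2}R$ and $R(H_0+1)^{1/2}:=\big((H_0+1)^{1/2}R^*\big)^*$ are everywhere-defined closed, hence bounded, operators on $\HH$. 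Consequently
\[
   R(U_\uv-U_\infty)=\big[R(H_0+1)^{1/2}\big]\big[(H_0+1)^{-1/2}(U_\uv-U_\infty)\big]\longrightarrow0
\]
in operator norm --- the first factor being a fixed bounded operator and the second tending to $0$ by hypothesis --- so that $U_\uv^*R(U_\uv-U_\infty)\to0$, and likewise $(U_\uv-U_\infty)^*RU_\infty=\big[(H_0+1)^{-1/2}(U_\uv-U_\infty)\big]^*\big[(H_0+1)^{1/2}R\big]U_\infty\to0$. Summing the three contributions gives (iii).

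I expect the main obstacle to lie in the bookkeeping behind (i)--(ii) rather than in (iii): because the constant $b_\uv$ in (a) is a priori $\uv$-dependent, one must use (b) to secure, for all large $\uv$, a single semiboundedness constant together with a uniform relative form bound, and this is exactly what makes the resolvents $(H_\uv-z)^{-1}$ uniformly bounded as maps from $\HH$ into $D(H_0^{1/2})$ and causes the estimate coming from (b) to collapse into norm convergence. Once that is in place, (iii) reduces to the short manipulation above, whose only input beyond (ii) is that $(H_\infty-z)^{-1}$ and its adjoint map $\HH$ into $D(H_0^{1/2})$ --- precisely the regularity that the hypothesis $(H_0+1)^{-1/2}(U_\uv-U_\infty)\to0$ is tailored to match.
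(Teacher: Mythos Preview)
Your proof is correct and follows essentially the same route as the paper. The paper simply cites \cite{GriesemerWuensch} for (i) and (ii), while you spell out the KLMN/resolvent-identity argument; for (iii) both proofs use the same three-term telescoping of $U_\uv^*R_\uv U_\uv - U_\infty^*R_\infty U_\infty$, the only difference being that the paper writes the first cross-term with $R_\uv$ (and hence invokes the \emph{uniform} bound on $(H_0+1)^{1/2}R_\uv(z)$), whereas your version uses $R_\infty$ in both cross-terms and therefore only needs the boundedness of $(H_0+1)^{1/2}R_\infty(z)$ and its adjoint---a cosmetic simplification, since the uniform bound is available anyway from part (i).
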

\begin{proof}
We only prove part (iii). Parts (i) and (ii) are proven in \cite{GriesemerWuensch}. From the first part of the theorem (and its proof) it follows that 
\begin{align*}
   H_0\leq\frac{2}{1-a}(H_{\uv}+M), \quad \uv_0\leq\uv\leq\infty,
\end{align*}
with constants $a\in (0,1)$ and $M>0$. This implies that $(H_0+1)^{1/2}(H_\uv-z)^{-1}$, for $z\in \C\backslash\R$, is bounded uniformly in $\uv\geq \uv_0$. Let $R_{\uv}(z):=(H_\uv-z)^{-1}$. Then, for $z\in \C\backslash\R$, 
\begin{eqnarray*}
   \lefteqn{(U^{*}_{\uv} H_{\uv}U_{\uv} - z)^{-1} -  (U^{*}_{\infty} H_{\infty}U_{\infty} - z)^{-1} }\\ &=&  U^{*}_{\uv} R_{\uv}(z)U_{\uv} - U^{*}_{\infty} R_{\infty}(z)U_{\infty} \\
 &=&   U^{*}_{\uv} R_{\uv}(z)(U_{\uv} - U_{\infty}) +  U^{*}_{\uv} (R_{\uv}(z) -  R_{\infty}(z))U_{\infty} + (U^{*}_{\uv} - U^{*}_{\infty}) R_{\infty}(z)U_{\infty}
\end{eqnarray*}
where all three terms vanish in the limit $\uv\to \infty$ by the assumption on $U_\uv$ combined with the boundedness of  $(H_0+1)^{1/2}R_\uv(z)$, and by the convergence $R_{\uv}(z)\to R_{\infty}(z)$ known from part (ii).
\end{proof}

\section{Creation and annihilation operators}   
\label{app:ladderOperators}

This section contains bounds on creation and annihilation operators 
relative to other, unbounded operators. Prototypical for many of the 
following bounds are
\begin{equation*}
      \|a(f) \Psi\| \leq \|f\|\cdot \|N^{1/2}\Psi\|,\qquad \|a^{*}(f) \Psi\| \leq \|f\|\cdot \|(N+1)^{1/2}\Psi\|,
\end{equation*} 
which follow immediately from the definitions of $a(f)$ and $a^{*}(f)$ 
and from the canonical commutation relations, see, e.g. \cite{GriesemerWuensch}.
As in main part of this paper, we will always assume that $\bodisp\in L_{loc}^{\infty}(\R^d)$ 
and that $\bodisp>0$ almost everywhere.

\begin{lemma}
\label{cor:SpecialEstAnnihilOp}
Let $\alpha\geq\frac{1}{2}$ and $f\in L^2(\R^d)$. 
Then, for all $\Psi\in D(d\Gamma(\bodisp)^{\alpha})$,
\begin{align*} 
   \norm{a(f)\Psi}  &\leq\norm{f\bodisp^{-\alpha}}\ \norm{d\Gamma(\bodisp)^{\alpha}\Psi}, \\
   \norm{a^*(f)\Psi}&\leq\max\left\{\norm{f},\norm{f\bodisp^{-\alpha}}\right\}\norm{(1+d\Gamma(\bodisp))^{\alpha}\Psi}.
\end{align*}
\end{lemma}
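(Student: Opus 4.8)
The plan is to reduce both estimates to the $n$-boson sectors and the pointwise operators $a_k$, prove them first for finite-particle vectors $\Psi\in\hilbert_0\cap D(d\Gamma(\bodisp)^\alpha)$ (on which all the manipulations below are elementary and rigorous), and then extend to all of $D(d\Gamma(\bodisp)^\alpha)$ by closedness of $a(f)$ and $a^*(f)$. The only non-routine ingredient is an elementary superadditivity inequality that is precisely where the hypothesis $\alpha\geq\tfrac12$ enters.

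For the annihilation bound I would start from $\norm{a(f)\Psi}\leq\int\abs{f(k)}\,\norm{a_k\Psi}\,dk$, write $\abs{f(k)}=\big(\abs{f(k)}\bodisp(k)^{-\alpha}\big)\bodisp(k)^{\alpha}$, and apply Cauchy--Schwarz to get $\norm{a(f)\Psi}^2\leq\norm{f\bodisp^{-\alpha}}^2\int\bodisp(k)^{2\alpha}\norm{a_k\Psi}^2\,dk$. Using the definition of $a_k$ and the permutation symmetry of $\Psi^{(n)}$, the last integral equals $\sum_{n\geq 1}\int\big(\sum_{i=1}^n\bodisp(k_i)^{2\alpha}\big)\,\abs{\Psi^{(n)}(k_1,\dots,k_n)}^2\,dk_1\cdots dk_n$. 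Now the key step: for nonnegative reals $c_1,\dots,c_n$ and $p\geq 1$ one has $\sum_{i=1}^n c_i^{\,p}\leq\big(\sum_{i=1}^n c_i\big)^{p}$; applying this with $p=2\alpha\geq 1$ and $c_i=\bodisp(k_i)$ bounds the symbol $\sum_i\bodisp(k_i)^{2\alpha}$ by $\big(\sum_i\bodisp(k_i)\big)^{2\alpha}$, which is exactly the symbol of $d\Gamma(\bodisp)^{2\alpha}$ on the $n$-boson sector. Summing over $n$ gives $\int\bodisp(k)^{2\alpha}\norm{a_k\Psi}^2\,dk\leq\norm{d\Gamma(\bodisp)^{\alpha}\Psi}^2$, which finishes the first inequality.

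For the creation bound I would invoke the canonical commutation relation in the form $\norm{a^*(f)\Psi}^2=\norm{a(f)\Psi}^2+\norm{f}^2\norm{\Psi}^2$, valid on $\hilbert_0$. Combining the sectorwise expression just obtained with $\norm{f}^2\norm{\Psi}^2=\norm{f}^2\sum_{n\geq 0}\int\abs{\Psi^{(n)}}^2$ and bounding both prefactors by $\max\{\norm{f}^2,\norm{f\bodisp^{-\alpha}}^2\}$, it remains to check $1+\sum_{i=1}^n\bodisp(k_i)^{2\alpha}\leq\big(1+\sum_{i=1}^n\bodisp(k_i)\big)^{2\alpha}$, which is the same elementary inequality applied to the $n+1$ nonnegative numbers $1,\bodisp(k_1),\dots,\bodisp(k_n)$. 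The right-hand side is the symbol of $(1+d\Gamma(\bodisp))^{2\alpha}$, so the sum over $n$ is $\norm{(1+d\Gamma(\bodisp))^{\alpha}\Psi}^2$, and taking square roots yields the claimed bound.

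Finally I would extend from $\Psi\in\hilbert_0\cap D(d\Gamma(\bodisp)^\alpha)$, which is a core of $d\Gamma(\bodisp)^\alpha$, to general $\Psi\in D(d\Gamma(\bodisp)^\alpha)$ by applying the estimates to $\chi(N\leq n)\Psi$ and using the closedness of $a(f)$ and $a^*(f)$ together with $d\Gamma(\bodisp)^\alpha\chi(N\leq n)\Psi\to d\Gamma(\bodisp)^\alpha\Psi$. The part that requires the most care is thus purely bookkeeping rather than an essential difficulty: justifying the passage between $a(f)$ (resp. $a^*(f)$) and its pointwise form $\int\overline{f(k)}a_k\,dk$ (resp. $\int f(k)a_k^*\,dk$), the Fubini/symmetrization identity for $\int\bodisp(k)^{2\alpha}\norm{a_k\Psi}^2\,dk$, and the density argument; the inequality that actually drives the proof is the trivial $\sum_i c_i^{\,p}\leq(\sum_i c_i)^{p}$ for $p\geq 1$, which fails for $p<1$ and hence pins down the threshold $\alpha=\tfrac12$.
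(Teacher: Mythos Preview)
Your proof is correct and follows essentially the same route as the paper: Cauchy--Schwarz on the pointwise representation of $a(f)$, then the operator inequality $d\Gamma(\bodisp^{2\alpha})\leq d\Gamma(\bodisp)^{2\alpha}$ for $2\alpha\geq 1$ (which you spell out sectorwise as $\sum_i c_i^{\,p}\leq(\sum_i c_i)^p$), and the CCR identity $\norm{a^*(f)\Psi}^2=\norm{a(f)\Psi}^2+\norm{f}^2\norm{\Psi}^2$ for the creation bound. The paper is simply more terse, stating the key step as the operator inequality without the $n$-particle unpacking and omitting the routine density argument you include at the end.
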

\begin{proof}
From $a(f)\Psi = \int \overline{f(k)}a_k\Psi\, dk$ and 
Cauchy-Schwarz it follows that 
\begin{align*}
      \|a(f)\Psi\| = \left\|\int \overline{f(k)}\bodisp(k)^{-\alpha}\bodisp(k)^{\alpha}a_k\Psi\, dk\right\|
                       \leq \| f\bodisp^{-\alpha}\| \sprod{\Psi}{d\Gamma(\bodisp^{2\alpha})\Psi}.
\end{align*}
The condition $2\alpha\geq 1$ implies that 
$d\Gamma(\bodisp^{2\alpha})\leq d\Gamma(\bodisp)^{2\alpha}$ and thus, 
the first inequality of the lemma is proved. The second one 
follows from the first and from
$\norm{a^*(f)\Psi}^2=\norm{a(f)\Psi}^2+\norm{f}^2\norm{\Psi}^2$.
\end{proof}

\begin{lemma}
\label{lm:ladderMapsOnDomOfSquSecQuant}
Suppose that $f, f\bodisp^{-1/2}\in L^2(\R^d)$. 
Then, for all $\Psi\in D(d\Gamma(\bodisp))$,
\begin{align*} 
   \norm{d\Gamma(\bodisp)^{1/2}a(f)\Psi} &\leq\norm{f\bodisp^{-1/2}}\ \norm{d\Gamma(\bodisp)\Psi}, \\
   \norm{d\Gamma(\bodisp)^{1/2}a^*(f)\Psi}&\leq \left(\norm{\bodisp^{1/2}f}+\norm{f\bodisp^{-1/2}}\right)
                                                     \norm{(1+d\Gamma(\bodisp))\Psi}.
\end{align*}
\end{lemma}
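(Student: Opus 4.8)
The plan is to reduce both inequalities to the basic bounds recalled at the start of the appendix, namely $\|a(g)\Phi\|\leq\|g\|\,\|N^{1/2}\Phi\|$ and $\|a^*(g)\Phi\|\leq\|g\|\,\|(N+1)^{1/2}\Phi\|$, applied with $\Phi = a^{\#}(f)\Psi$ and with a weight $\bodisp^{1/2}$ pulled through the Fock-space integral. Concretely, I would write $d\Gamma(\bodisp)^{1/2}a(f)\Psi = \int \overline{f(k)}\,d\Gamma(\bodisp)^{1/2}a_k\Psi\,dk$ and use the pull-through identity $d\Gamma(\bodisp)^{1/2}a_k = a_k\big(d\Gamma(\bodisp)+\bodisp(k)\big)^{1/2}$. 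Inserting a factor $\bodisp(k)^{-1/2}\bodisp(k)^{1/2}$ and applying Cauchy–Schwarz in $k$ then yields
\begin{align*}
   \norm{d\Gamma(\bodisp)^{1/2}a(f)\Psi}
   \leq \norm{f\bodisp^{-1/2}}\,\Big(\int \bodisp(k)\norm{a_k(d\Gamma(\bodisp)+\bodisp(k))^{1/2}\Psi}^2\,dk\Big)^{1/2}.
\end{align*}
The inner integral is $\sprod{\Psi}{d\Gamma(\bodisp)^{1/2}\,d\Gamma(\bodisp)\,d\Gamma(\bodisp)^{1/2}\Psi}$ up to handling the $\bodisp(k)$ shift, and one checks it is bounded by $\norm{d\Gamma(\bodisp)\Psi}^2$ using $\int\bodisp(k)a_k^*\,d\Gamma(\bodisp)\,a_k\,dk \le d\Gamma(\bodisp)^2$, which follows from the analogous one-particle inequality summed over the fibres. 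That gives the first bound.

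For the creation operator I would use the algebraic identity $\norm{a^*(f)\Phi}^2 = \norm{a(f)\Phi}^2 + \norm{f}^2\norm{\Phi}^2$ with $\Phi$ replaced appropriately, but more cleanly I would commute $d\Gamma(\bodisp)^{1/2}$ past $a^*(f)$ first. Writing $d\Gamma(\bodisp)^{1/2}a^*(f)\Psi = \int f(k)\,d\Gamma(\bodisp)^{1/2}a_k^*\Psi\,dk$ and using $d\Gamma(\bodisp)^{1/2}a_k^* = a_k^*\big(d\Gamma(\bodisp)+\bodisp(k)\big)^{1/2}$, I split the shifted weight as $(d\Gamma(\bodisp)+\bodisp(k))^{1/2}\leq d\Gamma(\bodisp)^{1/2}+\bodisp(k)^{1/2}$ (or bound $(a+b)^{1/2}\le a^{1/2}+b^{1/2}$), producing two pieces. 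The piece carrying $\bodisp(k)^{1/2}$ is controlled by $a^*(\bodisp^{1/2}f)$ acting on $\Psi$, giving $\norm{\bodisp^{1/2}f}\,\norm{(1+N)^{1/2}\Psi}\le\norm{\bodisp^{1/2}f}\,\norm{(1+d\Gamma(\bodisp))\Psi}$ since $N\le d\Gamma(\bodisp)\cdot(\inf\bodisp)^{-1}$ is not available in general — so instead I use $\norm{(1+N)^{1/2}\Psi}$ and the coarser bound $N\le (1+d\Gamma(\bodisp))$ is false too, hence I keep $d\Gamma(\bodisp)^{1/2}a_k^*$ intact and estimate the whole thing against $(1+d\Gamma(\bodisp))\Psi$ directly via $\int\bodisp(k)\,a_k^*(1+d\Gamma(\bodisp))a_k\,dk \le (1+d\Gamma(\bodisp))^2$ type bounds. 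The piece carrying $d\Gamma(\bodisp)^{1/2}$ is handled like the annihilation case but with a $(1+d\Gamma(\bodisp))$ on the right, and contributes the $\norm{f\bodisp^{-1/2}}$ term after Cauchy–Schwarz, since there $a_k$ must absorb a $\bodisp(k)^{-1/2}$.

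The main obstacle is purely bookkeeping: getting the two weight factors $\bodisp^{1/2}f$ and $f\bodisp^{-1/2}$ to come out with the right powers of $(1+d\Gamma(\bodisp))$, which requires being careful about whether the shift $\bodisp(k)$ in $(d\Gamma(\bodisp)+\bodisp(k))^{1/2}$ is paired with the $f$-weight or with the vector $\Psi$. Everything else is Cauchy–Schwarz in the integration variable $k$ together with the elementary operator inequality $d\Gamma(\bodisp^{2\alpha})\le d\Gamma(\bodisp)^{2\alpha}$ for $\alpha\ge 1/2$ already used in Lemma~\ref{cor:SpecialEstAnnihilOp}, applied on the $n$-particle sectors. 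No new analytic input is needed; one should just make sure all manipulations are first carried out on $\hilbert_0$ (or on the algebraic tensor domain) and then extended by closedness, exactly as in the surrounding lemmas.
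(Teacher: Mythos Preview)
Your annihilation-operator argument has the pull-through the wrong way round. The correct identity is $a_k\,d\Gamma(\bodisp)^{1/2}=(d\Gamma(\bodisp)+\bodisp(k))^{1/2}a_k$, \emph{not} $d\Gamma(\bodisp)^{1/2}a_k=a_k(d\Gamma(\bodisp)+\bodisp(k))^{1/2}$; the shift by $+\bodisp(k)$ appears when $a_k$ moves to the \emph{right} of a function of $d\Gamma(\bodisp)$. The paper uses this in the form of the inequality $\|d\Gamma(\bodisp)^{1/2}a_k\Psi\|\leq\|(d\Gamma(\bodisp)+\bodisp(k))^{1/2}a_k\Psi\|=\|a_k d\Gamma(\bodisp)^{1/2}\Psi\|$ (valid since $\bodisp\geq 0$), after which Cauchy--Schwarz in $k$ immediately gives $\|f\bodisp^{-1/2}\|\,\|d\Gamma(\bodisp)\Psi\|$ with no further operator inequality needed. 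Your displayed integral with $a_k(d\Gamma(\bodisp)+\bodisp(k))^{1/2}\Psi$ inside is therefore not the object you actually obtain, and the subsequent appeal to $\int\bodisp(k)a_k^*d\Gamma(\bodisp)a_k\,dk\leq d\Gamma(\bodisp)^2$ is unnecessary once the direction is fixed.

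For the creation operator the difficulty you identify is real and your sketch does not resolve it: the splitting $(d\Gamma(\bodisp)+\bodisp(k))^{1/2}\leq d\Gamma(\bodisp)^{1/2}+\bodisp(k)^{1/2}$ cannot be pushed through $a_k^*$ termwise, and even if it could, the $\bodisp(k)^{1/2}$ piece yields $\|a^*(\bodisp^{1/2}f)\Psi\|$, which you can only bound by $\|(N+1)^{1/2}\Psi\|$, and there is no control of $N$ by $d\Gamma(\bodisp)$ for admissible $\bodisp$ without a mass gap. The paper sidesteps all of this by a resolvent--commutator identity: with $R_\bodisp=(1+d\Gamma(\bodisp))^{-1}$ and $[a^*(f),d\Gamma(\bodisp)]=-a^*(\bodisp f)$ one writes
\[
   d\Gamma(\bodisp)^{1/2}a^*(f)R_\bodisp
   = d\Gamma(\bodisp)^{1/2}R_\bodisp\,a^*(f)
   - d\Gamma(\bodisp)^{1/2}R_\bodisp\,a^*(\bodisp f)\,R_\bodisp,
\]
and then bounds each term using $\|d\Gamma(\bodisp)^{1/2}R_\bodisp^{1/2}\|\leq 1$ together with Lemma~\ref{cor:SpecialEstAnnihilOp} applied to the adjoints $a(f)R_\bodisp^{1/2}$ and $a(\bodisp f)R_\bodisp^{1/2}$. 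This produces exactly the two weights $\|f\bodisp^{-1/2}\|$ and $\|\bodisp^{1/2}f\|$ without ever touching the number operator.
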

\begin{proof}
Let $\Psi\in D(d\Gamma(\bodisp))$ and note that  $\norm{d\Gamma(\bodisp)^{1/2}a_k\Psi} \leq \norm{(d\Gamma(\bodisp)+\omega(k))^{1/2}a_k\Psi} = \norm{a_kd\Gamma(\bodisp)^{1/2}\Psi}$ because  $\omega\geq 0$. Using this and Cauchy-Schwarz, we obtain
\begin{align*} 
   \norm{d\Gamma(\bodisp)^{1/2}a(f)\Psi} &\leq\int dk\ \abs{f(k)}\norm{d\Gamma(\bodisp)^{1/2}a_k\Psi} \\
   &\leq\left(\int dk\ \frac{\abs{f(k)}^2}{\bodisp(k)}\right)^{1/2} \left(\int dk\ \bodisp(k)\norm{a_kd\Gamma(\bodisp)^{1/2}\Psi}\right)^{1/2} \\
   &=\norm{\frac{f}{\sqrt{\bodisp}}}\cdot \norm{d\Gamma(\bodisp)\Psi},
\end{align*}
which proves the first estimate. To prove the second one, we set $R_{\bodisp}:=(1+d\Gamma(\bodisp))^{-1}$ and we use that 
\begin{align*} 
   d\Gamma(\bodisp)^{1/2}a^*(f)R_{\bodisp}=d\Gamma(\bodisp)^{1/2}R_{\bodisp}a^*(f)-d\Gamma(\bodisp)^{1/2}R_{\bodisp}a^*(\bodisp f)R_{\bodisp},
\end{align*}
because $[a^*(f),d\Gamma(\bodisp)]=-a^*(\bodisp f)$. Since $\|d\Gamma(\omega)^{1/2}R_\omega^{1/2}\| \leq 1$, 
Lemma~\ref{cor:SpecialEstAnnihilOp} implies
\begin{align*} 
   \norm{d\Gamma(\bodisp)^{1/2}a^*(f)R_{\bodisp}}\leq \norm{R_{\bodisp}^{1/2}a^*(f)}+\norm{R_{\bodisp}^{1/2}a^*(\bodisp f)}\leq \norm{f/\sqrt{\bodisp}}+\norm{\sqrt{\bodisp}f},
\end{align*}
which completes the proof.
\end{proof}

\begin{lemma}
\label{lm:SpecialEstSquaredAnnihilOp}
Let  $\chi_{\Omega}$ be the characteristic function of some measurable $\Omega\subset\R^d$ and let $N_{\Omega}:=d\Gamma(\chi_{\Omega})$. Then, for any $f,g\in L^2(\R^d)$,
\begin{align*} 
   \norm{(1+N_{\Omega})^{-1/2}a(f\chi_{\Omega})a(g\chi_{\Omega})(1+d\Gamma(\bodisp))^{-1/2}}
   \leq \norm{f\chi_{\Omega}\bodisp^{-1/4}} \norm{g\chi_{\Omega}\bodisp^{-1/4}}.
\end{align*}
\end{lemma}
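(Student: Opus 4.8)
The plan is to reduce everything to the $n$-particle sectors, where $a^{\#}(\cdot)$, $d\Gamma(\bodisp)$ and $N_{\Omega}$ act by explicit multiplication and partial integration, and then to balance the three available gains — the factor $(1+N_{\Omega})^{-1/2}$, the combinatorial factor created by the double annihilation, and the factor $(1+d\Gamma(\bodisp))^{-1/2}$ — by means of a single pointwise inequality. Set $h:=f\chi_{\Omega}$ and $h':=g\chi_{\Omega}$; it suffices to prove the bound for $\Psi$ in the dense subspace $\hilbert_{0}$ of finite-particle vectors, because then the operator under consideration extends boundedly to all of $\hilbert$.

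First I would put $\tilde\Psi:=(1+d\Gamma(\bodisp))^{-1/2}\Psi\in\hilbert_{0}$ and compute the action of $a(h)a(h')$ sector by sector: for $m\geq 2$,
\[
   \big(a(h)a(h')\tilde\Psi\big)^{(m-2)}(l_{1},\dots,l_{m-2}) = \sqrt{m(m-1)}\int_{\Omega}\int_{\Omega}\overline{h(k')h'(k)}\,\tilde\Psi^{(m)}(k,k',l_{1},\dots,l_{m-2})\,dk\,dk' ,
\]
while $(1+N_{\Omega})^{-1/2}$ acts on the $(m-2)$-particle component by multiplication with $(1+\#_{\Omega})^{-1/2}$, where $\#_{\Omega}(l_{1},\dots,l_{m-2})$ denotes the number of the $l_{i}$ that lie in $\Omega$. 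Applying Cauchy--Schwarz to the $(k,k')$-integral, with the weight $\bodisp^{-1/4}$ distributed onto $h$ and onto $h'$, extracts the factor $\norm{f\chi_{\Omega}\bodisp^{-1/4}}^{2}\norm{g\chi_{\Omega}\bodisp^{-1/4}}^{2}$, and summing the squared norms over $m$ reduces the claim to
\[
   \sum_{m\geq 2} m(m-1)\int \frac{\bodisp(k)^{1/2}\bodisp(k')^{1/2}}{1+\#_{\Omega}(l_{1},\dots,l_{m-2})}\,\big|\tilde\Psi^{(m)}(k,k',l_{1},\dots,l_{m-2})\big|^{2}\,dk\,dk'\,dl_{1}\cdots dl_{m-2} \;\leq\; \norm{\Psi}^{2},
\]
with the $(k,k')$-integration over $\Omega\times\Omega$ and $\norm{\Psi}^{2}=\sum_{m}\sprod{\tilde\Psi^{(m)}}{(1+d\Gamma(\bodisp))\tilde\Psi^{(m)}}$.

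Here the permutation symmetry of $\tilde\Psi^{(m)}$ is decisive. Since $m(m-1)$ is the number of ordered pairs of indices, I would relabel all $m$ arguments as $p_{1},\dots,p_{m}$ and write the $m$-th summand as the integral over $\R^{dm}$ of
\[
   \Bigg(\sum_{\substack{i,j\in\{1,\dots,m\}\\ i\neq j}}\chi_{\Omega}(p_{i})\chi_{\Omega}(p_{j})\,\frac{\bodisp(p_{i})^{1/2}\bodisp(p_{j})^{1/2}}{1+\#_{\Omega}(\{p_{l}\}_{l\neq i,j})}\Bigg)\,\big|\tilde\Psi^{(m)}(p_{1},\dots,p_{m})\big|^{2}.
\]
For a fixed configuration put $S:=\{l:p_{l}\in\Omega\}$ and $s:=\abs{S}$; the bracket vanishes when $s\leq 1$, and when $s\geq 2$ one has $\#_{\Omega}(\{p_{l}\}_{l\neq i,j})=s-2$ for $i,j\in S$, so the bracket equals
\[
   \frac{1}{s-1}\sum_{\substack{i,j\in S\\ i\neq j}}\bodisp(p_{i})^{1/2}\bodisp(p_{j})^{1/2} = \frac{1}{s-1}\Bigg[\Big(\sum_{i\in S}\bodisp(p_{i})^{1/2}\Big)^{2}-\sum_{i\in S}\bodisp(p_{i})\Bigg].
\]
The key pointwise estimate is $\big(\sum_{i\in S}\bodisp(p_{i})^{1/2}\big)^{2}\leq s\sum_{i\in S}\bodisp(p_{i})$ (Cauchy--Schwarz with $s$ summands), which bounds the bracket by $\sum_{i\in S}\bodisp(p_{i})\leq 1+\sum_{l=1}^{m}\bodisp(p_{l})$. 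Integrating against $|\tilde\Psi^{(m)}|^{2}$ and summing over $m$ gives exactly $\norm{(1+d\Gamma(\bodisp))^{1/2}\tilde\Psi}^{2}=\norm{\Psi}^{2}$, which proves the estimate on $\hilbert_{0}$ and, by density, the lemma.

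The sector-wise computation and the two Cauchy--Schwarz steps are routine; the genuinely delicate point — and the reason the $\bodisp^{-1/4}$ weights (rather than $\bodisp^{-1/2}$ on one factor) are exactly right — is that any naive argument, such as splitting the operator into two single-annihilation factors estimated separately, or discarding part of $(1+d\Gamma(\bodisp))^{-1/2}$, loses a factor growing like the square root of the particle number. The main obstacle is to see that this factor cancels: the combinatorial weight $m(m-1)$ is absorbed by the symmetrization together with the $s-1$ produced by $(1+N_{\Omega})^{-1/2}$ on the $(m-2)$-particle sector, and the leftover $\sum_{i\in S}\bodisp(p_{i})$ is dominated by $1+d\Gamma(\bodisp)$. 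I expect getting this bookkeeping to close — in particular making the right symmetric choice of weights before applying Cauchy--Schwarz — to be the crux of the argument.
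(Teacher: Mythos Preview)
Your proof is correct and self-contained, but it takes a different route from the paper. The paper first passes to the factorization $\fock(\R^d)\simeq\fock(\Omega)\otimes\fock(\Omega^c)$ via the unitary $U$ determined by $Ua(h)U^{*}=a(h\chi_{\Omega})\otimes\ID+\ID\otimes a(h\chi_{\Omega^c})$; under this identification $N_{\Omega}$ becomes the full number operator on the first tensor factor, and after dropping the $\Omega^c$-part of $d\Gamma(\bodisp)$ the estimate reduces to the inequality $\|(1+N)^{-1/2}a(f)a(g)\Psi\|\leq\|f\bodisp^{-1/4}\|\,\|g\bodisp^{-1/4}\|\,\|d\Gamma(\bodisp)^{1/2}\Psi\|$ on $\fock(\Omega)$, which is then quoted from Nelson's Lemma~5. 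Your argument avoids the tensor splitting and instead carries the characteristic functions $\chi_{\Omega}$ through a direct $n$-particle computation, handling the localization and the cancellation of the $\sqrt{m(m-1)}$ factor in one combinatorial step. The paper's approach is more modular and makes the role of $\Omega$ conceptually transparent (it just singles out the relevant tensor factor), at the price of invoking an external lemma; yours is elementary and complete on its own, and in effect reproves Nelson's estimate in situ---the pointwise inequality $\big(\sum_{i\in S}\bodisp(p_i)^{1/2}\big)^2\leq s\sum_{i\in S}\bodisp(p_i)$ is precisely the mechanism behind Nelson's bound.
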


\begin{proof}
Let $\fock(\Omega)$ and $\fock(\Omega^c)$ denote the symmetric Fock spaces over $L^2(\Omega)$ and $L^2(\Omega^c)$, respectively, and let $\eta$ denote the (normalized) vacuum in any of these Fock spaces. Let $U$ denote the unitary transformation $U:\fock(\R^d)\rightarrow\fock(\Omega)\otimes\fock(\Omega^c)$ defined by  $U\eta = \eta\otimes \eta$ and 
$$
    Ua(h)U^* = a(h\chi_{\Omega})\otimes\ID+\ID\otimes a(h\chi_{\Omega^c}).
$$
Then
\begin{align*} 
   Ud\Gamma(\bodisp)U^* &= d\Gamma(\bodisp\chi_{\Omega})\otimes\ID+\ID\otimes d\Gamma(\bodisp\chi_{\Omega^c}), \\
   UN_{\Omega}U^* &= N\otimes\ID,
\end{align*}
where $N$ is the number operator on $\fock(\Omega)$ and $h\in L^2(\R^d)$. Conjugating by $U$ the operator whose norm is to be estimated, we find that  
\begin{align} 
   &\norm{(1+N_{\Omega})^{-1/2}a(f\chi_{\Omega})a(g\chi_{\Omega})(1+d\Gamma(\bodisp))^{-1/2}}\nonumber \\
   &=\Big\|\left((1+N)^{-1/2}\otimes\ID\right)\big(a(f\chi_{\Omega})\otimes\ID\big)\big(a(g\chi_{\Omega})\otimes\ID\big)\nonumber \\
   &\phantom{======}\Big(\ID\otimes\ID+d\Gamma(\bodisp\chi_{\Omega})\otimes\ID+\ID\otimes d\Gamma(\bodisp\chi_{\Omega^c})\Big)^{-1/2}\Big\|\nonumber\\
&\leq\norm{(1+N)^{-1/2}a(f\chi_{\Omega})a(g\chi_{\Omega})(1+d\Gamma(\bodisp\chi_{\Omega}))^{-1/2}}_{\fock(\Omega)},\label{norm-to-bound}
\end{align}
where the index of the norm indicates the space on which the operator acts. Following Nelson in the proof of Lemma 5 of~\cite{Nelson1964}, we obtain for $\Psi\in D(d\Gamma(\bodisp)^{1/2})\cap\fock(\Omega)$
\begin{align*} 
   \norm{(1+N)^{-1/2}a(f\chi_{\Omega})a(g\chi_{\Omega})\Psi}_{\fock(\Omega)}
   \leq\norm{f\chi_{\Omega}\bodisp^{-1/4}} \norm{g\chi_{\Omega}\bodisp^{-1/4}}\norm{d\Gamma(\bodisp)^{1/2}\Psi}_{\fock(\Omega)},
\end{align*}
which, combined with \eqref{norm-to-bound}, proves the lemma.
%
\end{proof}

\begin{lemma}
\label{lm:SpecEstGenAnnihilOpForBoundedOp}
Let $p:=-i\nabla_x$, and let $B:L^2(\R^d,dx)\rightarrow L^2(\R^d,dx)\otimes L^2(\R^d,dk)$ denote an operator of the form $\varphi(x)\mapsto b(p,k)e^{-ikx}\varphi(x)$ with some measurable function $(p,k)\mapsto b(p,k)$. Then 
\begin{align*} 
   \norm{a(B)\Psi}   &\leq C_1(b) \norm{d\Gamma(\bodisp)^{1/2}\Psi}, \\
   \norm{a^*(B)\Psi} &\leq\max\left\{C_1(b),C_2(b)\right\} \norm{(1+d\Gamma(\bodisp))^{1/2}\Psi},
\end{align*}
where
\begin{align*} 
   C_1(b) := \left(\sup\limits_{p\in\R^d}\int dk\ \frac{\abs{b(p-k,k)}^2}{\bodisp(k)}\right)^{1/2}, 
   \quad C_2(b) := \left(\sup\limits_{p\in\R^d}\int dk\ \abs{b(p-k,k)}^2\right)^{1/2}.
\end{align*}
\end{lemma}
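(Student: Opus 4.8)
The plan is to carry out all estimates in the Fourier representation in the particle variable $x$, where the multiplier structure of $B$ becomes transparent. Writing $\Psi(p)$ for the Fourier transform of $\Psi$ in $x$ (a $\fock$-valued function, on which $a_k$ and $a_k^{*}$ act in the Fock variable), the assumed form of $B$ translates into $(B\varphi)(p,k)=b(p,k)\,\varphi(p+k)$, so that the generalized ladder operators read
\begin{align*}
   (a(B)\Psi)(p)=\int \overline{b(p-k,k)}\,a_k\Psi(p-k)\,dk,\qquad
   (a^{*}(B)\Psi)(p)=\int b(p,k)\,a_k^{*}\Psi(p+k)\,dk.
\end{align*}
Both estimates I would first prove for $\Psi$ in the finite particle subspace $\hilbert_0$, where these integrals are absolutely convergent and all manipulations below are legitimate, and then extend to $D(d\Gamma(\bodisp)^{1/2})$ using that $\hilbert_0\cap D(d\Gamma(\bodisp)^{1/2})$ is a core of $d\Gamma(\bodisp)^{1/2}$ and that $a^{\#}(B)$ is closed. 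We may of course assume $C_1(b),C_2(b)<\infty$, otherwise there is nothing to prove.

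For $a(B)$ the estimate is routine: from $\norm{a(B)\Psi}^2=\int\norm{(a(B)\Psi)(p)}_{\fock}^2\,dp$ I would apply Cauchy--Schwarz in $k$, splitting the integrand as $\overline{b(p-k,k)}\bodisp(k)^{-1/2}$ times $\bodisp(k)^{1/2}a_k\Psi(p-k)$, bound the first resulting factor by $C_1(b)$, and then substitute $q=p-k$ to recognise $\int\norm{a_k\Psi(p-k)}^2\,dp=\norm{a_k\Psi}^2$, followed by $\int\bodisp(k)\norm{a_k\Psi}^2\,dk=\norm{d\Gamma(\bodisp)^{1/2}\Psi}^2$.

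For $a^{*}(B)$ I would expand the Fock norm of $(a^{*}(B)\Psi)(p)$, commute $a_k a_{k'}^{*}=a_{k'}^{*}a_k+\delta(k-k')$, and integrate over $p$. This writes $\norm{a^{*}(B)\Psi}^2$ as the sum of a diagonal term $T_2=\int\!\int\abs{b(p,k)}^2\norm{\Psi(p+k)}^2\,dk\,dp$ and an off-diagonal term
\begin{align*}
   T_1=\int dp\int dk\int dk'\;\overline{b(p,k)}\,b(p,k')\;\sprod{a_{k'}\Psi(p+k)}{a_k\Psi(p+k')}.
\end{align*}
The term $T_2$ is treated exactly as above by the substitution $q=p+k$, giving $T_2\le C_2(b)^2\norm{\Psi}^2$. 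Together with the bound $\abs{T_1}\le C_1(b)^2\norm{d\Gamma(\bodisp)^{1/2}\Psi}^2$ explained below, this yields $\norm{a^{*}(B)\Psi}^2\le C_1(b)^2\norm{d\Gamma(\bodisp)^{1/2}\Psi}^2+C_2(b)^2\norm{\Psi}^2\le\max\{C_1(b),C_2(b)\}^2\norm{(1+d\Gamma(\bodisp))^{1/2}\Psi}^2$, as claimed.

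The one delicate point, which I expect to be the main obstacle, is the bound on $T_1$. The naive estimate $\abs{T_1}\le\norm{\,b(p,k)\,a_{k'}\Psi(p+k)\,}^2$ in $L^2\bigl(\R^{3d}_{p,k,k'};\fock\bigr)$ is too wasteful: it produces $\norm{N^{1/2}\Psi}^2$, which is not controlled by $\norm{d\Gamma(\bodisp)^{1/2}\Psi}^2$ for a general admissible $\bodisp$ (compare the remark following Lemma~\ref{commute-p-Weyl}). Instead I would apply Cauchy--Schwarz to the $(p,k,k')$-integral with the two factors chosen crosswise,
\begin{align*}
   g_1(p,k,k')=\frac{\abs{b(p,k)}}{\sqrt{\bodisp(k)}}\,\sqrt{\bodisp(k')}\,\norm{a_{k'}\Psi(p+k)},\qquad
   g_2(p,k,k')=\frac{\abs{b(p,k')}}{\sqrt{\bodisp(k')}}\,\sqrt{\bodisp(k)}\,\norm{a_k\Psi(p+k')},
\end{align*}
so that the $\bodisp$-weights cancel in the product $g_1g_2$ (which is exactly the modulus of the integrand of $T_1$), while by the symmetry $k\leftrightarrow k'$ one has $\int g_1^2=\int g_2^2$. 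After the substitution $q=p+k$ and the identity $\int\bodisp(k')\norm{a_{k'}\Psi(q)}^2\,dk'=\norm{d\Gamma(\bodisp)^{1/2}\Psi(q)}^2$, the integral $\int g_1^2$ is bounded by $\bigl(\sup_p\int\abs{b(p-k,k)}^2\bodisp(k)^{-1}\,dk\bigr)\cdot\int\norm{d\Gamma(\bodisp)^{1/2}\Psi(q)}^2\,dq=C_1(b)^2\norm{d\Gamma(\bodisp)^{1/2}\Psi}^2$, which gives the required bound $\abs{T_1}\le\int g_1^2\le C_1(b)^2\norm{d\Gamma(\bodisp)^{1/2}\Psi}^2$. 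Everything else reduces to bookkeeping of the Fourier transform and elementary changes of variables.
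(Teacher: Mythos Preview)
Your proof is correct and follows essentially the same route as the paper's: both arguments reduce the $a(B)$ bound to the operator norm $\|B\bodisp^{-1/2}\|$ via Cauchy--Schwarz, and both treat $a^{*}(B)$ by normal ordering into a diagonal term (your $T_2$) and an off-diagonal term (your $T_1$), the latter being controlled by the same ``crosswise'' $\bodisp$-weighting trick (the paper phrases it as $f(p,k,k')f(p,k',k)\le\tfrac12(f^2+f^2)$, which is your Cauchy--Schwarz with $g_1,g_2$). The only cosmetic difference is that you work entirely in the particle Fourier representation, while the paper carries the multiplication operators $e_k=e^{-ikx}$ along before passing to momentum; the substance of the estimate is identical.
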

\begin{proof}
As in the proof of Lemma~\ref{cor:SpecialEstAnnihilOp}, we see that
$\norm{a(B)\Psi} \leq \norm{B\bodisp^{-1/2}}\ \norm{d\Gamma(\bodisp)^{1/2}\Psi}$
but now $\|B\bodisp^{-1/2}\|$ denotes the norm of the operator $B\bodisp^{-1/2}$ . Using $(B\bodisp^{-1/2}\varphi)(p,k) = b(p,k)\hat{\varphi}(p+k)$ and making  the substitution $p+k\to k$ one easily finds that $\|B\bodisp^{-1/2}\| \leq C_1(b)$ which proves the first estimate. 

The proof of the second estimate is more involved: By normal ordering 
\begin{align}
    \norm{a^*(B)\Psi}^2 &=\int dk\int dk' \sprod{B(k)a_k^{*}\Psi}{B(k')a_{k'}^{*}\Psi}\nonumber \\
    &=\int dk \|B(k)\Psi\|^2 + \int dk\int dk'\  \sprod{B(k)a_{k'}\Psi}{B(k')a_{k}\Psi}, \label{normal-order}
\end{align} 
where
\begin{align*}
      \int dk \|B(k)\Psi\|^2 &= \int |b(p,k)|^2 \|\Psi(p+k)\|^2\, dk dp\\
     &=  \int |b(p-k,k)|^2 \|\Psi(p)\|^2\, dk dp\\
     &\leq \left(\sup\limits_{p\in\R^d}\int dk\ |b(p-k,k)|^2 \right) \|\Psi\|^2.
\end{align*} 
Let $e_k$ denote the operator of multiplication with $e^{-ikx}$ in the particle position space. Then $B(k)=b(p,k)e_k = e_k b(p-k,k)$. For the integrand of the second term of~\eqref{normal-order}, we therefore have
\begin{align*} 
   &\sprod{B(k)a_{k'}\Psi}{B(k')a_{k}\Psi} = \sprod{ b(p,k)e_k a_{k'}\Psi}{b(p,k')e_{k'} a_{k}\Psi} \\
    &= \sprod{b(p-k-k',k)e_{k'}^{*} a_{k'}\Psi}{b(p-k-k',k')e_{k}^{*} a_{k}\Psi}\\
    &=  \int dp \sprod{b(p-k-k',k) a_{k'}\Psi(p-k')}{b(p-k-k',k') a_{k}\Psi(p-k)}\\
    &\leq \int dp |b(p-k-k',k)| \|a_{k'}\Psi(p-k')\| \cdot  |b(p-k-k',k')| \|a_{k}\Psi(p-k)\|\\
    &= \int dp \frac{|b(p-k-k',k)|}{\sqrt{\bodisp(k)}} \sqrt{\bodisp(k')}\|a_{k'}\Psi(p-k')\| \frac{ |b(p-k-k',k')|}{\sqrt{\bodisp(k')}} \sqrt{\bodisp(k)} \|a_{k}\Psi(p-k)\|
\end{align*}
The integrand here is of the form $f(p,k,k')f(p,k',k)$. Using that $f(p,k,k')f(p,k',k)\leq \frac{1}{2}(f(p,k,k')^2+ f(p,k',k)^2)$ and integrating with respect to $k$ and $k'$, we arrive at the bound
\begin{align*} 
      &\int dk\int dk' \int dp \frac{ |b(p-k-k',k')|}{\bodisp(k')} \bodisp(k) \|a_{k}\Psi(p-k)\|^2\\
     &\quad\leq \sup\limits_{p\in\R^d}\int dk\ \frac{\abs{b(p-k,k)}^2}{\bodisp(k)}  \|d\Gamma(\bodisp)^{1/2}\Psi\|^2
\end{align*}
for the second term of \eqref{normal-order}. This completes the proof.
\end{proof}


\begin{lemma}
\label{lm:SpecEstGenAnnihilOp}
Let $p:=-i\nabla_x$ and $F_x(k):=f(k)e^{-ikx}$ with 
$\abs{k}^sf\in L^2(\R^d)$ for some $s\in [0,1]$. Then, for all 
$\Psi\in D(d\Gamma(\bodisp)^{1/2})$,
\begin{align*} 
   \norm{a([\abs{p}^s,F])\Psi}   &\leq\norm{\abs{k}^s f\bodisp^{-1/2}}\ \norm{d\Gamma(\bodisp)^{1/2}\Psi}, \\
   \norm{a^*([\abs{p}^s,F])\Psi} &\leq\max\left\{\norm{\abs{k}^sf},\norm{\abs{k}^s f\bodisp^{-1/2}}\right\}\ \norm{(1+d\Gamma(\bodisp))^{1/2}\Psi}.
\end{align*}
\end{lemma}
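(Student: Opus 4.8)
The plan is to observe that $[\abs{p}^s,F]$ is an operator of precisely the form treated in Lemma~\ref{lm:SpecEstGenAnnihilOpForBoundedOp} and then simply to identify its symbol and bound the two constants $C_1,C_2$ appearing there. First I would compute the commutator. Since $F$ acts by $(F\varphi)(x,k)=f(k)e^{-ikx}\varphi(x)$, i.e.\ $F(k)=f(k)e_k$ with $e_k$ the operator of multiplication by $e^{-ikx}$ in the particle variable, and since $e^{ikx}\abs{p}^se^{-ikx}=\abs{p-k}^s$ (as used in the proof of Theorem~\ref{thm:domainIntersec3D4D}), one gets $[\abs{p}^s,F](k)=f(k)\,e_k\bigl(\abs{p-k}^s-\abs{p}^s\bigr)$. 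Conjugating the bracket through $e_k$ (using $e_k g(p)=g(p+k)e_k$) rewrites this as $b(p,k)e_k$ with $b(p,k)=f(k)\bigl(\abs{p}^s-\abs{p+k}^s\bigr)$, so that $[\abs{p}^s,F]$ falls under Lemma~\ref{lm:SpecEstGenAnnihilOpForBoundedOp}. I would make this precise on the dense domain of Schwartz functions with compactly supported Fourier transform, where both $\abs{p}^sF\varphi$ and $F\abs{p}^s\varphi$ are unambiguously defined, and note that the resulting operator extends boundedly (after multiplication by $\bodisp^{-1/2}$, resp.\ without) exactly under the hypotheses below.

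Next I would estimate the constants. With $b(p-k,k)=f(k)\bigl(\abs{p-k}^s-\abs{p}^s\bigr)$ and the elementary pointwise inequality $\abs{\abs{p-k}^s-\abs{p}^s}\leq\abs{k}^s$ for $s\in[0,1]$, which is~\eqref{equ:domainIntersec3D4DPEst} and follows from subadditivity of $t\mapsto t^s$ together with the reverse triangle inequality, the integrand in the definitions of $C_1(b)$ and $C_2(b)$ becomes $p$-independent after the bound, so the supremum over $p$ is harmless and
\begin{align*}
   C_1(b)\leq\norm{\abs{k}^s f\bodisp^{-1/2}},\qquad C_2(b)\leq\norm{\abs{k}^s f}.
\end{align*}
Inserting these into the two inequalities of Lemma~\ref{lm:SpecEstGenAnnihilOpForBoundedOp} yields the claimed bounds for $a([\abs{p}^s,F])$ and $a^*([\abs{p}^s,F])$ at once, valid for all $\Psi\in D(d\Gamma(\bodisp)^{1/2})$. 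No integrability beyond $\abs{k}^sf\in L^2(\R^d)$ is needed to state the result: if a norm on the right-hand side is infinite the corresponding estimate is trivially true.

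I do not expect a genuine obstacle here; the only point requiring a little care is the bookkeeping in the first step — pinning down $[\abs{p}^s,F]$ as a densely defined operator $L^2(\R^d,dx)\to L^2(\R^d,dx)\otimes L^2(\R^d,dk)$ and checking that it agrees with the Fourier multiplier $b(p,k)e_k$ with $b(p,k)=f(k)(\abs{p}^s-\abs{p+k}^s)$. Once that symbol is correctly identified, the estimate is an immediate consequence of Lemma~\ref{lm:SpecEstGenAnnihilOpForBoundedOp} combined with the bound on $\abs{p-k}^s-\abs{p}^s$.
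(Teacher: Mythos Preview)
Your proposal is correct and follows exactly the paper's own argument: compute the commutator as $[\abs{p}^s,F_x(k)]=(\abs{p}^s-\abs{p+k}^s)f(k)e^{-ikx}$, invoke the elementary bound $\abs{\abs{p\pm k}^s-\abs{p}^s}\leq\abs{k}^s$, and apply Lemma~\ref{lm:SpecEstGenAnnihilOpForBoundedOp}. The paper's proof is in fact briefer than yours, omitting the domain remarks and the comment about infinite norms, but the substance is identical.
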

%
%
\begin{proof}
Note that 
\begin{align*} 
   [\abs{p}^s,F_x(k)]=\abs{p}^sf(k)e^{-ikx}-f(k)e^{-ikx}\abs{p}^s=\left(\abs{p}^s-\abs{p+k}^s\right)f(k)e^{-ikx}
\end{align*}
and $\abs{\ \abs{p\pm k}^s-\abs{p}^s}\leq\abs{k}^s$
for all $p,k\in\R^d$ and $s\in [0,1]$. The Lemma now follows  
by Lemma~\ref{lm:SpecEstGenAnnihilOpForBoundedOp}.
\end{proof}

\begin{lemma}
\label{lm:SpecImpEstForDGammaGenLadderOp}
Let $p:=-i\nabla_x$, $\ell\in\R^d$, and $F_x(k):=f(k)e^{-ikx}$ 
with $\abs{k}^{s/2}f\in L^2(\R^d)$ for some $s\in [0,1]$. Then, for all 
$\Psi\in D(d\Gamma(\bodisp)^{1/2})$,
\begin{align*} 
   \norm{a([\abs{p\pm\ell}^s-\abs{p}^s,F])\Psi} &\leq 2\abs{\ell}^{s/2}\norm{\abs{k}^{s/2}f\bodisp^{-1/2}}\ \norm{d\Gamma(\bodisp)^{1/2}\Psi}, \\
   \norm{a^*([\abs{p\pm\ell}^s-\abs{p}^s,F])\Psi} &\leq 2\abs{\ell}^{s/2}\max\left\{\norm{\abs{k}^{s/2}f},\norm{\abs{k}^{s/2}f\bodisp^{-1/2}}\right\} \norm{\left(d\Gamma(\bodisp)+1\right)^{1/2}\Psi}. 
\end{align*}
\end{lemma}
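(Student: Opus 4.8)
The plan is to follow the proof of Lemma~\ref{lm:SpecEstGenAnnihilOp} almost verbatim, the only changes being that the single term $\abs{p}^s$ is replaced by the difference $h(p):=\abs{p\pm\ell}^s-\abs{p}^s$, and the only new input being an interpolated H\"older bound on $h$. First I would bring the commutator into the standard form required by Lemma~\ref{lm:SpecEstGenAnnihilOpForBoundedOp}. Using the intertwining identity $\abs{p}^se^{-ikx}=e^{-ikx}\abs{p-k}^s$, valid with $\abs{p}^s$ replaced by $\abs{p\pm\ell}^s$ as well, one finds exactly as in Lemma~\ref{lm:SpecEstGenAnnihilOp} that
\begin{align*}
   [\abs{p\pm\ell}^s-\abs{p}^s,\,F_x(k)]=\bigl(h(p)-h(p+k)\bigr)f(k)\,e^{-ikx},
\end{align*}
so the commutator is of the form $b(p,k)e^{-ikx}$ with symbol $b(p,k)=\bigl(h(p)-h(p+k)\bigr)f(k)$.

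The key estimate, and the reason for the factor $\abs{\ell}^{s/2}$, is the pointwise bound
\begin{align*}
   \abs{h(q)-h(q')}\leq 2\,\abs{\ell}^{s/2}\abs{q-q'}^{s/2}\qquad\text{for all }q,q'\in\R^d,\ s\in[0,1].
\end{align*}
I would obtain it by interpolating two elementary facts. On the one hand, $\abs{h(q)}\leq\abs{\ell}^s$ for every $q$, by the reverse triangle inequality together with $\bigl|\abs{a}^s-\abs{b}^s\bigr|\leq\abs{a-b}^s$; hence $\abs{h(q)-h(q')}\leq 2\abs{\ell}^s$. On the other hand, writing $h(q)-h(q')=\bigl(\abs{q\pm\ell}^s-\abs{q'\pm\ell}^s\bigr)-\bigl(\abs{q}^s-\abs{q'}^s\bigr)$ and applying $\bigl|\abs{a}^s-\abs{b}^s\bigr|\leq\abs{a-b}^s$ to each difference gives $\abs{h(q)-h(q')}\leq 2\abs{q-q'}^s$. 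The geometric mean of the two bounds is the claim.

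With $q=p-k$ and $q'=p$ this yields $\abs{b(p-k,k)}\leq 2\abs{\ell}^{s/2}\abs{k}^{s/2}\abs{f(k)}$, uniformly in $p$, so the constants in Lemma~\ref{lm:SpecEstGenAnnihilOpForBoundedOp} satisfy $C_1(b)\leq 2\abs{\ell}^{s/2}\norm{\abs{k}^{s/2}f\bodisp^{-1/2}}$ and $C_2(b)\leq 2\abs{\ell}^{s/2}\norm{\abs{k}^{s/2}f}$. Feeding these into the conclusion of Lemma~\ref{lm:SpecEstGenAnnihilOpForBoundedOp} produces the two asserted inequalities; the hypothesis $\Psi\in D(d\Gamma(\bodisp)^{1/2})$ is precisely what that lemma requires, and there is no domain subtlety since $b(\cdot,k)$ is a bounded multiplication operator for a.e.\ $k$. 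The only place requiring genuine attention is the interpolation step and the bookkeeping of the constant $2$; everything else is a direct specialization of results already established.
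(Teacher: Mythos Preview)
Your proof is correct and follows the paper's argument essentially verbatim: the paper also writes the commutator as $D_s(p,k,\ell)f(k)e^{-ikx}$ with $D_s=h(p)-h(p+k)$, bounds $\abs{D_s}$ by both $2\abs{k}^s$ and $2\abs{\ell}^s$, and then uses $\min\{\abs{k}^s,\abs{\ell}^s\}\leq\abs{k}^{s/2}\abs{\ell}^{s/2}$ (your ``geometric mean'' step) before invoking Lemma~\ref{lm:SpecEstGenAnnihilOpForBoundedOp}.
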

%
%
\begin{proof}
We have 
\begin{align*} 
   [\abs{p\pm\ell}^s-\abs{p}^s,F_x(k)]=D_s(p,k,\ell)f(k)e^{-ikx}
\end{align*}
with
\begin{align*} 
   D_s(p,k,\ell):=\abs{p\pm\ell}^s-\abs{p}^s-\abs{p+k\pm\ell}^s+\abs{p+k}^s.
\end{align*}
For $p,k,\ell\in\R^d$ and $s\in [0,1]$,
\begin{align*} 
   \abs{D_s(p,k,\ell)}\leq 2\abs{k}^s \quad \text{and}\quad \abs{D_s(p,k,\ell)}\leq 2\abs{\ell}^s.
\end{align*}
That implies that
\begin{align*} 
\label{equ:EstForMoreAbsValues}
   \abs{D_s(p,k,\ell)}\leq 2\min\{\abs{k}^s,\abs{\ell}^s\} = 2\min\{\abs{k}^{s/2},\abs{\ell}^{s/2}\}^2  
                      \leq 2\abs{k}^{s/2}\abs{\ell}^{s/2}.
\end{align*}
This upper bound is independent of the first argument $p$.
The lemma now follows from Lemma~\ref{lm:SpecEstGenAnnihilOpForBoundedOp}.
\end{proof}

\section{Mapping properties of Weyl operators}     
\label{app:mappingProps}

This appendix collects some important identities and estimates in connection with the Weyl operator 
$e^{i\pi(f)}$ and the dressing transform of Gross.
The first lemma recalls a well-known fact, see e.g. Proposition 5.2.4.(1) in Bratteli and 
Robinson~\cite{BratteliRobinsonII}. The second lemma generalizes Lemma~B.4 of~\cite{GriesemerWuensch}. 
As in main part of this paper, we will always assume an admissible $\bodisp$, which means 
that $\bodisp\in L_{loc}^{\infty}(\R^d)$ and that $\bodisp>0$ almost everywhere.

\begin{lemma} 
\label{lm:transformPhi} 
Let $f,g\in L^2(\R^d)$. Then $e^{i\pi(f)}D(\phi(g))=D(\phi(g))$ and
\begin{align*}  
  e^{i\pi(f)} \phi(g)e^{-i\pi(f)}=\phi(g)+2\Rea\sprod{g}{f} \qquad \textrm{on } D(\phi(g)).
\end{align*}
\end{lemma}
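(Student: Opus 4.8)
The plan is to reduce the statement to the standard Weyl commutation relations. Writing $W(h):=e^{i\phi(h)}$ for the Weyl operator associated with $h\in L^2(\R^d)$, one has $e^{i\pi(f)}=e^{i\phi(if)}=W(if)$, so it suffices to prove, for arbitrary $h,g\in L^2(\R^d)$, that $W(h)D(\phi(g))=D(\phi(g))$ and
\begin{align*}
   W(h)\phi(g)W(h)^{*}=\phi(g)-2\Ima\sprod{h}{g}\qquad\text{on }D(\phi(g)),
\end{align*}
and then to substitute $h=if$: since $\sprod{if}{g}=-i\sprod{f}{g}$ one has $-2\Ima\sprod{if}{g}=2\Rea\sprod{f}{g}=2\Rea\sprod{g}{f}$, which is exactly the asserted constant.

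First I would recall the Weyl relation
\begin{align*}
   W(h_1)W(h_2)=e^{-i\Ima\sprod{h_1}{h_2}}\,W(h_1+h_2),
\end{align*}
which follows from the Baker--Campbell--Hausdorff formula applied to $e^{i\phi(h_1)}e^{i\phi(h_2)}$, because the commutator $[\phi(h_1),\phi(h_2)]=2i\Ima\sprod{h_1}{h_2}$ is a multiple of the identity (alternatively one may simply cite Proposition~5.2.4 of~\cite{BratteliRobinsonII}). Using $W(h)^{*}=W(-h)$ and applying this relation to $W(sg)W(-h)$ and to $W(-h)W(sg)$, the two central phases add up and yield, for every $s\in\R$,
\begin{align*}
   W(sg)W(h)^{*}=e^{-2is\Ima\sprod{h}{g}}\,W(h)^{*}W(sg).
\end{align*}

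Next I would use that $\phi(g)$ is the Stone generator of the strongly continuous unitary group $s\mapsto W(sg)$, so that for $\Psi\in D(\phi(g))$ the limit $\lim_{s\to0}(is)^{-1}\bigl(W(sg)\Psi-\Psi\bigr)=\phi(g)\Psi$ exists. Inserting the previous identity and splitting
\begin{align*}
   W(sg)W(h)^{*}\Psi-W(h)^{*}\Psi
   =e^{-2is\Ima\sprod{h}{g}}W(h)^{*}\bigl(W(sg)\Psi-\Psi\bigr)
    +\bigl(e^{-2is\Ima\sprod{h}{g}}-1\bigr)W(h)^{*}\Psi,
\end{align*}
dividing by $is$, and letting $s\to0$, one sees that the difference quotient converges; hence $W(h)^{*}\Psi\in D(\phi(g))$ with $\phi(g)W(h)^{*}\Psi=W(h)^{*}\bigl(\phi(g)-2\Ima\sprod{h}{g}\bigr)\Psi$. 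This gives $W(h)^{*}D(\phi(g))\subset D(\phi(g))$ and, after replacing $h$ by $-h$, the equality $W(h)D(\phi(g))=D(\phi(g))$, together with the displayed commutation relation; the substitution $h=if$ then completes the proof.

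The argument is mostly bookkeeping of phases and of Stone generators. The one point that needs a little care is the domain claim --- that $W(h)^{*}$ maps $D(\phi(g))$ into itself and not merely into the larger form domain --- and this is precisely what the explicit difference-quotient computation above delivers, as opposed to a purely formal commutator manipulation.
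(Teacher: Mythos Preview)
Your argument is correct: the reduction to Weyl operators, the phase bookkeeping in the Weyl relation, and the Stone-generator difference-quotient computation all check out, and the substitution $h=if$ yields the stated constant $2\Rea\sprod{g}{f}$.

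As for comparison with the paper: there is nothing to compare, because the paper does not prove this lemma at all --- it merely records it as ``a well-known fact'' and cites Proposition~5.2.4(1) of Bratteli--Robinson. Your proof is precisely the standard one underlying that reference (Weyl CCR plus Stone's theorem), and you even flag the citation yourself. So you have supplied what the paper omits, by the expected route.
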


\begin{lemma}
\label{lm:strongWeyl}
Let $f,g\in L^2(\R^d)$. Then
\begin{align*}
   \norm{(e^{i\pi(f)}-e^{i\pi(g)})(1+d\Gamma(\bodisp)))^{-1/2}}\leq 2\max\left\{\norm{f-g},\norm{(f-g)\bodisp^{-1/2}}\right\}+\abs{\Ima\sprod{f}{g}}.  
\end{align*}
\end{lemma}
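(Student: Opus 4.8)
The plan is to prove the estimate in Lemma~\ref{lm:strongWeyl} by writing the difference $e^{i\pi(f)}-e^{i\pi(g)}$ in a form where each factor can be controlled by Lemma~\ref{cor:SpecialEstAnnihilOp} with $\alpha=1/2$. First I would use the Weyl relation $e^{i\pi(f)}e^{-i\pi(g)}=e^{-i\Ima\langle f,g\rangle}e^{i\pi(f-g)}$ (valid since $\pi(f)$ and $\pi(g)$ are self-adjoint and $[\pi(f),\pi(g)]=2i\Ima\langle if,ig\rangle=2i\Ima\langle f,g\rangle$ is a scalar, so the Baker--Campbell--Hausdorff formula terminates), to obtain
\begin{align*}
   e^{i\pi(f)}-e^{i\pi(g)} = \left(e^{-i\Ima\langle f,g\rangle}e^{i\pi(f-g)}-1\right)e^{i\pi(g)}.
\end{align*}
Since $e^{i\pi(g)}$ is unitary and commutes with $d\Gamma(\bodisp)$ only in the weaker sense that it maps $D(d\Gamma(\bodisp)^{1/2})$ into itself but does not commute with $(1+d\Gamma(\bodisp))^{-1/2}$, I would instead move the resolvent inside differently: write $e^{i\pi(g)}(1+d\Gamma(\bodisp))^{-1/2} = (1+d\Gamma(\bodisp))^{-1/2}\cdot(1+d\Gamma(\bodisp))^{1/2}e^{i\pi(g)}(1+d\Gamma(\bodisp))^{-1/2}$ — but this introduces an unbounded-looking factor. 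The cleaner route is to keep the resolvent on the right and estimate
\begin{align*}
   \norm{(e^{i\pi(f)}-e^{i\pi(g)})(1+d\Gamma(\bodisp))^{-1/2}} \leq \norm{(e^{-i\Ima\langle f,g\rangle}e^{i\pi(f-g)}-1)(1+d\Gamma(\bodisp))^{-1/2}},
\end{align*}
using that $e^{i\pi(g)}$ is unitary and, by Lemma~\ref{lm:transformDGamma}-type reasoning or a direct functional-calculus argument, maps $D(d\Gamma(\bodisp)^{1/2})$ onto itself — actually the simplest fix is: $e^{i\pi(g)}$ commutes with $d\Gamma(\bodisp)$? No. So one writes $e^{i\pi(f)}-e^{i\pi(g)}=e^{i\pi(g)}(e^{i\pi(g-f)}e^{i\pi(f)}e^{-i\pi(f)}\cdots)$; the correct symmetric move is $e^{i\pi(f)}-e^{i\pi(g)}=(e^{i\pi(f-g)}-e^{-i\Ima\langle f,g\rangle})e^{i\pi(g)}$ after factoring $e^{i\pi(g)}$ on the \emph{right}, giving
\begin{align*}
   \norm{(e^{i\pi(f)}-e^{i\pi(g)})(1+d\Gamma(\bodisp))^{-1/2}} = \norm{(e^{i\pi(f-g)}-e^{-i\Ima\langle f,g\rangle})\,e^{i\pi(g)}(1+d\Gamma(\bodisp))^{-1/2}},
\end{align*}
and now one needs $\|e^{i\pi(g)}(1+d\Gamma(\bodisp))^{-1/2}\| $-type control; since this is not bounded uniformly without extra input, the honest approach is the first factorization $e^{i\pi(f)}-e^{i\pi(g)}=\bigl(e^{-i\Ima\langle f,g\rangle}e^{i\pi(f-g)}-1\bigr)e^{i\pi(g)}$ combined with the observation that $e^{i\pi(g)}$ leaves $D(d\Gamma(\bodisp)^{1/2})$ invariant with $\|d\Gamma(\bodisp)^{1/2}e^{i\pi(g)}(1+d\Gamma(\bodisp))^{-1/2}\|$ bounded — which follows from Lemma~\ref{lm:transformDGamma} (the transform of $d\Gamma(\bodisp)$ adds only $\phi(\bodisp\cdot\text{stuff})$ plus a constant, but that needs $\bodisp^{1/2}g\in L^2$, which we do not have).

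Given these subtleties, the cleanest plan is to prove the bound \emph{directly on the level of the exponential series applied to finite-particle vectors} and then extend by density. Concretely: let $h:=f-g$ and $\theta:=\Ima\langle f,g\rangle$, so by the Weyl relation $e^{i\pi(f)}=e^{-i\theta}e^{i\pi(h)}e^{i\pi(g)}$; hence on $\hilbert_0$,
\begin{align*}
   e^{i\pi(f)}-e^{i\pi(g)} = (e^{-i\theta}-1)e^{i\pi(h)}e^{i\pi(g)} + (e^{i\pi(h)}-1)e^{i\pi(g)}.
\end{align*}
The first term contributes $|e^{-i\theta}-1|\le|\theta|=|\Ima\langle f,g\rangle|$, the last summand on the right of the claimed inequality. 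For the second term, I would use the fundamental theorem of calculus,
\begin{align*}
   (e^{i\pi(h)}-1)\Phi = \int_0^1 i\pi(h)e^{i\pi(h)t}\Phi\, dt = \int_0^1 i\bigl(a(ih)+a^{*}(ih)\bigr)e^{i\pi(h)t}\Phi\, dt,
\end{align*}
valid for $\Phi\in D(d\Gamma(\bodisp)^{1/2})$, and then apply Lemma~\ref{cor:SpecialEstAnnihilOp} with $\alpha=1/2$: each of $\|a(ih)e^{i\pi(h)t}\Phi\|$ and $\|a^{*}(ih)e^{i\pi(h)t}\Phi\|$ is bounded by $\max\{\|h\|,\|h\bodisp^{-1/2}\|\}\,\|(1+d\Gamma(\bodisp))^{1/2}e^{i\pi(h)t}\Phi\|$, so the whole thing is at most $2\max\{\|h\|,\|h\bodisp^{-1/2}\|\}\,\|(1+d\Gamma(\bodisp))^{1/2}e^{i\pi(h)t}\Phi\|$. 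Taking $\Phi = e^{i\pi(g)}(1+d\Gamma(\bodisp))^{-1/2}\Psi$ requires knowing $\|(1+d\Gamma(\bodisp))^{1/2}e^{i\pi(h)t}e^{i\pi(g)}(1+d\Gamma(\bodisp))^{-1/2}\|$ is bounded; since $e^{i\pi(h)t}e^{i\pi(g)}$ is a Weyl operator $e^{i\phi(\text{linear combination})}$ up to phase, and Weyl operators do \emph{not} in general preserve $D(d\Gamma(\bodisp)^{1/2})$ boundedly unless the symbol lies in $D(d\Gamma(\bodisp)^{1/2})$'s one-particle sense, i.e. $\bodisp^{1/2}(\cdot)\in L^2$ — which is \emph{not} assumed.

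This is the main obstacle, and I expect the resolution to be that one does \emph{not} factor out $e^{i\pi(g)}$ at all. Instead, the correct and minimal-hypothesis argument runs as follows: for $\Psi\in D(d\Gamma(\bodisp)^{1/2})$, set $\gamma(s):=e^{i\pi(sf+(1-s)g)}\Psi$; this is not a one-parameter group so differentiating is awkward. The genuinely clean path, which I would adopt, is to interpolate via the Weyl relation exactly as above but keeping track that $e^{i\pi(h)}$ and $e^{i\pi(g)}$ each individually preserve the form domain of $d\Gamma(\bodisp)$ only approximately; rather, one shows the operator $(e^{i\pi(f)}-e^{i\pi(g)})(1+d\Gamma(\bodisp))^{-1/2}$ is bounded by first establishing it on the dense set $\hilbert_0$ via the absolutely convergent power series $e^{i\pi(f)}\Psi=\sum_n \frac{i^n}{n!}\pi(f)^n\Psi$, comparing term by term with $e^{i\pi(g)}\Psi$, using the telescoping identity $\pi(f)^n-\pi(g)^n=\sum_{j=0}^{n-1}\pi(f)^j\bigl(\pi(f)-\pi(g)\bigr)\pi(g)^{n-1-j}=\sum_{j}\pi(f)^j\pi(h)\pi(g)^{n-1-j}$, and estimating each string of creation/annihilation operators against powers of $(1+N)$ and $d\Gamma(\bodisp)^{1/2}$ using the prototypical bounds $\|a(\cdot)\Psi\|\le\|\cdot\|\,\|N^{1/2}\Psi\|$ together with $\|a(\cdot)\Psi\|\le\|\cdot\,\bodisp^{-1/2}\|\,\|d\Gamma(\bodisp)^{1/2}\Psi\|$. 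After resumming, the $N$-dependence exponentiates into a bounded function of the unitary $e^{i\pi(g)}$-type operators and the $\max\{\|h\|,\|h\bodisp^{-1/2}\|\}$ prefactor with combinatorial factor $2$ emerges. I expect the bookkeeping of the $\bodisp$-weighted versus unweighted norms across the telescoped product — ensuring exactly one factor $d\Gamma(\bodisp)^{1/2}$ lands on the resolvent and produces the $\max\{\|f-g\|,\|(f-g)\bodisp^{-1/2}\|\}$ rather than a larger expression — to be the delicate point, and this is precisely where the hypothesis $\alpha\ge 1/2$ in Lemma~\ref{cor:SpecialEstAnnihilOp} (equivalently $d\Gamma(\bodisp^{2\alpha})\le d\Gamma(\bodisp)^{2\alpha}$) is used.
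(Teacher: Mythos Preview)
Your proposal circles around the right idea but misses the one simple observation that dissolves the obstacle you identify: factor $e^{i\pi(g)}$ on the \emph{left}, not on the right. By the Weyl relation one has $e^{i\pi(f)}=e^{i\pi(g)}\,e^{i\Ima\sprod{g}{f}}\,e^{i\pi(f-g)}$, and hence for any $\Psi$,
\begin{align*}
   \norm{(e^{i\pi(f)}-e^{i\pi(g)})\Psi}
   =\norm{e^{i\pi(g)}\bigl(e^{i\Ima\sprod{g}{f}}e^{i\pi(f-g)}-1\bigr)\Psi}
   =\norm{\bigl(e^{i\Ima\sprod{g}{f}}e^{i\pi(f-g)}-1\bigr)\Psi},
\end{align*}
because $e^{i\pi(g)}$ is unitary. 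No domain property of $e^{i\pi(g)}$ is ever needed; it simply disappears. Now split the scalar phase and use the spectral bound $\norm{(e^{iA}-1)\Psi}\leq\norm{A\Psi}$ for the self-adjoint $A=\pi(f-g)$ (which bypasses your fundamental-theorem-of-calculus representation and its attendant domain issues) to obtain
\begin{align*}
   \norm{(e^{i\pi(f)}-e^{i\pi(g)})\Psi}\leq\norm{\pi(f-g)\Psi}+\abs{\Ima\sprod{g}{f}}\,\norm{\Psi}.
\end{align*}
This is exactly the intermediate inequality the paper quotes from \cite{GriesemerWuensch}; Lemma~\ref{cor:SpecialEstAnnihilOp} with $\alpha=1/2$ then gives the claimed bound.

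Your alternative route via the exponential series and the telescoping identity $\pi(f)^n-\pi(g)^n=\sum_j\pi(f)^j\pi(f-g)\pi(g)^{n-1-j}$ is much heavier machinery for a two-line estimate, and the ``delicate'' bookkeeping you anticipate is genuinely problematic: in each summand the factor $\pi(g)^{n-1-j}$ hits $\Psi$ first, and controlling $\|\pi(f-g)\pi(g)^{n-1-j}\Psi\|$ by $\|d\Gamma(\bodisp)^{1/2}\Psi\|$ would again require $\bodisp^{1/2}g\in L^2$, which is not assumed. So that path runs into the same wall. The fix is not more combinatorics but the left factorization above.
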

\begin{proof}
Let $\Psi\in D(d\Gamma(\bodisp)^{1/2})$. In analogy to the proof of 
Lemma~B.4 of~\cite{GriesemerWuensch}, we get
\begin{align*}
   \norm{(e^{i\pi(f)}-e^{i\pi(g)})\Psi}\leq\norm{\pi(f-g)\Psi}+\abs{\Ima\sprod{g}{f}}\norm{\Psi}.
\end{align*}
By Lemma~\ref{cor:SpecialEstAnnihilOp}, we obtain
\begin{align*}
   \norm{\pi(f-g)\Psi}\leq 2\max\left\{\norm{f-g},\norm{\frac{f-g}{\bodisp^{1/2}}}\right\}\norm{(1+d\Gamma(\bodisp))^{1/2}\Psi},
\end{align*}
which completes the proof.
\end{proof}

\begin{lemma} 
\label{lm:transformDGamma} 
Let $f\in L^2(\R^d)$ with $\bodisp f\in L^2(\R^d)$. Then 
   $e^{i\pi(f)}D(d\Gamma(\bodisp))=D(d\Gamma(\bodisp))$
and on $D(d\Gamma(\bodisp))$
\begin{align*}
   e^{i\pi(f)}d\Gamma(\bodisp)e^{-i\pi(f)}=d\Gamma(\bodisp)+\phi(\bodisp f)+\norm{\bodisp^{1/2}f}^2.
\end{align*}
\end{lemma}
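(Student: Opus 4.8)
The plan is to prove the operator identity first on a core of $d\Gamma(\bodisp)$ on which $e^{-i\pi(f)}$ is given by its norm-convergent exponential series, and then to extend it to all of $D(d\Gamma(\bodisp))$ by a closure argument. Write $\pi(f)=i(a^{*}(f)-a(f))$. Two algebraic facts drive the computation. First, $\bodisp^{1/2}f\in L^2(\R^d)$ --- this follows from $f,\bodisp f\in L^2(\R^d)$ since $\norm{\bodisp^{1/2}f}^2\leq\norm{f}\,\norm{\bodisp f}$ by Cauchy--Schwarz --- so $\phi(\bodisp f)$ is defined on $D(d\Gamma(\bodisp)^{1/2})$ and is $d\Gamma(\bodisp)^{1/2}$-bounded by Lemma~\ref{cor:SpecialEstAnnihilOp}; in particular $d\Gamma(\bodisp)+\phi(\bodisp f)+\norm{\bodisp^{1/2}f}^2$ is self-adjoint on $D(d\Gamma(\bodisp))$. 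Second, on finite particle vectors one has the commutator relations $[d\Gamma(\bodisp),\pi(f)]=i\phi(\bodisp f)$ and $[\phi(\bodisp f),\pi(f)]=2i\norm{\bodisp^{1/2}f}^2$, the latter being a scalar; iterating these gives, on finite particle vectors,
\begin{align*}
   d\Gamma(\bodisp)\,\pi(f)^n=\pi(f)^n d\Gamma(\bodisp)+in\,\pi(f)^{n-1}\phi(\bodisp f)-n(n-1)\norm{\bodisp^{1/2}f}^2\,\pi(f)^{n-2}.
\end{align*}

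For the core I would take the span $\DD$ of all $a^{*}(g_1)\cdots a^{*}(g_\ell)\Omega$ with $\ell\geq 0$ and $g_j\in\{g\in L^2(\R^d):\bodisp g\in L^2(\R^d)\}$, where $\Omega$ is the vacuum. This is a core of $d\Gamma(\bodisp)$, it consists of finite particle vectors lying in $D(d\Gamma(\bodisp))$, and --- because $f\in L^2$ \emph{and} $\bodisp f\in L^2$ --- the operator $\pi(f)$ maps $\DD$ into itself, so $\pi(f)^n\DD\subset\DD\subset D(d\Gamma(\bodisp))$ for every $n$. For $\Psi\in\DD$ the series $e^{-i\pi(f)}\Psi=\sum_{n\geq 0}\frac{(-i)^n}{n!}\pi(f)^n\Psi$ converges absolutely by the standard finite particle bound on $\norm{\pi(f)^n\Psi}$; feeding the same bound, applied to the finite particle vectors $d\Gamma(\bodisp)\Psi$, $\phi(\bodisp f)\Psi$ and $\Psi$, into the displayed identity shows that $\sum_n\frac{(-i)^n}{n!}d\Gamma(\bodisp)\pi(f)^n\Psi$ converges absolutely as well. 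Since $d\Gamma(\bodisp)$ is closed, $e^{-i\pi(f)}\Psi\in D(d\Gamma(\bodisp))$, and rearranging the absolutely convergent series yields
\begin{align*}
   d\Gamma(\bodisp)\,e^{-i\pi(f)}\Psi=e^{-i\pi(f)}\bigl(d\Gamma(\bodisp)+\phi(\bodisp f)+\norm{\bodisp^{1/2}f}^2\bigr)\Psi,\qquad\Psi\in\DD.
\end{align*}

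To reach a general $\Psi\in D(d\Gamma(\bodisp))$, choose $\Psi_j\in\DD$ with $\Psi_j\to\Psi$ and $d\Gamma(\bodisp)\Psi_j\to d\Gamma(\bodisp)\Psi$; then also $d\Gamma(\bodisp)^{1/2}\Psi_j\to d\Gamma(\bodisp)^{1/2}\Psi$, hence $\phi(\bodisp f)\Psi_j\to\phi(\bodisp f)\Psi$ by the $d\Gamma(\bodisp)^{1/2}$-bound, while $e^{-i\pi(f)}\Psi_j\to e^{-i\pi(f)}\Psi$ by unitarity; closedness of $d\Gamma(\bodisp)$ then gives $e^{-i\pi(f)}\Psi\in D(d\Gamma(\bodisp))$ together with the displayed identity for all $\Psi\in D(d\Gamma(\bodisp))$. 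In particular $e^{-i\pi(f)}D(d\Gamma(\bodisp))\subseteq D(d\Gamma(\bodisp))$; running the same argument with $-f$ in place of $f$ (legitimate, since $-f,\bodisp(-f)\in L^2$ and $e^{i\pi(f)}=e^{-i\pi(-f)}$) gives $e^{i\pi(f)}D(d\Gamma(\bodisp))\subseteq D(d\Gamma(\bodisp))$, and the two inclusions combine to the asserted equality $e^{i\pi(f)}D(d\Gamma(\bodisp))=D(d\Gamma(\bodisp))$. Finally, applying $e^{i\pi(f)}$ to both sides of the displayed identity gives $e^{i\pi(f)}d\Gamma(\bodisp)e^{-i\pi(f)}\Psi=(d\Gamma(\bodisp)+\phi(\bodisp f)+\norm{\bodisp^{1/2}f}^2)\Psi$ on $D(d\Gamma(\bodisp))$.

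There is no deep obstacle; the points requiring care are (i) picking a core of $d\Gamma(\bodisp)$ that is invariant under $\pi(f)$ --- which is precisely where $\bodisp f\in L^2$ enters, as one needs $f$ itself to be an admissible slot for $a^{*}(\cdot)$ --- and (ii) justifying that $d\Gamma(\bodisp)$ may be applied to the exponential series term by term, which again rests on $\bodisp f\in L^2$ (through $\bodisp^{1/2}f\in L^2$) to make the rearranged series summable. A shorter route avoids the series: using the pull-through relation $a_k\,e^{-i\pi(f)}=e^{-i\pi(f)}(a_k+f(k))$ and the fact that $e^{-i\pi(f)}$ preserves $D(d\Gamma(\bodisp)^{1/2})$ (Lemma~\ref{lm:transformSquRootDGamma}), one computes, for $\Phi,\Psi\in D(d\Gamma(\bodisp)^{1/2})$,
\begin{align*}
   \sprod{d\Gamma(\bodisp)^{1/2}e^{-i\pi(f)}\Phi}{d\Gamma(\bodisp)^{1/2}e^{-i\pi(f)}\Psi}=\int\bodisp(k)\,\sprod{(a_k+f(k))\Phi}{(a_k+f(k))\Psi}\,dk,
\end{align*}
recognises the right-hand side as the quadratic form of $d\Gamma(\bodisp)+\phi(\bodisp f)+\norm{\bodisp^{1/2}f}^2$, and concludes --- since the latter is self-adjoint on $D(d\Gamma(\bodisp))$ --- both the operator identity and the domain statement at once.
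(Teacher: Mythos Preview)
Your main argument is essentially the paper's proof: expand $e^{-i\pi(f)}\Psi$ in its exponential series on finite-particle vectors in $D(d\Gamma(\bodisp))$, use the iterated commutator identity to apply $d\Gamma(\bodisp)$ term by term, invoke closedness, extend by a core argument, and flip the sign of $f$ for the domain equality. The only cosmetic difference is your choice of core (the algebraic span of $a^{*}(g_1)\cdots a^{*}(g_\ell)\Omega$ with $\bodisp g_j\in L^2$) versus the paper's $D(d\Gamma(\bodisp))\cap\fock_0$; yours is a subset of theirs and makes the $\pi(f)$-invariance more transparent, but the argument is the same.

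One caution about your alternative ``shorter route'': as written it invokes Lemma~\ref{lm:transformSquRootDGamma} to get $e^{-i\pi(f)}D(d\Gamma(\bodisp)^{1/2})\subset D(d\Gamma(\bodisp)^{1/2})$, but in this paper that lemma is \emph{proved using} the present Lemma~\ref{lm:transformDGamma}, so the citation is circular. The quadratic-form route can be made self-contained --- the pull-through relation $a_k e^{-i\pi(f)}=e^{-i\pi(f)}(a_k+f(k))$ together with $\bodisp^{1/2}f\in L^2$ already yields $\int\bodisp(k)\|a_k e^{-i\pi(f)}\Psi\|^2\,dk<\infty$ directly --- but you would need to supply that step rather than cite Lemma~\ref{lm:transformSquRootDGamma}.
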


\begin{proof} 
For $\Psi\in D(d\Gamma(\bodisp))\cap\fock_{0}$ we have 
\begin{align*} 
   e^{i\pi(f)}\Psi=\sum\limits_{n\geq 0}\frac{i^n}{n!}\pi(f)^n\Psi
\end{align*}
where $\pi(f)^n\Psi\in D(d\Gamma(\bodisp))$ because $\bodisp f\in L^2(\R^d)$. 
By a straightforward computation
\begin{align*}
     d\Gamma(\bodisp)i^n\pi(f)^n\Psi =  i^n\pi(f)^n d\Gamma(\bodisp) \Psi &-  n i^{n-1}\pi(f)^{n-1} \phi(\bodisp f) \Psi\\  &+  n(n-1) i^{n-2}\pi(f)^{n-2} \|\bodisp^{1/2} f\|^2 \Psi
\end{align*}
and hence 
$$
    \sum\limits_{n\geq 0}\frac{i^n}{n!}d\Gamma(\bodisp)\pi(f)^n\Psi = e^{i\pi(f)}\left(d\Gamma(\bodisp)-\phi(\bodisp f)+\norm{\bodisp^{1/2}f}^2\right)\Psi.
$$
Since $d\Gamma(\bodisp)$ is a closed operator, it follows that 
$e^{i\pi(f)}\Psi\in D(d\Gamma(\bodisp))$ and that the equation of the 
lemma holds for the chosen $\Psi$. These results now extend to all 
$\Psi\in D(d\Gamma(\bodisp))$ because $D(d\Gamma(\bodisp))\cap\fock_{0}$ 
is an operator core and because of the inequality 
$\norm{d\Gamma(\bodisp)e^{i\pi(f)}\Psi}\leq C_{\bodisp,f}\left(\norm{d\Gamma(\bodisp)\Psi}+\norm{\Psi}\right)$, 
which follow from the transformation formula of the lemma on this core and 
from Lemma~\ref{cor:SpecialEstAnnihilOp}. Since the sign of $f$ was immaterial 
in the above arguments, we have shown that 
$e^{i\pi(\pm f)}D(d\Gamma(\bodisp))\subset D(d\Gamma(\bodisp))$. This implies 
that $e^{i\pi(f)}D(d\Gamma(\bodisp))= D(d\Gamma(\bodisp))$.
\end{proof}

The next lemma is not an immediate corollary of the previous one, because of the relaxed assumption on $f$.
\begin{lemma} 
\label{lm:transformSquRootDGamma} 
Suppose $f\in L^2(\R^d)$ with $\bodisp^{1/2}f\in L^2(\R^d)$. 
Then, $e^{i\pi(f)}D(d\Gamma(\bodisp)^{1/2})=D(d\Gamma(\bodisp)^{1/2})$, 
and for all $\Psi\in D(d\Gamma(\bodisp)^{1/2})$,
\begin{align*} 
   \norm{d\Gamma(\bodisp)^{1/2}e^{i\pi(f)}\Psi} \leq \norm{d\Gamma(\bodisp)^{1/2}\Psi}+  \|\bodisp^{1/2}f\| \|\Psi\|.  
\end{align*}
\end{lemma}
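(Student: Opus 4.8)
The hypothesis $\bodisp^{1/2}f\in L^2(\R^d)$ is weaker than the assumption $\bodisp f\in L^2(\R^d)$ under which Lemma~\ref{lm:transformDGamma} applies, so the plan is to approximate $f$ by truncations that do satisfy the stronger condition, prove the inequality for each truncation, and pass to the limit. The decisive point is to truncate in a way adapted to $\bodisp$ rather than to $\abs{k}$: I would set $f_n:=f\,\chi(\bodisp\le n)$. Then $\abs{\bodisp f_n}\le n\abs{f}$ shows $\bodisp f_n\in L^2(\R^d)$, so Lemma~\ref{lm:transformDGamma} yields $e^{i\pi(f_n)}D(d\Gamma(\bodisp))=D(d\Gamma(\bodisp))$ together with the identity $e^{-i\pi(f_n)}d\Gamma(\bodisp)e^{i\pi(f_n)}=d\Gamma(\bodisp)-\phi(\bodisp f_n)+\norm{\bodisp^{1/2}f_n}^2$ on $D(d\Gamma(\bodisp))$ (apply the lemma with $-f_n$ in place of $f$).

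Next I would prove the stated estimate with $f$ replaced by $f_n$. For $\Psi\in D(d\Gamma(\bodisp))$ the transformation identity gives
\begin{align*}
   \norm{d\Gamma(\bodisp)^{1/2}e^{i\pi(f_n)}\Psi}^2
      &=\sprod{\Psi}{e^{-i\pi(f_n)}d\Gamma(\bodisp)e^{i\pi(f_n)}\Psi}\\
      &=\norm{d\Gamma(\bodisp)^{1/2}\Psi}^2-\sprod{\Psi}{\phi(\bodisp f_n)\Psi}+\norm{\bodisp^{1/2}f_n}^2\norm{\Psi}^2 ,
\end{align*}
and the cross term is controlled, using Lemma~\ref{cor:SpecialEstAnnihilOp} with $\alpha=1/2$ (note $\bodisp f_n\cdot\bodisp^{-1/2}=\bodisp^{1/2}f_n$), by
$\abs{\sprod{\Psi}{\phi(\bodisp f_n)\Psi}}=2\abs{\Rea\sprod{\Psi}{a(\bodisp f_n)\Psi}}\le 2\norm{\bodisp^{1/2}f_n}\,\norm{d\Gamma(\bodisp)^{1/2}\Psi}\,\norm{\Psi}$. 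Completing the square yields $\norm{d\Gamma(\bodisp)^{1/2}e^{i\pi(f_n)}\Psi}\le\norm{d\Gamma(\bodisp)^{1/2}\Psi}+\norm{\bodisp^{1/2}f_n}\,\norm{\Psi}\le\norm{d\Gamma(\bodisp)^{1/2}\Psi}+\norm{\bodisp^{1/2}f}\,\norm{\Psi}$. This extends to all $\Psi\in D(d\Gamma(\bodisp)^{1/2})$ by a standard density argument: $D(d\Gamma(\bodisp))$ is an operator core, hence a form core, for $d\Gamma(\bodisp)$; approximating $\Psi$ in the form norm by vectors in $D(d\Gamma(\bodisp))$ and applying the inequality to differences shows that $d\Gamma(\bodisp)^{1/2}e^{i\pi(f_n)}$ of the approximants is Cauchy, and closedness of $d\Gamma(\bodisp)^{1/2}$ passes the bound to $\Psi$.

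Finally I would let $n\to\infty$. Dominated convergence gives $f_n\to f$ and $\bodisp^{1/2}f_n\to\bodisp^{1/2}f$ in $L^2(\R^d)$, and --- this is where the $\bodisp$-adapted cutoff is essential --- $\norm{(f-f_n)\bodisp^{-1/2}}\le n^{-1/2}\norm{f}\to 0$, since $\bodisp^{-1/2}<n^{-1/2}$ on the support of $f-f_n$ (for a generic cutoff this quantity need not even be finite). Hence Lemma~\ref{lm:strongWeyl} gives $(e^{i\pi(f_n)}-e^{i\pi(f)})(1+d\Gamma(\bodisp))^{-1/2}\to 0$ in operator norm, so $e^{i\pi(f_n)}\Psi\to e^{i\pi(f)}\Psi$ in $\hilbert$ for every $\Psi\in D(d\Gamma(\bodisp)^{1/2})$. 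For such $\Psi$ and any $\Phi\in D(d\Gamma(\bodisp)^{1/2})$,
\begin{align*}
   \abs{\sprod{d\Gamma(\bodisp)^{1/2}\Phi}{e^{i\pi(f)}\Psi}}
     &=\limop{n}\abs{\sprod{\Phi}{d\Gamma(\bodisp)^{1/2}e^{i\pi(f_n)}\Psi}}\\
     &\le\norm{\Phi}\left(\norm{d\Gamma(\bodisp)^{1/2}\Psi}+\norm{\bodisp^{1/2}f}\norm{\Psi}\right),
\end{align*}
so $\Phi\mapsto\sprod{d\Gamma(\bodisp)^{1/2}\Phi}{e^{i\pi(f)}\Psi}$ is a bounded anti-linear functional on the core $D(d\Gamma(\bodisp)^{1/2})$ of the self-adjoint operator $d\Gamma(\bodisp)^{1/2}$; exactly as in the proof of Theorem~\ref{thm:domainSubset3D4D} this forces $e^{i\pi(f)}\Psi\in D(d\Gamma(\bodisp)^{1/2})$ together with the asserted inequality. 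Applying this inclusion to $-f$ (which satisfies the same hypothesis) gives $e^{-i\pi(f)}D(d\Gamma(\bodisp)^{1/2})\subset D(d\Gamma(\bodisp)^{1/2})$, hence the claimed equality. The only real obstacle is this limiting procedure: because the hypothesis on $f$ is too weak to quote Lemma~\ref{lm:transformDGamma} directly, the approximation must be arranged so that \emph{all three} quantities governing the Weyl difference in Lemma~\ref{lm:strongWeyl}, and in particular $\norm{(f-f_n)\bodisp^{-1/2}}$, tend to zero; the $\bodisp$-adapted cutoff is precisely what achieves this.
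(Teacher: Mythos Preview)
Your argument is correct, but the paper takes a different regularization route. Instead of truncating $f$, it regularizes the dispersion: with $\bodisp_\eps:=\bodisp(1+\eps\bodisp)^{-1}$ one has $\bodisp_\eps\le \eps^{-1}$, so $\bodisp_\eps f\in L^2(\R^d)$ and Lemma~\ref{lm:transformDGamma} applies with $\bodisp_\eps$ in place of $\bodisp$. For $\Psi\in D(d\Gamma(\bodisp))$ the same quadratic-form computation you carry out yields
\[
   \norm{d\Gamma(\bodisp_\eps)^{1/2}e^{i\pi(f)}\Psi}\le\norm{d\Gamma(\bodisp_\eps)^{1/2}\Psi}+\norm{\bodisp_\eps^{1/2}f}\,\norm{\Psi},
\]
and since $\bodisp_\eps\uparrow\bodisp$ monotonically as $\eps\to 0$, the limit is immediate. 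One then extends from $D(d\Gamma(\bodisp))$ to $D(d\Gamma(\bodisp)^{1/2})$ by the form-core argument. The key difference is that the paper's scheme keeps the Weyl operator $e^{i\pi(f)}$ fixed throughout, so no convergence of Weyl operators is needed and Lemma~\ref{lm:strongWeyl} and the bounded-functional argument are avoided entirely; the passage to the limit is just monotone convergence. Your approach, by contrast, varies the Weyl operator, which forces you to verify all three hypotheses of Lemma~\ref{lm:strongWeyl}; your observation that an $\bodisp$-adapted cutoff (rather than a $|k|$-cutoff) makes $\norm{(f-f_n)\bodisp^{-1/2}}\to 0$ automatic is the crucial point that makes this work. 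Both proofs are clean; the paper's is shorter and more self-contained, while yours illustrates nicely how the Weyl-difference lemma can substitute for a direct monotone argument.
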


\begin{proof} 
Let $\Psi\in D(d\Gamma(\bodisp))$ and define 
$\bodisp_{\eps}:=\bodisp(1+\eps\bodisp)^{-1}$ for $\eps\geq 0$. Then $\bodisp_\eps f\in L^2(\R^d)$, 
$\bodisp_{\eps}(k)>0$ almost everywhere, and hence, by Lemma~\ref{lm:transformDGamma}, 
\begin{align*}
     \norm{d\Gamma(\bodisp_\eps)^{1/2}e^{i\pi(f)}\Psi}^2 &= \sprod{\Psi}{e^{-i\pi(f)} d\Gamma(\bodisp_\eps)e^{i\pi(f)} \Psi}\\
 &= \sprod{\Psi}{(d\Gamma(\bodisp_\eps) - \phi(\bodisp_\eps f) + \|\bodisp_\eps^{1/2}f\|^2)\Psi}\\
 &\leq  \norm{d\Gamma(\bodisp_\eps)^{1/2}\Psi}^2 + 2\|\Psi\| \|a(\bodisp_\eps f)\Psi\| +   \|\bodisp_\eps^{1/2}f\|^2 \|\Psi\|^2\\
 &\leq \left( \norm{d\Gamma(\bodisp_\eps)^{1/2}\Psi}+  \|\bodisp_\eps^{1/2}f\| \|\Psi\| \right)^2,
\end{align*}
where Lemma~\ref{cor:SpecialEstAnnihilOp} with $\alpha=1/2$ was used in the last inequality.  Letting $\eps\to 0$ the desired inequality is obtained for $\Psi\in D(d\Gamma(\bodisp))$. Since $D(d\Gamma(\bodisp))$ is a form core, this inequality extends to all $\Psi\in D(d\Gamma(\bodisp)^{1/2})$ and then it proves that $e^{i\pi(f)}D(d\Gamma(\bodisp)^{1/2})\subset D(d\Gamma(\bodisp)^{1/2})$ provided that $\|\bodisp^{1/2}f\|<\infty$. Since we may replace $f$ by $-f$ in this proof, the converse inclusion holds as well.
\end{proof}

\begin{corollary} 
\label{cor:transformSquRootDGammaSandwich} 
Let $f\in L^2(\R^d)$ with $\bodisp^{1/2}f\in L^2(\R^d)$. Then 
\begin{align*} 
   \norm{(1+d\Gamma(\bodisp))^{1/2}e^{i\pi(f)}(1+d\Gamma(\bodisp))^{-1/2}} &\leq 1+\norm{\bodisp^{1/2}f}, \\
   \norm{(1+d\Gamma(\bodisp))^{-1/2}e^{i\pi(f)}(1+d\Gamma(\bodisp))^{1/2}} &\leq 1+\norm{\bodisp^{1/2}f}.
\end{align*}
\end{corollary}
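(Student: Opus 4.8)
The plan is to obtain both inequalities as immediate consequences of Lemma~\ref{lm:transformSquRootDGamma}, which already supplies the essential bound $\norm{d\Gamma(\bodisp)^{1/2}e^{i\pi(f)}\Psi}\leq\norm{d\Gamma(\bodisp)^{1/2}\Psi}+\norm{\bodisp^{1/2}f}\norm{\Psi}$ together with the invariance $e^{i\pi(f)}D(d\Gamma(\bodisp)^{1/2})=D(d\Gamma(\bodisp)^{1/2})$. The latter already guarantees that $(1+d\Gamma(\bodisp))^{1/2}e^{i\pi(f)}(1+d\Gamma(\bodisp))^{-1/2}$ is defined on all of $\hilbert$ and that $(1+d\Gamma(\bodisp))^{-1/2}e^{i\pi(f)}(1+d\Gamma(\bodisp))^{1/2}$ is defined on the dense domain $D(d\Gamma(\bodisp)^{1/2})$, so only the norm estimates need to be proved.

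For the first inequality I would fix $\Phi\in\hilbert$, put $\Psi:=(1+d\Gamma(\bodisp))^{-1/2}\Phi\in D(d\Gamma(\bodisp)^{1/2})$, and use that $(1+d\Gamma(\bodisp))^{1/2}$ is self-adjoint, so $\norm{(1+d\Gamma(\bodisp))^{1/2}\Psi}^2=\norm{\Psi}^2+\norm{d\Gamma(\bodisp)^{1/2}\Psi}^2=\norm{\Phi}^2$. Applying the same identity to $e^{i\pi(f)}\Psi$, using unitarity of $e^{i\pi(f)}$ together with Lemma~\ref{lm:transformSquRootDGamma}, gives $\norm{(1+d\Gamma(\bodisp))^{1/2}e^{i\pi(f)}\Psi}^2\leq\norm{\Psi}^2+\bigl(\norm{d\Gamma(\bodisp)^{1/2}\Psi}+\norm{\bodisp^{1/2}f}\norm{\Psi}\bigr)^2$. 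It then remains to verify the elementary scalar inequality $a^2+(b+ca)^2\leq(1+c)^2(a^2+b^2)$ for $a,b,c\geq 0$: expanding and completing the square in $a$ shows the difference equals $2c\bigl((a-\tfrac b2)^2+\tfrac34 b^2\bigr)+c^2b^2\geq 0$. Inserting $a=\norm{\Psi}$, $b=\norm{d\Gamma(\bodisp)^{1/2}\Psi}$, $c=\norm{\bodisp^{1/2}f}$ yields the stated bound $\leq(1+\norm{\bodisp^{1/2}f})\norm{\Phi}$.

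For the second inequality I would pass to the adjoint: on $D(d\Gamma(\bodisp)^{1/2})$ a short computation moving the self-adjoint factors $(1+d\Gamma(\bodisp))^{\pm1/2}$ across the inner product (legitimate since $e^{-i\pi(f)}$ preserves $D(d\Gamma(\bodisp)^{1/2})$ by Lemma~\ref{lm:transformSquRootDGamma}) identifies $(1+d\Gamma(\bodisp))^{-1/2}e^{i\pi(f)}(1+d\Gamma(\bodisp))^{1/2}$ as the adjoint of $(1+d\Gamma(\bodisp))^{1/2}e^{i\pi(-f)}(1+d\Gamma(\bodisp))^{-1/2}$. Since $\norm{\bodisp^{1/2}(-f)}=\norm{\bodisp^{1/2}f}$, the first inequality applied with $-f$ bounds this adjoint by $1+\norm{\bodisp^{1/2}f}$, and $\norm{A^{*}}=\norm{A}$ finishes the proof; in particular the operator extends boundedly to all of $\hilbert$.

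I do not expect a genuine obstacle here: the substance is entirely in Lemma~\ref{lm:transformSquRootDGamma}, and the only point requiring the slightest care is the scalar inequality $a^2+(b+ca)^2\leq(1+c)^2(a^2+b^2)$ — a crude Cauchy--Schwarz estimate of $(b+ca)^2$ is too lossy for small $c$, so one really needs the explicit completion of the square above.
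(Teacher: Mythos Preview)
Your argument is correct. The overall strategy---derive the first bound from the estimate of Lemma~\ref{lm:transformSquRootDGamma}, then obtain the second by passing to the adjoint with $f$ replaced by $-f$---coincides with the paper's. The only difference is in the first step: the paper does not use Lemma~\ref{lm:transformSquRootDGamma} as a black box but instead reruns its quadratic-form computation with $1+d\Gamma(\bodisp)$ in place of $d\Gamma(\bodisp)$, which immediately yields the perfect square
\[
   \norm{(1+d\Gamma(\bodisp))^{1/2}e^{i\pi(f)}\Psi}^2 \leq \bigl(\norm{(1+d\Gamma(\bodisp))^{1/2}\Psi}+\norm{\bodisp^{1/2}f}\,\norm{\Psi}\bigr)^2,
\]
so the bound $1+\norm{\bodisp^{1/2}f}$ follows from $\norm{\Psi}\leq\norm{(1+d\Gamma(\bodisp))^{1/2}\Psi}$ without any auxiliary scalar inequality. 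Your route avoids reopening the proof of the lemma at the modest cost of having to verify $a^2+(b+ca)^2\leq(1+c)^2(a^2+b^2)$; both are perfectly fine.
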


\begin{proof}
By a computation very similar to the one in the proof of Lemma~\ref{lm:transformSquRootDGamma}, for all $\Psi\in D(d\Gamma(\bodisp)^{1/2})$,
$$
      \norm{(1+d\Gamma(\bodisp))^{1/2}e^{i\pi(f)}\Psi} \leq \norm{(1+d\Gamma(\bodisp))^{1/2}\Psi}+  \|\bodisp^{1/2}f\| \|\Psi\|.  
$$
This implies the first estimate of the corollary. The second one follows from the first, as it concerns the adjoint operator upon replacing $f$ with $-f$.
\end{proof}

The next two lemmas are needed in Section \ref{sec:nelsonAndGross} and 
they generalize statements (a) and (b) of Lemma~3.1 in \cite{GriesemerWuensch}. 
Recall, that $D(H_0)=D(p^2)\cap D(d\Gamma(\bodisp))$ 
and $D(H_0^{1/2})=D(\abs{p})\cap D(d\Gamma(\bodisp)^{1/2})$.

\begin{lemma}
\label{lm:transformPGen}
Let $p:=-i\nabla_x$ and $F_x(k):=f(k)e^{-ikx}$ with 
$f,\,\bodisp^{1/2}f,\, kf,\, kf\bodisp^{-1/2}\in L^2(\mathbb{R}^d)$. Then $e^{i\pi(F)}D(H_0^{1/2})=D(H_0^{1/2})$ and 
\begin{align*}
   e^{i\pi(F)}pe^{-i\pi(F)}=p-\phi(kF)-\langle f,k f\rangle  \qquad \textrm{on } D(H_0^{1/2}).
\end{align*}
\end{lemma}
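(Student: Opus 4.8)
The approach follows the pattern of the proofs of Lemmas~\ref{lm:transformDGamma} and~\ref{lm:transformSquRootDGamma}: establish the transformation formula first on a core of $H_0^{1/2}$, by expanding the Weyl operators into their exponential series on finite-particle vectors, and then extend everything by density. The natural core is $\CA:=D(H_0^{1/2})\cap\hilbert_0$; it is a core of $H_0^{1/2}$ because $H_0$ commutes with the number operator, so that the finite-particle truncations of any $\Psi\in D(H_0^{1/2})$ converge to $\Psi$ in the graph norm of $H_0^{1/2}$. Write $X_\sigma:=i\pi(\sigma F)=\sigma\bigl(a(F)-a^{*}(F)\bigr)$ for $\sigma\in\{+1,-1\}$. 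The computation rests on
\[
   [p,a(F)]=a(kF),\qquad [p,a^{*}(F)]=-a^{*}(kF),\qquad [\phi(kF),\,i\pi(F)]=-2\langle f,kf\rangle ,
\]
where the first two follow from $p\,e^{\mp ikx}=\mp k\,e^{\mp ikx}$ (legitimate since $kf\in L^{2}(\R^d)$) and the third from the canonical commutation relations together with $\int (kF)(k)^{*}F(k)\,dk=\langle f,kf\rangle$. The decisive point is that $[\phi(kF),i\pi(F)]$ is a \emph{scalar}; this truncates the series below at three terms and, in particular, means that $k^{2}f\in L^{2}(\R^d)$ plays no role.

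\textbf{Formula on the core.} For $\Psi\in\CA$ and each $\sigma$, an induction on $n$ — the step uses the commutators above and the fact that $X_\sigma^{n-1}\Psi$ still has finite particle number — shows that $X_\sigma^{n}\Psi\in D(p)$ with
\[
   p\,X_\sigma^{n}\Psi = X_\sigma^{n}p\Psi + \sigma\,n\,\phi(kF)X_\sigma^{n-1}\Psi + n(n-1)\langle f,kf\rangle X_\sigma^{n-2}\Psi .
\]
Dividing by $n!$ and summing, all three resulting series converge absolutely — the standard growth bound $\|X_\sigma^{m}\Psi\|\le(2\|f\|)^{m}\sqrt{(m_0+m)!/m_0!}\,\|\Psi\|$ on a vector with at most $m_0$ particles, together with $\|\phi(kF)\Phi\|\le 2\|kf\|\,\|(N+1)^{1/2}\Phi\|$, controls them — so by closedness of $p$ one obtains, for $\Psi\in\CA$,
\[
   p\,e^{i\pi(\sigma F)}\Psi = e^{i\pi(\sigma F)}p\Psi + \sigma\,\phi(kF)e^{i\pi(\sigma F)}\Psi + \langle f,kf\rangle e^{i\pi(\sigma F)}\Psi .
\]
Applying $e^{i\pi(F)}$ to this identity with $\sigma=-1$ and inserting $e^{i\pi(F)}\phi(kF)e^{-i\pi(F)}=\phi(kF)+2\langle f,kf\rangle$ (the generalized version of Lemma~\ref{lm:transformPhi}, or a short direct computation on finite-particle vectors) yields $e^{i\pi(F)}pe^{-i\pi(F)}\Psi=(p-\phi(kF)-\langle f,kf\rangle)\Psi$ for every $\Psi\in\CA$.

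\textbf{Invariance of the form domain and extension.} From the identity above with $\sigma=+1$, Lemma~\ref{lm:SpecEstGenAnnihilOpForBoundedOp} applied to $kF$ (whose symbol $kf(k)$ does not depend on $p$) and Corollary~\ref{cor:transformSquRootDGammaSandwich} give $\|\phi(kF)e^{i\pi(F)}\Psi\|\le C\,\|(1+d\Gamma(\bodisp))^{1/2}\Psi\|$, which combined with the bound of Lemma~\ref{lm:transformSquRootDGamma} gives $\|H_0^{1/2}e^{i\pi(F)}\Psi\|\le C'\,\|(H_0+1)^{1/2}\Psi\|$ for $\Psi\in\CA$. Since $\CA$ is a core of $H_0^{1/2}$ and $e^{i\pi(F)}$ is unitary, $e^{i\pi(F)}D(H_0^{1/2})\subset D(H_0^{1/2})$ with the same bound; as the hypotheses are symmetric under $f\mapsto-f$, also $e^{-i\pi(F)}D(H_0^{1/2})\subset D(H_0^{1/2})$, whence $e^{i\pi(F)}D(H_0^{1/2})=D(H_0^{1/2})$. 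Finally, on $D(H_0^{1/2})$ both sides of the transformation formula are bounded maps into $\hilbert$ for the graph norm of $H_0^{1/2}$ — the right-hand side because $\phi(kF)$ is $d\Gamma(\bodisp)^{1/2}$-bounded, the left-hand side by the invariance just established — so the identity extends from $\CA$ to all of $D(H_0^{1/2})$ by continuity.

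The main obstacle is bookkeeping in the first step: verifying $X_\sigma^{n}\Psi\in D(p)$ and the three-term formula, justifying the term-by-term application of $p$ to the exponential series via closedness, and checking at each estimate which of the four conditions $f,\bodisp^{1/2}f,kf,kf\bodisp^{-1/2}\in L^{2}(\R^d)$ is used. Once one observes that $[\phi(kF),i\pi(F)]$ is a scalar, everything else is routine.
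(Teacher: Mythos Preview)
Your proof is correct and follows essentially the same strategy as the paper's: expand the Weyl operator as a power series on finite-particle vectors, derive the three-term commutator identity using that $[\phi(kF),i\pi(F)]$ is scalar, and extend by density using $H_0^{1/2}$-boundedness. The paper works on the slightly smaller core $D(H_0)\cap\hilbert_0$ and writes the intermediate identity in the form $pe^{-i\pi(F)}\Psi=e^{-i\pi(F)}(p-\phi(kF)-\langle f,kf\rangle)\Psi$, with $\phi(kF)$ acting \emph{before} the Weyl operator, which makes the extension to $D(H_0^{1/2})$ immediate without invoking Corollary~\ref{cor:transformSquRootDGammaSandwich}; your version with $\phi(kF)$ on the outside is equivalent but needs that extra ingredient.
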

%

\begin{proof} 
Following the proof of Lemma~3.1(a) in \cite{GriesemerWuensch}, we define $\DD:= D(H_0)\cap\hilbert_0$, which 
is an operator core and hence a form core of $H_0$. For $\Psi\in\DD$ one 
shows that 
\begin{equation}
\label{eq:pUGen}
       p e^{-i\pi(F)}\Psi = e^{-i\pi(F)}(p-\phi(kF) - \sprod{f}{kf})\Psi
\end{equation}
by expanding $e^{-i\pi(F)}$ in its exponential series. Here we used
that $[\phi(kF),\pi(F)] = 2i \sprod{f}{kf} $, which commutes with $\pi(F)$.
Since $\DD$ is a form core of $H_0$ and since $(p-\phi(kF)-\sprod{f}{kf})$ is bounded 
with respect to $H_0^{1/2}$, by Lemma~\ref{cor:SpecialEstAnnihilOp}, 
Equation~\eqref{eq:pUGen} extends to all 
$\Psi\in D(H_0^{1/2})$ and we see that $e^{-i\pi(F)}D(H_0^{1/2}) \subset D(|p|)$. 
Since $D(H_0^{1/2}) = D(|p|)\cap D(d\Gamma(\bodisp)^{1/2})$ and since 
$D(d\Gamma(\bodisp)^{1/2})$ is left invariant by $e^{-i\pi(F)}$ because 
of the assumption $\bodisp^{1/2}f\in L^2(\mathbb{R}^d)$, see 
Lemma~\ref{lm:transformSquRootDGamma}, we conclude that 
$e^{-i\pi(F)}D(H_0^{1/2})\subset D(H_0^{1/2})$. Likewise, 
$e^{i\pi(F)}D(H_0^{1/2})\subset D(H_0^{1/2})$ by changing the sign of $f$ 
and the lemma is proved.
\end{proof}


The condition $\sprod{f}{kf}=0$ in the following lemma simplifies the identity \eqref{eq:pUGen}, established by Lemma~\ref{lm:transformPGen}, but it is otherwise inessential.

\begin{lemma} 
\label{lm:transformPSquGen}
Let $p:=-i\nabla_x$ and $F_x(k):=f(k)e^{-ikx}$ with 
$f\in L^2(\R^d)$ such that $\bodisp f,k^2f,f\bodisp^{-1/2},k^2f\bodisp^{-1/2}\in L^2(\mathbb{R}^d)$ 
and $\langle f,k f\rangle=0$. Then, $e^{i\pi(F)}D(H_0)=D(H_0)$ and 
\begin{align*}
   e^{i\pi(F)}p^2e^{-i\pi(F)}=\left(p-\phi(kF)\right)^2 \qquad \textrm{on } D(H_0).
\end{align*}
\end{lemma}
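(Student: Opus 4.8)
\emph{Proof strategy.} The plan is to apply the identity~\eqref{eq:pUGen} of Lemma~\ref{lm:transformPGen} twice. Because $\sprod{f}{kf}=0$, that identity reads $pe^{-i\pi(F)}\chi=e^{-i\pi(F)}(p-\phi(kF))\chi$ for every $\chi\in D(H_0^{1/2})$, with $p$ and $p-\phi(kF)$ acting componentwise as in Lemma~\ref{lm:transformPGen}. Hence, \emph{once we know that $p-\phi(kF)$ maps $D(H_0)$ into $D(H_0^{1/2})$}, we may argue as follows. Let $\Psi\in D(H_0)\subset D(H_0^{1/2})$ and put $\chi:=(p-\phi(kF))\Psi\in D(H_0^{1/2})$. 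A first application of~\eqref{eq:pUGen} gives $pe^{-i\pi(F)}\Psi=e^{-i\pi(F)}\chi$, and since $e^{-i\pi(F)}\chi\in e^{-i\pi(F)}D(H_0^{1/2})=D(H_0^{1/2})\subset D(\abs{p})$ by Lemma~\ref{lm:transformPGen}, a second application gives $pe^{-i\pi(F)}\chi=e^{-i\pi(F)}(p-\phi(kF))\chi$. Therefore $e^{-i\pi(F)}\Psi\in D(p^2)$ and $p^2e^{-i\pi(F)}\Psi=e^{-i\pi(F)}(p-\phi(kF))^2\Psi$. Combined with $e^{-i\pi(F)}D(d\Gamma(\bodisp))=D(d\Gamma(\bodisp))$, which follows from Lemma~\ref{lm:transformDGamma} applied fibrewise in $x$ (here the hypothesis $\bodisp f\in L^2(\R^d)$ is used), this shows $e^{-i\pi(F)}D(H_0)\subset D(H_0)$; replacing $f$ by $-f$ yields the asserted equality $e^{i\pi(F)}D(H_0)=D(H_0)$, and rearranging the displayed identity gives the transformation formula.

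It remains to show that $p-\phi(kF)$ maps $D(H_0)=D(p^2)\cap D(d\Gamma(\bodisp))$ into $D(H_0^{1/2})=D(\abs{p})\cap D(d\Gamma(\bodisp)^{1/2})$, and this is where the hypotheses enter. First, by Cauchy--Schwarz the assumed conditions $f,\bodisp f,k^2f,f\bodisp^{-1/2},k^2f\bodisp^{-1/2}\in L^2(\R^d)$ yield
\begin{align*}
   \norm{kf}^2&\leq\norm{k^2f}\,\norm{f}, \\
   \norm{kf\bodisp^{-1/2}}^2&\leq\norm{k^2f\bodisp^{-1/2}}\,\norm{f\bodisp^{-1/2}}, \\
   \norm{\bodisp^{1/2}kf}^2&\leq\norm{\bodisp f}\,\norm{k^2f},
\end{align*}
so $kf,kf\bodisp^{-1/2},\bodisp^{1/2}kf\in L^2$ as well (and, likewise, $\bodisp^{1/2}f\in L^2$). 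Let $\Psi\in D(H_0)$. Since $\abs{p}$ and $d\Gamma(\bodisp)^{1/2}$ strongly commute, $p\Psi\in D(\abs{p})\cap D(d\Gamma(\bodisp)^{1/2})$ with $\norm{d\Gamma(\bodisp)^{1/2}p\Psi}\leq\norm{(H_0+1)\Psi}$. For the field term $\phi(kF)=a(kF)+a^*(kF)$: applying Lemma~\ref{lm:ladderMapsOnDomOfSquSecQuant} fibrewise (the $L^2(\R^d_k)$-norms of $(kF)_x$ do not depend on $x$) and using $kf,kf\bodisp^{-1/2},\bodisp^{1/2}kf\in L^2$, we get $\phi(kF)\Psi\in D(d\Gamma(\bodisp)^{1/2})$; and from $\abs{p}\,a^{\#}(kF)=a^{\#}(kF)\abs{p}+a^{\#}([\abs{p},kF])$ we get $\phi(kF)\Psi\in D(\abs{p})$, because $a^{\#}(kF)\abs{p}\Psi$ is finite by Lemma~\ref{cor:SpecialEstAnnihilOp} (using $\abs{p}\Psi\in D(d\Gamma(\bodisp)^{1/2})$ and $kf,kf\bodisp^{-1/2}\in L^2$), while $a^{\#}([\abs{p},kF])\Psi$ is controlled by Lemma~\ref{lm:SpecEstGenAnnihilOpForBoundedOp} via the pointwise bound $\abs{\,\abs{p}-\abs{p+k}\,}\,\abs{kf(k)}\leq\abs{k}^2\abs{f(k)}$ together with $k^2f,k^2f\bodisp^{-1/2}\in L^2$. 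Hence $(p-\phi(kF))\Psi\in D(H_0^{1/2})$.

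The rest consists of the routine verifications quoted above; note that the hypothesis $\sprod{f}{kf}=0$ is used only to drop the additive scalar $\sprod{f}{kf}$ from~\eqref{eq:pUGen}. The main obstacle is the domain bookkeeping of the second paragraph: it is precisely there that the conditions $k^2f,k^2f\bodisp^{-1/2}\in L^2$ (needed to estimate the commutator $[\abs{p},kF]$, i.e. to give meaning to $\abs{p}\phi(kF)\Psi$) and $\bodisp f\in L^2$ (needed both for the invariance of $D(d\Gamma(\bodisp))$ and, through the interpolation above, for $\bodisp^{1/2}kf\in L^2$) are essential, whereas the remaining conditions $f,\bodisp^{1/2}f,kf,kf\bodisp^{-1/2}\in L^2$ are exactly what Lemma~\ref{lm:transformPGen} requires for~\eqref{eq:pUGen}.
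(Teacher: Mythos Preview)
Your proof is correct and follows essentially the same route as the paper: apply Lemma~\ref{lm:transformPGen} once, verify that $(p-\phi(kF))$ maps $D(H_0)$ into $D(H_0^{1/2})$, apply Lemma~\ref{lm:transformPGen} again, and invoke Lemma~\ref{lm:transformDGamma} for the $d\Gamma(\bodisp)$-domain. The only cosmetic difference is that for $\phi(kF)\Psi\in D(\abs{p})$ the paper commutes with the vector operator $p$ (obtaining $[p,\phi(kF)]=\phi(k^2F)$ directly and then Lemma~\ref{cor:SpecialEstAnnihilOp}), whereas you commute with $\abs{p}$ and bound $a^{\#}([\abs{p},kF])$ via Lemma~\ref{lm:SpecEstGenAnnihilOpForBoundedOp}; both use exactly the hypotheses $k^2f,\,k^2f\bodisp^{-1/2}\in L^2$.
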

\begin{proof} 
Let $\Psi\in D(H_0)\subset D(H_0^{1/2})$. Then, by Lemma~\ref{lm:transformPGen},  
$e^{-i\pi(F)}\Psi\in D(H_0^{1/2})$ and
\begin{align*}
   pe^{-i\pi(F)}\Psi=e^{-i\pi(F)}\left(p-\phi(kF)\right)\Psi.
\end{align*}
Since $\Psi\in D(H_0)$, it is clear that $p\Psi\in D(\abs{p})$. By 
Lemma~\ref{cor:SpecialEstAnnihilOp} and standard estimates, 
\begin{align*}
   \norm{p\phi(kF)} &\leq\norm{\phi(k^2F)\Psi}+\norm{\phi(kF)p\Psi} \\
                    &\leq 2\max\left\{\norm{k^2f},\norm{k^2f\bodisp^{-1/2}}\right\}\norm{(1+d\Gamma(\bodisp))^{1/2}\Psi} \\
                    &\phantom{==} +2\max\left\{\norm{kf},\norm{kf\bodisp^{-1/2}}\right\}\norm{(1+d\Gamma(\bodisp))^{1/2}p\Psi},
\end{align*}
where 
\begin{align}
\label{equ:youngInequalityOp}
   \norm{(1+d\Gamma(\bodisp))^{1/2}p\Psi}\leq\norm{(1+d\Gamma(\bodisp))\Psi}+\norm{p^2\Psi},
\end{align}
which shows that $\phi(kF)\Psi\in D(\abs{p})$ as well, by the assumptions on $f$. 
Thus, $(p-\phi(kF))\Psi\in D(\abs{p})$.

From~\eqref{equ:youngInequalityOp} it also follows that $p\Psi\in D(d\Gamma(\bodisp)^{1/2})$ 
and from Lemma~\ref{lm:ladderMapsOnDomOfSquSecQuant} that
\begin{align*}
   \norm{d\Gamma(\bodisp)^{1/2}\phi(kF)\Psi}\leq 4\max\left\{\norm{\bodisp^{1/2}kf},\norm{kf},\norm{kf\bodisp^{-1/2}}\right\}
                                                \left(\norm{d\Gamma(\bodisp)\Psi}+\norm{\Psi}\right)
\end{align*}
such that $\phi(kF)\Psi\in D(d\Gamma(\bodisp)^{1/2})$ as well, by the assumptions on $f$. 
Therefore, $(p-\phi(kF))\Psi\in D(\abs{p})\cap D(d\Gamma(\bodisp)^{1/2})=D(H_0^{1/2})$, 
such that, using Lemma~\ref{lm:transformPGen} again, 
$e^{-i\pi(F)}(p-\phi(kF))\Psi\in D(H_0^{1/2})$ and we can apply $p$ on this state again. 
This leads to
\begin{align*}
   p^2e^{-i\pi(F)} \Psi = e^{-i\pi(F)}(p-\phi(kF))^2\Psi 
\end{align*}
and means also that $e^{-i\pi(F)} \Psi\in D(p^2)$. From Lemma~\ref{lm:transformDGamma}, we 
already know that $e^{-i\pi(F)} D(H_0)\subset D(d\Gamma(\bodisp))$ such that we conclude 
that $e^{-i\pi(F)} D(H_0)\subset D(H_0)$. Likewise, 
$e^{i\pi(F)}D(H_0)\subset D(H_0)$ by changing the sign of $f$ and the lemma is proved.
\end{proof}




\section*{Acknowledgement}
The authors thank Jacob Schach M{\o}ller for many discussions on this and related projects in the course of the past years. 
A.W.~also had a useful discussion with Stefan Teufel that inspired our proof of Theorem~\ref{thm:domainIntersec3D4D}, and he had many discussions with Martin K{\"o}nenberg on technical issues. The work of Andreas W{\"u}nsch was supported by the \emph{Deutsche Forschungsgemeinschaft (DFG)} through the Research Training Group 1838: \emph{Spectral Theory and Dynamics of Quantum Systems}.



\end{document}